\theoremstyle{definition}
\newtheorem*{prop}{Property}
\theoremstyle{definition}
\newtheorem{definition}{Definition}[section]
\newtheorem{theorem}{Theorem}[section]
\newtheorem{property}{Property}[section]
\newtheorem{exercise}{Exercise}[section]
\newcommand{\Hom}{\mathop{}\mathopen{}{\mathrm{Hom}}}
\newcommand{\beval}[1]{\left[\kern-0.15em\left[ #1 \right]\kern-0.15em\right]}
\newcommand{\eval}[1]{[\kern-0.15em[ #1 ]\kern-0.15em]}
\DeclareMathOperator{\Ima}{Im}
\DeclareMathOperator{\U1}{\mathrm{U}\!\left(1\right)}
\DeclareMathOperator{\SU2}{\mathrm{SU}\!\left(2\right)}
\DeclareMathOperator{\DBc}{\overset{\lower0.5em\hbox{$\smash{\scriptscriptstyle\frown}$}}{ \star }}
\DeclareMathOperator{\Wc}{\overset{\lower0.5em\hbox{$\smash{\scriptscriptstyle\frown}$}}{ \wedge }}
\begin{document}
	
	\begin{frontmatter}
		
		%% Title, authors and addresses
		
		%% use the tnoteref command within \title for footnotes;
		%% use the tnotetext command for theassociated footnote;
		%% use the fnref command within \author or \address for footnotes;
		%% use the fntext command for theassociated footnote;
		%% use the corref command within \author for corresponding author footnotes;
		%% use the cortext command for theassociated footnote;
		%% use the ead command for the email address,
		%% and the form \ead[url] for the home page:
		%% \title{Title\tnoteref{label1}}
		%% \tnotetext[label1]{}
		%% \author{Name\corref{cor1}\fnref{label2}}
		%% \ead{email address}
		%% \ead[url]{home page}
		%% \fntext[label2]{}
		%% \cortext[cor1]{}
		%% \fntext[label3]{}
		
		\title{An extension of the $\mathrm{U}\!\left(1\right)$ BF theory,\\ Turaev-Viro invariant and Drinfeld center construction. \\ Part I: Quantum fields, quantum currents and Pontryagin duality}
		
		\author[add1]{Emil H{\o}ssjer}
		\ead{hossjer@imada.sdu.dk}
		\author[add2]{Philippe Mathieu}
		\ead{philippe.mathieu@math.uzh.ch}
		\author[add3]{Frank Thuillier}
		\ead{frank.thuillier@lapth.cnrs.fr}
		
		% \cortext[cor1]{Please address correspondence to Author2 or Author3}
		\address[add1]{Department of Mathematics and Computer Science (IMADA), University of Southern Denmark, Campusvej 55, DK-5230 Odense M, Denmark}
		\address[add2]{Institut f\"ur Mathematik, Universit\"at Z\"urich, Winterthurerstrasse 190, CH-8057 Z\"urich}
		\address[add3]{Université Grenoble Alpes, USMB, CNRS, LAPTh, F-74000 Annecy, France}
		
		\begin{abstract}
			
			In this first of a series of articles dedicated to natural extensions of the $\U1$ BF theory, abelian Turaev-Viro (TV) construction and corresponding Drinfeld center construction for any closed oriented smooth manifolds, we present the mathematical background that will be used.
			
		\end{abstract}
		
		%%Graphical abstract
		% \begin{graphicalabstract}
			% %\includegraphics{grabs}
			% \end{graphicalabstract}
		% 
		% %%Research highlights
		% \begin{highlights}
			% \item Research highlight 1
			% \item Research highlight 2
			% \end{highlights}
		
		%		\begin{keyword}
			%			\textcolor{red}{OurKeyWord1 \sep OurKeyWord2 \sep OurKeyWord3 \sep OurKeyWord4} 
			%% keywords here, in the form: keyword \sep keyword
			%% PACS codes here, in the form: \PACS code \sep code
			%% MSC codes here, in the form: \MSC code \sep code
			%% or \MSC[2008] code \sep code (2000 is the default)
			%		\end{keyword}
		
	\end{frontmatter}
	
	%% \linenumbers
	
	%% main text
	
	\newpage
	
	\tableofcontents
	
	\newpage
	
	\section{Introduction}
	
	It is now well known that Deligne-Beilinson (DB) cohomology is a particularly efficient tool to describe and study the $\U1$ BF and Chern-Simons (CS) theories on a $3$-dimensional closed oriented smooth manifold $M$ \cite{GT2014,MT1}. Among the various consequences induced by the use of this cohomology, let us highlight the quantization of the coupling constant and charges carried by the loops defining the observables, as well as the non-perturbative determination of the partition function and of the expectation values of these observables. Within this framework, the fields are actually classes of $\U1$ gauge fields, the classical actions $k \int A \wedge dA$ and $k \int B \wedge dA$  being particular local contribution to the ``quantum" CS and BF actions. In each of these theories, the Lagrangian turns out to be a DB product of fields whose integral over $M$ defines the corresponding action as an $\mathbb{R}/\mathbb{Z}$-valued functional. At the end, it appears that the CS, resp. BF, partition function coincides, up to some reciprocity formula, with a Reshetikhin-Turaev (RT), resp. Turaev-Viro (TV), abelian invariant of $M$. Moreover, a discrete BF theory can be extracted from the TV construction which further underlines the relation between the abelian TV construction and the $\U1$ BF theory in $3$ dimensions \cite{MT2}. Eventually, it is possible to check that the Turaev-Virelizier theorem \cite{TV13} applies in this abelian context so that the TV invariant can also be obtained from an RT construction by using the Drinfeld center of some cyclic group \cite{MT3}. 
	
	In this series of articles, we want to present a natural extension of the above considerations for any closed oriented smooth manifold $M$. However, although Deligne-Beilinson cohomology group of $M$ is still the cornerstone of our extended $\U1$ BF theory, we will favor a more physical terminology by referring to DB classes as ``quantum fields". As in the $3$-dimensional case, the BF observables will be holonomies of quantum fields along cycles of $M$ so that cycles are particular examples of elements of the so-called Pontryagin dual of the group of quantum fields. Still in the $3$-dimensional case, it was  more or less explicitly claimed that the Pontryagin dual of the set of quantum fields can be identified with the set of quantum currents, i.e., collections whose components are local de Rham currrents instead of local forms, and thus that a $1$-cycle of $M$ canonically defines a quantum current. However, integration along a $1$-cycle is performed on quantum fields whereas, as we will see, a quantum current is naturally evaluated on a so-called ``dual quantum field". In fact, the canonical quantum current $\boldsymbol{Z}_{[p]}$ associated with a $p$-cycle $z$ of $M$ will be defined with the help of an extended DB product according to
	\begin{equation}
		\label{currentofacycle}
		\oint_{z} \boldsymbol{A}^{[p]} = (\boldsymbol{Z}_{[p]} \star \boldsymbol{A}^{[p]}) \eval{\boldsymbol{1}} \, ,
	\end{equation}
	whereas the dual quantum current $\boldsymbol{Z}_{\{p\}}$ associated with $z$ is simply defined by
	\begin{equation}
		\boldsymbol{Z}_{\{p\}} \eval{\boldsymbol{A}^{[p]}} = \oint_{z} \boldsymbol{A}^{[p]} \, .
	\end{equation}
	The main purpose of this first article is to give a precise meaning to all the above expressions, then to show how $\boldsymbol{Z}_{[p]}$ and $\boldsymbol{Z}_{\{p\}}$ are related and finally to extend the DB product to a pairing between quantum currents which can be seen as singular quantum fields. Accordingly, the $p$-cycles generating the observables of the generalized $\U1$ BF theory will be treated as quantum currents so that the shift procedure used to compute expectation values of observables will be mathematically consistent. Let us note that although it was already presented in \cite{BGST05}, we propose here a more rigorous and more detailed proof of equality \eqref{currentofacycle}. In particular, we will introduce various mathematical objects which all play a precise role in establishing \eqref{currentofacycle}.
	
	Beside this introduction, a set of mathematical reminders and a conclusion, this article is made of three main sections. 
	
	In Section \ref{section_quantum_fields} we introduce the notion of gauge fields, gauge field transformations and quantum fields. The exact sequences in which the set of quantum fields sit is then exhibited. Eventually, integration of quantum currents along cycles will be introduced as an $\mathbb{R}/\mathbb{Z}$-valued functional. Except for the terminology, there is nothing new in this section, so that the reader familiar with Deligne-Beilinson cohomology and its description in the \v{C}ech-de Rham complex can skip it.
	
	In Section \ref{section_quantum_currents}, we extend the construction of quantum fields to obtain quantum currents. These objects are to quantum fields what de Rham currents are to forms. We will show that the set of quantum currents also sits in some natural exact sequences, and an evaluation formula will be provided. The objects on which a quantum current is generically evaluated are dual quantum fields and the evaluation is an $\mathbb{R}/\mathbb{Z}$-valued functional. Then, we show how to naturally extend the DB product into a pairing between quantum $p$-fields and quantum $p$-currents. This extension relies on a pairing between quantum fields and dual quantum fields in a way that is very similar to how the cap and the cup products are related to each other.
	
	In Section \ref{section_pontryagin_duality}, we show how Pontryagin duality and quantum currents are related to each other. This is achieved by using dual quantum currents. Just like a quantum current can be represented by a gauge current, a dual quantum current can be represented by a dual gauge current which is a collection of local de Rham currents. However, the components of a dual gauge current must have compact supports and fulfill a descent in \v{C}ech homology whereas the components of a gauge current have no support requirement and fulfill a descent in \v{C}ech cohomology. Then, we show that dual gauge currents provide representatives of the elements of the Pontryagin dual of the set of quantum fields and explain how the use of dual gauge currents yields equality \eqref{currentofacycle}. Finally, we propose an extension of the DB product to a product between quantum $q$-currents and quantum $p$-currents, with $p+q+1=n$. While this product is in general ill-defined, just like the exterior product of two de Rham currents is, sometimes, as with de Rham currents, this extended DB product is well-defined.
	
	Let us summarize our main results into the following property:
	
	\begin{prop}
		Let $M$ be a connected closed oriented smooth manifold of dimension $n$. We denote by $H_D^p(M)$ and $H^D_p(M)$ the $\mathbb{Z}$-modules of quantum fields and quantum currents of $M$, respectively.
		\begin{enumerate}[1)]
			\item On the one hand, quantum $p$-currents are evaluated on dual quantum $p$-fields, the $\mathbb{Z}$-module of which is denoted by $D_H^p(M)$. On the other hand, dual quantum $p$-currents are evaluated on quantum $p$-fields so that $H_D^p(M)^\star$, the Pontryagin dual of $H_D^p(M)$, can be canonically identified with $D_H^p(M)$. The representatives of the elements of $D_H^p(M)$ hence provide representatives of the elements of $H_D^p(M)^\star$. Then, we have the following sequence of isomorphisms
			\begin{equation}
				H^D_p(M) \xrightarrow{\; \; \mu \; \;} D_p^H(M) \xrightarrow{eval} H_D^p(M)^\star \, ,
			\end{equation}
			where $\mu$ refers to some partition of unity subordinate to a good cover of $M$ and $eval$ refers to the evaluation of dual quantum $p$-currents on quantum $p$-fields. The composition of these two isomorphisms induces a canonical isomorphism between $H^D_p(M)$ and $H_D^p(M)^\star$.
			
			\item If $z$ is a $p$-cycle of $M$, then there exists a unique dual quantum $p$-current $\boldsymbol{Z}_{\{p\}}$ and a unique quantum $p$-current $\boldsymbol{Z}_{[p]}$ such that
			\begin{equation}
				\boldsymbol{Z}_{\{p\}} \eval{\boldsymbol{A}^{[p]}} = \oint_{z} \boldsymbol{A}^{[p]} = \oint_M \boldsymbol{Z}_{[p]} \star \boldsymbol{A}^{[p]} \, , \nonumber
			\end{equation}
			for any quantum $p$-field $\boldsymbol{A}^{[p]}$. The above equalities are written in $\mathbb{R}/\mathbb{Z}$.
			
			\item In some cases, it is possible to give a meaning to the DB product of a quantum $p$-current and a quantum $q$-current. For instance, when the quantum currents derive from quantum fields their DB product is nothing but the usual DB product.
		\end{enumerate}
	\end{prop}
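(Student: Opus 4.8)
The plan is to assemble the three parts from the independent threads of Sections~\ref{section_quantum_fields}--\ref{section_pontryagin_duality}: part~1 from a diagram chase relating the structural exact sequences that house $H_D^p(M)$, $H^D_p(M)$, $D_H^p(M)$ and $D_p^H(M)$; part~2 from an explicit gauge-current representative of a cycle together with a Stokes computation on the nerve of a good cover; and part~3 from a transversality criterion for gauge currents. Throughout I would fix a good cover $\mathcal{U}$ of $M$ with a subordinate partition of unity, work in the \v{C}ech--de Rham bicomplex with and without support conditions, and compute DB classes as hypercohomology of the relevant double complex.

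For part~1, I would first recall the two structural short exact sequences of each module (in the conventions of Section~\ref{section_quantum_fields}): for the quantum fields, $0\to\Omega^{p-1}(M)/\Omega^{p-1}_{\mathbb{Z}}(M)\to H_D^p(M)\to H^{p}(M,\mathbb{Z})\to 0$ and $0\to H^{p-1}(M,\mathbb{R}/\mathbb{Z})\to H_D^p(M)\to\Omega^{p}_{\mathbb{Z}}(M)\to 0$, together with their current counterparts obtained by replacing smooth forms with de Rham currents and, on the dual side, imposing compact supports and a \v{C}ech-homological descent. The map $\mu$ turns the \v{C}ech-cohomological descent of a gauge current into the \v{C}ech-homological descent of a dual gauge current by multiplying its local components by the partition of unity; I would prove $\mu$ is an isomorphism by the five lemma, the flanking vertical maps being the classical partition-of-unity isomorphism between \v{C}ech--de Rham with and without supports. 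Next, $eval\colon D_p^H(M)\to H_D^p(M)^{\star}$ is a morphism from the exact sequence of $D_p^H(M)$ into the Pontryagin dual of the exact sequence of $H_D^p(M)$; on the graded pieces it restricts to (i) the topological duality between de Rham currents and smooth forms modulo periods, and (ii) Poincar\'e duality combined with $H^{k}(M,\mathbb{Z})^{\star}\cong H^{n-k}(M,\mathbb{R}/\mathbb{Z})$, both isomorphisms, so a second five lemma applies. Canonicity of the composite $H^D_p(M)\cong H_D^p(M)^{\star}$ then follows because any two subordinate partitions of unity are homotopic, so the residual ambiguity in $\mu$ is a \v{C}ech boundary that vanishes on cohomology; this argument also yields the stated identification of $H_D^p(M)^{\star}$ with $D_H^p(M)$ and the description of its elements by (dual) gauge currents.

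For part~2, given a $p$-cycle $z$ I would first produce a gauge current representing it: subdivide $z$ so that it is adapted to $\mathcal{U}$, take on each open set the de Rham current of integration over the corresponding piece of $z$, and close this family up in the \v{C}ech--de Rham bicomplex to a cocycle whose DB class is $\boldsymbol{Z}_{[p]}$; then set $\boldsymbol{Z}_{\{p\}}=\mu(\boldsymbol{Z}_{[p]})$, equivalently the dual gauge current carrying the compactly supported $\delta$-type components. The first equality $\boldsymbol{Z}_{\{p\}}\eval{\boldsymbol{A}^{[p]}}=\oint_{z}\boldsymbol{A}^{[p]}$ is then essentially the definition of the evaluation, while $\oint_{z}\boldsymbol{A}^{[p]}=\oint_M\boldsymbol{Z}_{[p]}\star\boldsymbol{A}^{[p]}$ is obtained by unfolding the extended DB product on the nerve and integrating by parts: the \v{C}ech coboundary contributions telescope and Stokes on the simplices of $z$ collapses the whole expression to the holonomy $\oint_{z}\boldsymbol{A}^{[p]}$. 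This is the detailed version of the computation sketched in \cite{BGST05}. Uniqueness of $\boldsymbol{Z}_{[p]}$ and $\boldsymbol{Z}_{\{p\}}$ is then immediate from part~1, since the extended DB pairing $H^D_p(M)\times H_D^p(M)\to\mathbb{R}/\mathbb{Z}$ and the evaluation pairing $D_p^H(M)\times H_D^p(M)\to\mathbb{R}/\mathbb{Z}$ are identified by $\mu$ and are non-degenerate.

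For part~3, I would define the DB product of a quantum $q$-current and a quantum $p$-current with $p+q+1=n$ by choosing gauge-current representatives and forming the cup-type product in the \v{C}ech--de Rham bicomplex exactly as in the field-field case, now allowing currents in both entries; this is well defined whenever the representatives can be chosen with transverse singular supports (equivalently, with suitably disjoint wave-front sets), and in particular when at least one factor derives from a quantum field, since then that factor is represented by honest smooth forms and the product reduces term by term to the ordinary DB product. I expect the main obstacle to be the isomorphism statement in part~1, and specifically the proof that $eval$ is an isomorphism: one must topologise $H_D^p(M)$ compatibly with the $C^{\infty}$ topology on forms so that the correct notion of dual is the topological one, check that the Pontryagin-dualised sequence remains exact (no $\Ext^{1}$ obstruction, using that $H^{p}(M,\mathbb{Z})$ is finitely generated), and verify that $\mu$ intertwines the \v{C}ech-cohomological and \v{C}ech-homological descents on the nose. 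Once these points are established, the remainder of the proposition is bookkeeping.
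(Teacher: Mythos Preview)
Your proposal is correct and follows essentially the same route as the paper: part~1 via five-lemma comparisons of the structural exact sequences (the paper does this twice, once for $H^D_p\cong H_D^p{}^\star$ using \eqref{currentshort1} against \eqref{sequencePont1} with the key input \eqref{iso1}, and once for $D_p^H\cong H_D^p{}^\star$ using \eqref{Bob} against \eqref{sequencePont1}); part~2 via an explicit gauge-current representative of the cycle and the non-degeneracy from part~1; and part~3 by declaring the product defined whenever representatives can be chosen so that the local current-times-current products make sense.

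Two points to correct or sharpen. First, your exact sequences for $H_D^p(M)$ are off by one relative to the paper's conventions: it is $\Omega^p/\Omega^p_{\mathbb{Z}}\to H_D^p\to H^{p+1}$, not $\Omega^{p-1}/\Omega^{p-1}_{\mathbb{Z}}\to H_D^p\to H^{p}$; this is harmless structurally but will derail any concrete computation. Second, the map $\mu$ is not literally ``multiply the local components by the partition of unity'': for a gauge $p$-current $B_{[p]}$ only the top component of the associated dual gauge current involves the $B{^k}{_{*}}$'s, while the remaining components are cap products of the \v{C}ech cocycle $b^{(q+1,-1)}$ with the higher forms $\mu{_k}{^k}$ built from the partition. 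The paper organises this through an auxiliary DB-cap product $\DBc$ (Section~\ref{section_quantum_currents}) satisfying $(B_{[p]}\star A^{[p]})[\mu^{\{-1\}}]=B_{[p]}[A^{[p]}\DBc\mu^{\{-1\}}]$, which is exactly the mechanism that makes your ``telescope plus Stokes'' step in part~2 go through cleanly; you should expect to rediscover this object when you write the computation out.
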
	
	
	All along this article, $M$ denotes a $n$-dimensional connected closed oriented smooth manifold. On another note, we will often deal with collections of local forms or currents. Conventionally, the first index of such a collection will refer to its \v{C}ech degree and the second to its de Rham degree. Moreover, a lower index (for \v{C}ech or de Rham) will refer to homology and an upper index to cohomology.
	
	\section{Basic mathematical reminders}
	\label{basics}
	
	As they will play an important role in this article, let us recall some basic facts regarding homology and cohomology:
	
	\begin{itemize}
		\item[$\bullet$] Since Poincaré lemma is at the heart of the various constructions we are going to study, let us recall that 
		\begin{equation}
			\left\{ \begin{gathered}
				H^p(\mathbb{R}^n) = 0 \quad \mbox{ for } p \in \{1 , \dots , n\} \, , \hfill \\
				H^0(\mathbb{R}^n) = \mathbb{Z} \, , \hfill \\ 
			\end{gathered}  \right.
		\end{equation}
		whereas for the cohomology with compact support
		\begin{equation}
			\left\{ \begin{gathered}
				H_c^p(\mathbb{R}^n) = 0 \quad \mbox{ for } p \in \{0 , \dots , n-1\} \, , \\
				H_c^n(\mathbb{R}^n) = \mathbb{Z}  \, . \hfill \\ 
			\end{gathered}  \right.
		\end{equation}
		Thus, any closed $p$-form of $\mathbb{R}^n$ is exact when $p \in \{1 , \dots , n\}$ and any closed $p$-form with compact support of $\mathbb{R}^n$ is exact (with respect to forms with compact support) when $p \in \{0 , \dots , n-1\}$ with the particular case that a constant function with compact support in $\mathbb{R}^n$ is necessarily zero.
		
		\item[$\bullet$] Let $H_p(M)$ be the $p$-th (singular, simplicial or \v{C}ech, depending on the context) homology group of $M$. As a finitely generated abelian group, $H_p(M)$ is the direct sum of a free sector and a torsion sector, respectively denoted by $F_p$ and $T_p$ and such that
		\begin{equation}
			\label{decomphomo}
			F_p \simeq \mathbb{Z}^{b_p} \quad\mbox{ and }\quad T_p \simeq\mathbb{Z}_{\zeta^1_p} \oplus \hdots \oplus \mathbb{Z}_{\zeta^{t_p}_p}\, ,
		\end{equation}
		with $b_p$ the $p$-th Betti number of $M$ and  $\zeta^i_p$ dividing $\zeta^{i+1}_p$ for $i \in \left\lbrace 0, 1, \hdots\right.$ $\left. \hdots, t_p - 1\right\rbrace$. The $p$-cohomology group of $M$ is denoted $H^p(M)$ with $H^p\left(M\right) = F^p \oplus T^p$, the real cohomology group then being $H^p\left(M,\mathbb{R}\right) := H^p\left(M\right) \otimes_{\mathbb{R}} \mathbb{R} = F^p \otimes_{\mathbb{R}} \mathbb{R}$.
		
		\item[$\bullet$] The Universal Coefficient Theorem \cite{BT82} implies the following isomorphisms
		\begin{enumerate}[(1)]
			\item $H^p \left(M\right) \simeq F_p \oplus T_{p-1}$,
			\item $H^p \left(M,\mathbb{R}/\mathbb{Z}\right) \simeq \left( \mathbb{R}/\mathbb{Z} \right)^{b_p} \oplus T_p$, \\
			\item $H^p \left(M,\mathbb{Z}_N\right) \simeq \left(\mathbb{Z}_N\right)^{b_p} \oplus \bigoplus_{i=1}^{t_p} \mathbb{Z}_{\mathrm{gcd}\left(\zeta^i_p,N\right)} \oplus \bigoplus_{j=1}^{t_{p-1}} \mathbb{Z}_{\mathrm{gcd}\left(\zeta^j_{p - 1},N\right)}$.
		\end{enumerate}
		
		\item[$\bullet$] Poincaré duality reads
		\begin{equation}
			\label{PoincarDual}
			H^{p+1}\left(M\right) \simeq H_{n-p-1}\left(M\right)\, ,
		\end{equation}
		and when combined with the first isomorphism coming from the Universal Coefficient Theorem recalled above, it implies
		\begin{equation}
			T_{n-p-1} \simeq T_p\, .
		\end{equation}
		
		\item[$\bullet$] Let $\Omega^p(M)$ denote the space of smooth $p$-forms over $M$. The subspace of closed $p$-forms is denoted by $\Omega^p_\circ\left(M\right)$ and the subspace of closed $p$-forms with integral periods by $\Omega^p_\mathbb{Z}(M)$. We have the sequence of inclusions:
		\begin{equation}
			\label{inclusions1}
			\Omega^p_\mathbb{Z}(M) \subset \Omega^p_\circ(M) \subset \Omega^p(M)\, .
		\end{equation} 
		Furthermore, with respect to the exterior derivative $d$, $\Omega^p_\mathbb{Z}(M)$ gives rise to a cohomology group which is isomorphic to $F^p$, whereas $\Omega^p_\circ(M)$ gives rise to the de Rham $p$-th cohomology group of $M$ which is isomorphic to $H^p(M,\mathbb{R}) = H^p(M) \otimes_{\mathbb{R}} \mathbb{R} = F^p \otimes_{\mathbb{R}} \mathbb{R}$ \cite{dR55,We52}.
		
		\item[$\bullet$] Let $\Omega_p(M)$ denote the topological dual of $\Omega^p(M)$, i.e., the space of continuous linear functionals on $\Omega^p(M)$, the elements of which are called de Rham $p$-currents \cite{dR55}. The integer $p$ is usually referred to as the dimension of the de Rham current and the integer $(n-p)$ as its degree. We endow $\Omega_p(M)$ with the differential $d^\dagger: \Omega_p(M) \to \Omega_{p-1}(M)$ which is dual to $d$ according to
		\begin{equation}
			\label{dualdeRhamoperators}
			(d^\dagger J) [\omega] = J [d \omega] \, ,
		\end{equation}
		for any $J \in \Omega_p(M)$ and any $\omega \in \Omega^{p-1}(M)$. The subspace of $d^\dagger$-closed de Rham $p$-currents is denoted by $\Omega_p^\circ(M)$, while the $\mathbb{Z}$-module of $d^\dagger$-closed de Rham $p$-currents which are $\mathbb{Z}$-valued on $\Omega^p_\mathbb{Z}(M)$ is denoted by $\Omega_p^\mathbb{Z}(M)$. From the point of view of the dimension, $d^\dagger$ is a homology operator whereas it is a cohomology operator from the point of view of the degree. However, for later convenience and because of relation \eqref{ddaggervsd} below, we refer to $d^\dagger$ as a cohomology operator. It is a well-known result that the cohomology of currents is the same as the cohomology of forms \cite{dR55} so that the cohomology of $(\Omega_p(M),d^\dagger)$ is isomorphic to $H_p(M,\mathbb{R}) \simeq H^{n-p}(M,\mathbb{R}) \simeq H^p(M,\mathbb{R})$. The cohomology associated with $\Omega_p^\mathbb{Z}(M)$ is isomorphic to $F_p \simeq F^{n-p} \simeq F^p$. Last but not least, if $U$ is an open subset of $M$, the space of de Rham $p$-currents of $U$ is the topological dual of $\Omega_c^p(U)$, the space of $p$-forms with compact support in $U$. Moreover, if $U$ is contractible, then the Poincaré lemma can be applied in $U$ so that any closed de Rham $p$-current of $U$ is exact when $0 < p < n$.

		\item[$\bullet$] Through integration over $p$-forms, a singular $p$-chain of $M$ defines an element of $\Omega_p(M)$. From this integration point of view, two $p$-chains which have the same integral over any $p$-form are said to be equivalent and the corresponding equivalence class is sometimes called a de Rham $p$-chain \cite{BGST05}. As in this article integration plays a central role, we will refer to a de Rham $p$-chain of $M$ simply as a $p$-chain, the abelian free group of (de Rham) $p$-chains of $M$ hence being denoted by $C_p\left(M\right)$ and the subgroup of (de Rham) $p$-cycles by $Z_p\left(M\right)$. Under the identification of a $p$-chain with the de Rham current it defines, we end with the canonical inclusions
		\begin{equation}
			\label{cyclesascurrents}
			C_p(M) \subset \Omega_p(M)\, ,  \quad Z_p(M) \subset \Omega_p^\circ(M) \, .
		\end{equation}
		
		\item[$\bullet$] By combining the exterior product with integration over $M$ according to
		\begin{equation}
			\label{convention}
			\omega [\alpha] = \oint_M \omega \wedge \alpha \, .
		\end{equation}
		we obtain the injection:
		\begin{equation}
			\label{forsmascurrents}
			\Omega^{n-p}\left(M\right) \overset{i^\diamond}{\hookrightarrow} \Omega_p\left(M\right) \, .
		\end{equation}
		which is often written as an inclusion. The elements of $\Ima i^\diamond$ are usually referred to as regular de Rham $p$-currents. In fact, a de Rham $p$-current can be represented by a $(n-p)$-form with distribution coefficients \cite{dR55}. With respect to this representation, we have 
		\begin{equation}
			\label{ddaggervsd}
			d^\dagger = (-1)^{n-p-1} d \, .
		\end{equation}
		
		Of course, $i^\diamond(\Omega^{n-p}_\circ(M)) \subset \Omega_p^\circ(M)$ and $i^\diamond(\Omega^{n-p}_\mathbb{Z}(M)) \subset \Omega_p^\mathbb{Z}(M)$. Moreover, $\Ima i^\diamond$ is dense in $\Omega_p\left(M\right)$ with respect to the weak topology \cite{dR55}. Finally, the exterior product of a de Rham $l$-current $J$ with a $k$-form $\omega$ is the $(l-k)$-current $J \wedge \omega$ defined by
		\begin{equation}
			\label{productcurrentform}
			(J \wedge \omega) [\alpha] = J [\omega \wedge \alpha] \, ,
		\end{equation}
		with of course $0 \leq l - k \leq n$.
	\end{itemize}
	
	\section{Quantum fields}
	\label{section_quantum_fields}
	
	Let us provide $M$ with a finite good cover $\mathcal{U}_M = \left\lbrace U_\alpha \right\rbrace_{\alpha \in I}$, i.e., a finite collection of open subsets $U_\alpha \subset M$ such that
	\begin{equation}
		\label{finitegoodcover}
		\left\{ \begin{gathered}
			\bigcup\limits_{\alpha \in I} {U_\alpha } = M \, , \hfill \\
			U_{\alpha_0 \hdots \alpha_k} = U_{\alpha_0} \cap \hdots \cap U_{\alpha_k} \mathop \simeq \limits^{\mathrm{diff.}} \mathbb{R}^n \mbox{ or } \varnothing \, . \hfill \\ 
		\end{gathered}  \right.
	\end{equation}
	A collection $a^{(p+1,-1)}$ of integers, one in each non-empty $U_{\alpha_0 \hdots \alpha_{p+1}}$, is called a \v{C}ech $(p+1)$-cochain of $\mathcal{U}_M$. If the $a_{\alpha_0 \hdots \alpha_{p+1}}^{(p+1,-1)}$ are just real numbers then $a^{(p+1,-1)}$ is called a real \v{C}ech $(p+1)$-cochain of $\mathcal{U}_M$. We set
	\begin{equation}
		\label{Cechcocycle}
		\left( \delta a^{(p+1,-1)} \right)_{\alpha_0 \hdots \alpha_{p+2}} = \sum\limits_{i=0}^{p+2} (-1)^i a_{\alpha_0 \hdots {\check{\alpha}}_i \hdots \alpha_{p+2}}^{(p+1,-1)} \, ,
	\end{equation}
	with ${\check{\alpha}}_i$ denoting the omission of $\alpha_i$. By construction, the operator $\delta$ is nilpotent ($\delta^2 = 0$) and hence a cohomology operator. Accordingly, if $\delta a^{(p+1,-1)} = 0$ we say that the (real) \v{C}ech cochain $a^{(p+1,-1)}$ is a (real) \v{C}ech cocycle. The (real) \v{C}ech cohomology of $M$ is then inferred. For $k\in \{0 , \dots , n\}$, let $A^{(k,p-k)}$ be a collection of local $(p-k)$-forms, one in each non-empty intersection $U_{\alpha_0 \hdots \alpha_k}$. We say that $A^{(k,p-k)}$ is a \v{C}ech-de Rham $(k,p-k)$-cochain of $\mathcal{U}_M$, and we naturally set
	\begin{equation}
		\left( \delta A^{(k,p-k)} \right)_{\alpha_0 \hdots \alpha_{k + 1}} = \sum\limits_{i=0}^{k+1} (-1)^i A_{\alpha_0 \hdots {\check{\alpha}}_i \hdots \alpha_{k+1}}^{(k,p-k)}  \, .
	\end{equation}
	As they are contractible, Poincaré lemma can be applied in any of the intersections $U_{\alpha_0 \hdots \alpha_k}$ which is the keystone of the \v{C}ech-de Rham procedure used to show that de Rham cohomology is the same as \v{C}ech real cohomology \cite{BT82,We52} and on which gauge fields fundamentally rely. 
	
	\subsection{From gauge fields to quantum fields}
	
	We can naively consider gauge fields as an attempt to classify the various \v{C}ech-de Rham descents of a given closed form with integral periods of $M$. This attempt starts with the following definition.
	
	\begin{definition} \label{gaugefield}
		Let $A^{[p]} = \left( A^{(0,p)} , A^{(1,p-1)} , \hdots , A^{(p,0)} , a^{(p+1,-1)} \right)$ be a $(p+2)$-tuple where $A^{(k,p-k)}$ is a \v{C}ech-de Rham $(k,p-k)$-cochain of $\mathcal{U}_M$ and $a^{\left( p+1,- 1 \right)}$ a \v{C}ech $(p+1)$-cochain of $\mathcal{U}_M$. We say that $A^{[p]}$ is a \textbf{gauge $p$-field} if
		\begin{equation}
			\label{descent}
			\left\{\begin{array}{l}
				\left( \delta A^{(k,p-k)} \right)_{\alpha_0 \hdots \alpha_{k + 1}}  = d A_{\alpha_0 \hdots \alpha_{k+1}}^{(k+1,p-k-1)}\, , \\
				{\left( \delta A^{(p,0)} \right)_{{\alpha _0} \hdots {\alpha _{p + 1}}}}  = d_{-1} a_{\alpha_0 \hdots \alpha_{p+1}}^{(p+1,-1)}\, ,
			\end{array} \right.
		\end{equation}
		with $k \in \{0 , \hdots , p\}$, $d_{-1}$ the canonical injection of numbers into (constant) functions. The above constraints are referred to as \textbf{the descent equations} of $A^{[p]}$. 
	\end{definition}
	
	The last equation of the descent equations of $A^{[p]}$ implies that $\delta a^{(p+1,-1)} = 0$ and hence that $a^{(p+1,-1)}$ is a \v{C}ech $(p+1)$-cocycle of $\mathcal{U}_M$. In fact, a gauge $p$-field is a Deligne-Beilinson $p$-cocycle of $\mathcal{U}_M$. However, the cohomological nature of the construction will be of little interest to us, hence our choice of a more ``physical'' terminology.
	
	There is an equivalence relation for gauge $p$-fields induced by the identification of the-called gauge transformations.
	
	\begin{definition} \label{gaugefieldtransfo}
		A \textbf{gauge $p$-field transformation} is a gauge $p$-field of the form
		\begin{equation}
			\label{gaugetransfo}
			\left( d G^{\left( 0,p-1 \right)} , \delta G^{\left( 0,p-1 \right)} + d G^{\left( 1,p-2 \right)}, \hdots ,\delta G^{\left( p-1,0 \right)} + d_{-1} g^{\left( p,-1 \right)} , \delta g^{\left( p,-1 \right)} \right)\, ,
		\end{equation}
		where $g^{\left( p,-1 \right)}$ is a \v{C}ech $p$-cochain of $\mathcal{U}_M$. Two gauge $p$-fields which differ by a gauge field transformation are said to be gauge equivalent, or simply equivalent. The equivalence class of a gauge $p$-field is called a \textbf{quantum $p$-field}, the quantum $p$-field of a gauge $p$-field $A^{[p]}$ being denoted by $\boldsymbol{A}^{[p]}$. The set of quantum $p$-fields of $M$, $H_D^p\left(M\right)$, is an additive abelian group or, equivalently, a $\mathbb{Z}$-module. 
	\end{definition}
	
	\subsection{Exact sequences associated with quantum fields}
	
	The group $H_D^p\left(M\right)$ has a very nice description that we recall now.
	
	\begin{theorem}
		The space $H_D^p\left(M\right)$ sits into the short exact sequence
		\begin{equation}
			\label{short2}
			0 \to \frac{\Omega^p(M)}{\Omega_\mathbb{Z}^p(M)} \xrightarrow{\bar{\delta}} H_D^p(M) \xrightarrow{cl} H^{p + 1}(M) \to 0 \, .
		\end{equation}
	\end{theorem}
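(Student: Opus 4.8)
The plan is to realise the two maps explicitly in the \v{C}ech--de Rham language already set up, and then to reduce the statement to three exactness claims, all of which will follow from the standard descent (``tic-tac-toe'') argument whose only non-formal ingredient is the Poincar\'e lemma in the contractible intersections $U_{\alpha_0\cdots\alpha_k}$ --- equivalently, the acyclicity of the fine sheaves $\Omega^l$ for the cover $\mathcal{U}_M$, the very input that makes \v{C}ech real cohomology agree with de Rham cohomology. Explicitly, the last descent equation of a gauge $p$-field $A^{[p]}=\left(A^{(0,p)},\dots,A^{(p,0)},a^{(p+1,-1)}\right)$ forces $\delta a^{(p+1,-1)}=0$, so I take $cl\left(\boldsymbol{A}^{[p]}\right)$ to be the class of $a^{(p+1,-1)}$ in $\check{H}^{p+1}\left(\mathcal{U}_M,\mathbb{Z}\right)\simeq H^{p+1}(M)$; since a gauge transformation changes $a^{(p+1,-1)}$ by the coboundary $\delta g^{(p,-1)}$, this is well defined on quantum fields, and it is plainly additive. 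I take $\bar{\delta}(\omega)$, for a global form $\omega\in\Omega^p(M)$, to be the quantum field of $\left(\omega,0,\dots,0\right)$; this tuple satisfies the descent equations precisely because $\delta\omega=0$ for a global form (note that no closedness of $\omega$ is needed), and $\bar{\delta}$ is additive. The theorem then amounts to: (i) $cl$ is surjective; (ii) $\ker cl=\Ima\bar{\delta}$; and (iii) the map $\omega\mapsto\bar\delta(\omega)$ from $\Omega^p(M)$ to $H_D^p(M)$ has kernel exactly $\Omega^p_\mathbb{Z}(M)$, so that it descends to an injection of $\Omega^p(M)/\Omega^p_\mathbb{Z}(M)$.

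For (i): given an integral \v{C}ech $(p+1)$-cocycle $a^{(p+1,-1)}$, acyclicity of the sheaf of smooth functions gives $A^{(p,0)}$ with $\delta A^{(p,0)}=d_{-1}a^{(p+1,-1)}$; then $\delta\bigl(dA^{(p,0)}\bigr)=d\,\delta A^{(p,0)}=0$, so acyclicity of $\Omega^1$ gives $A^{(p-1,1)}$ with $\delta A^{(p-1,1)}=dA^{(p,0)}$, and climbing the bicomplex in this way up to $A^{(0,p)}$ produces a gauge $p$-field whose $cl$ is $\left[a^{(p+1,-1)}\right]$. For the inclusion $\ker cl\subset\Ima\bar{\delta}$ in (ii): if $cl\left(\boldsymbol{A}^{[p]}\right)=0$ then $a^{(p+1,-1)}=\delta g^{(p,-1)}$ for an integral \v{C}ech $p$-cochain, and subtracting the gauge transformation $\left(0,\dots,0,d_{-1}g^{(p,-1)},\delta g^{(p,-1)}\right)$ we may assume $a^{(p+1,-1)}=0$, hence $\delta A^{(p,0)}=0$; a descending tic-tac-toe (again using acyclicity of the $\Omega^l$) removes $A^{(p,0)},A^{(p-1,1)},\dots,A^{(1,p-1)}$ by successive gauge transformations and leaves a representative $\left(\omega,0,\dots,0\right)$ with $\delta\omega=0$, i.e.\ a global $\omega\in\Omega^p(M)$, so $\boldsymbol{A}^{[p]}=\bar{\delta}(\omega)$. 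The reverse inclusion $\Ima\bar{\delta}\subset\ker cl$ is immediate since the terminal cochain of $\left(\omega,0,\dots,0\right)$ vanishes.

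It remains to prove (iii), which is the heart of the matter: $\left(\omega,0,\dots,0\right)$ is a gauge $p$-field transformation if and only if $\omega$ is closed with integral periods. If it is such a transformation, then reading off its slots gives $\omega=dG^{(0,p-1)}$ locally (so $d\omega=0$) and, after unwinding, an \emph{integral} terminal \v{C}ech cocycle $g^{(p,-1)}$ which, by the \v{C}ech--de Rham comparison, represents the de Rham class $[\omega]\in H^p(M,\mathbb{R})$ under the isomorphism $H^p(M,\mathbb{R})\simeq\check{H}^p\left(\mathcal{U}_M,\mathbb{R}\right)$; integrality of that class is exactly the assertion that $\omega$ has integral periods, so $\omega\in\Omega^p_\mathbb{Z}(M)$. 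Conversely, given $\omega\in\Omega^p_\mathbb{Z}(M)$, the Poincar\'e lemma applied successively in the intersections yields a primitive $G^{(0,p-1)}$ of $\omega$, then $G^{(1,p-2)}$ with $dG^{(1,p-2)}=-\delta G^{(0,p-1)}$, and so on down to a terminal \emph{real} \v{C}ech cocycle; because the periods of $\omega$ are integral this real cocycle is cohomologous to an integral one, and the witnessing real coboundary can be pushed back up the \v{C}ech--de Rham bicomplex as a correction to the $G^{(\bullet)}$'s, turning $\left(\omega,0,\dots,0\right)$ into a genuine gauge transformation. Combining (i)--(iii) yields the short exact sequence, with $\bar{\delta}$ now understood as the induced injection of $\Omega^p(M)/\Omega^p_\mathbb{Z}(M)$.

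The genuine difficulty is concentrated in the previous paragraph. The ascending and descending tic-tac-toe arguments and the well-definedness checks are routine once the Poincar\'e lemma (equivalently, fine-sheaf acyclicity for $\mathcal{U}_M$) is granted. By contrast, both directions of $\ker\bar{\delta}=\Omega^p_\mathbb{Z}(M)$ require careful bookkeeping: in one direction, identifying precisely how the real \v{C}ech cocycle produced at the bottom of the descent represents $[\omega]$, so that integrality of the periods translates into integrality of that cocycle; in the other, pushing a real coboundary that witnesses this integrality back up the bicomplex by modifying the chosen Poincar\'e primitives, without disturbing the top slot $\omega$ or re-introducing nonzero intermediate slots. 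For $p=0$ several of these Poincar\'e steps are vacuous, $\Omega^0_\mathbb{Z}(M)=\mathbb{Z}$ by connectedness of $M$, and the argument simplifies accordingly.
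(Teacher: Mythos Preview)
Your proof is correct and follows exactly the approach the paper sketches: the paper does not give a full proof but only defines $cl$ and $\bar{\delta}$ and then outlines, via successive gauge transformations, the direction $\Omega^p_\mathbb{Z}(M)\subset\ker\bar{\delta}$ (your (iii), converse part), referring to \cite{Br93,BGST05,CS73,HLZ} for the rest. Your proposal is thus a faithful and more complete unpacking of the same \v{C}ech--de Rham tic-tac-toe argument the paper has in mind, including the two directions of the kernel computation that the paper leaves to the references.
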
	
	
	\noindent As a detailed and direct proof of this theorem can be found for instance in \cite{Br93,BGST05} but also indirectly in \cite{CS73,HLZ}, let us just define the morphisms $cl$ and $\bar{\delta}$ appearing in the above sequence. By definition, the last component of a gauge field $A^{[p]}$ is a \v{C}ech cocycle $a^{(p+1,-1)}$ and that of a gauge field transformation a \v{C}ech coboundary. Hence, $cl$ associates with the quantum field defined by $\boldsymbol{A}^{[p]}$ the \v{C}ech cohomology class of $a^{(p+1,-1)}$. Next, for any $\chi^{(p)} \in \Omega^p(M)$ we consider the collection $\delta_{-1} \chi^{(p)}$ of its restrictions to the open sets $U_\alpha$ of $\mathcal{U}_M$
	\begin{equation}
		\left( \delta _{-1} \chi^{\left( p \right)} \right)_\alpha  = \left. \chi ^{\left( p \right)} \right|_{U_\alpha}\, .
	\end{equation}
	Then, we set
	\begin{equation}
		\label{classical}
		\chi^{[p]} = \left( \delta_{-1} \chi^{(p)} , 0 , \hdots ,0 \right)\, .
	\end{equation}
	This $(p+2)$-tuple is obviously a gauge $p$-field, the quantum field of which is denoted by $\boldsymbol{\chi}^{[p]}$. The association $\chi^{(p)} \mapsto \boldsymbol{\chi}^{[p]}$ gives rise to a map $\Omega^p(M) \to H_D^p(M)$. Now, if $\chi^{(p)} \in \Omega^p_\mathbb{Z}(M)$ then it is locally exact and hence $\delta_{-1} \chi^{(p)} = d \chi^{(0,p-1)}$ for some collection $\chi^{(0,p-1)}$ of local $(p-1)$-forms. By adding to $\chi^{[p]}$ the gauge field transformation $\left( - d \chi^{(0,p-1)} , - \delta \chi^{(0,p-1)} , 0 , \hdots , 0\right)$ we obtain the equivalent gauge field $\tilde{\chi}^{[p]} = \left(0 , \delta \chi^{(0,p-1)} , 0 , \hdots ,0 \right)$. Since $\tilde{\chi}^{[p]}$ fulfills the descent equations we deduce that $d \delta \chi^{(0,p-1)} = 0$ and hence, by Poincaré lemma, $\delta \chi^{(0,p-1)} = d \chi^{(1,p-2)}$ for some collection $\chi^{(1,p-2)}$ of local $(p-2)$-forms. By going on this way, we eventually get a gauge field $\breve{\chi}^{[p]} = \left(0 , \hdots , \delta \chi^{(p-1,0)} , 0 \right)$, which is equivalent to $\chi^{[p]}$. Here again, the descent equations imply that $d \delta \chi^{(p-1,0)} = 0$. However, as the elements of $\chi^{(p-1,0)}$ are local functions, this last constraint means that $\delta \chi^{(p-1,0)} = d_{-1} u^{(p-1,-1)}$ with $u^{(p-1,-1)}$ a collection of numbers, and we have  $\breve{\chi}^{[p]} = \left(0 , \hdots , d_{-1} u^{(p-1,-1)} , 0 \right)$. The last descent equation implies $\delta u^{(p-1,-1)} = 0$. However, since $\chi^{(p)} \in \Omega^p_\mathbb{Z}(M)$, what we did is nothing but a \v{C}ech-de Rham descent of $\chi^{(p)}$ and as de Rham cohomology of $\Omega^p_\mathbb{Z}(M)$ is isomorphic to the free sector of the \v{C}ech cohomology of $M$, we conclude that this descent can always be chosen in such a way that $u^{(p-1,-1)}$ is a(n integral) \v{C}ech cocycle. Thus, $\breve{\chi}^{[p]}$ can be written as $\left(0 , \hdots , d_{-1} u^{(p-1,-1)} , 0 = \delta u^{(p-1,-1)} \right)$ which shows that it is actually a gauge field transformation, and hence so does $\chi^{(p)}$. Consequently, all the elements of $\Omega^p_\mathbb{Z}(M)$ are in the kernel of the map $\Omega^p(M) \to H_D^p(M)$ so that this map can be consistently reduced to a map $\bar{\delta}: \Omega^p(M)/\Omega_\mathbb{Z}^p(M) \to H_D^p(M)$.
	
	A $p$-form of $M$ will also be called a \textbf{classical $p$-field}. If $\bar{\chi}^{(p)}$ denotes the class of $\chi^{(p)}$ in $\Omega^p(M)/\Omega_\mathbb{Z}^p(M)$, we have $\boldsymbol{\chi}^{[p]} = \bar{\delta}(\bar{\chi}^{(p)})$. The gauge $p$-field $\chi^{[p]}$ as defined in \eqref{classical} is then a representative of $\boldsymbol{\chi}^{[p]}$.
	
	Let us finally stress out that exact sequence \eqref{short2} implies that quantum fields rely on $M$, and not on the good cover $\mathcal{U}_M$ as gauge fields do.  
	
	\vspace{0.5cm}
	
	The curvature of a gauge $p$-field $A^{[p]}$ is the collection $F(A)_\alpha = d A_\alpha^{(0,p)}$. Together with the nilpotency property $d^2 = 0$, the descent equations of $A^{[p]}$ yield the following sequence of identities
	\begin{equation}
		F(A)_\beta - F(A)_\alpha = d A_\beta^{(0,p)} - d A_\alpha^{(0,p)} = d \left( d A_{\alpha \beta}^{(1,p-1)} \right) = 0\, ,
	\end{equation}
	which shows that the collection $F(A)_\alpha$ defines a $(p+1)$-form $F(A)$. Hence, a curvature is a classical $(p+1)$-field. This classical $(p+1)$-field is closed since its local representatives are exact. The descent equations of $A^{[p]}$ provide \v{C}ech-de Rham descent equations of the closed form $F(A)$ which end by the \v{C}ech cocycle $a^{(p+1,-1)}$, thus implying that $F(A)$ has integral periods \cite{We52}. So we have
	\begin{equation}
		F(A) \in \Omega^{p+1}_\mathbb{Z}(M)\, .
	\end{equation}
	It is obvious that the curvature of a gauge field transformation is zero. Equivalent gauge fields hence have the same curvature and we can talk about the \textbf{curvature of a quantum $p$-field}. A quantum field whose curvature is zero is said to be \textbf{flat}. From all this, a second exact sequence into which $H_D^p(M)$ sits can be exhibited \cite{HLZ}. 
	
	\begin{theorem}
		The space $H_D^p(M)$ sits into the short exact sequence
		\begin{equation}
			\label{short3}
			0 \to {H^p}(M,\mathbb{R}/\mathbb{Z}) \xrightarrow{i} H_D^p(M) \xrightarrow{\bar{d}} \Omega^{p + 1}_\mathbb{Z}(M) \to 0 \, .
		\end{equation}
	\end{theorem}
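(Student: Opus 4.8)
The plan is to exhibit the two morphisms and then verify exactness at the three interior nodes: $\bar d$ onto, $i$ injective, and $\ker\bar d=\Ima i$. The map $\bar d$ is the curvature map $\boldsymbol A^{[p]}\mapsto F(A)$, and the discussion preceding the statement already shows that it is well defined on quantum $p$-fields (gauge transformations have zero curvature) and that it takes values in $\Omega^{p+1}_\mathbb Z(M)$, so only surjectivity remains. Given $\omega\in\Omega^{p+1}_\mathbb Z(M)$, I would run the standard \v{C}ech--de Rham descent: $\omega$ is closed, so $\omega|_{U_\alpha}=dA^{(0,p)}_\alpha$ by the Poincaré lemma; then $d(\delta A^{(0,p)})=\delta\omega=0$ forces $\delta A^{(0,p)}=dA^{(1,p-1)}$, and iterating yields \v{C}ech--de Rham cochains $A^{(k,p-k)}$, $k=0,\dots,p$, with $\delta A^{(p,0)}=d_{-1}a^{(p+1,-1)}$ for a real \v{C}ech cocycle $a^{(p+1,-1)}$, which can be arranged to be integral precisely because $\omega$ has integral periods (this is the Weil-lemma content behind $\Omega^{p+1}_\mathbb Z(M)$ computing $F^{p+1}$). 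The resulting $(p+2)$-tuple is a gauge $p$-field whose curvature is $\omega$, so $\bar d$ is onto.

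For $i$, I would use the \v{C}ech model of $H^p(M,\mathbb R/\mathbb Z)$: a class is represented by a real \v{C}ech $p$-cochain $\rho^{(p,-1)}$ of $\mathcal U_M$ whose coboundary $\delta\rho^{(p,-1)}$ is an integral \v{C}ech cocycle, and two such cochains represent the same class iff their difference is a real \v{C}ech coboundary plus an integral \v{C}ech $p$-cochain. To such a $\rho^{(p,-1)}$ associate the quantum field of $(0,\dots,0,d_{-1}\rho^{(p,-1)},\delta\rho^{(p,-1)})$: this tuple satisfies the descent equations \eqref{descent}, so it is a gauge $p$-field, and the assignment descends to classes because a real coboundary $\delta\sigma^{(p-1,-1)}$ yields the gauge transformation \eqref{gaugetransfo} with $G^{(p-1,0)}=d_{-1}\sigma^{(p-1,-1)}$ and all other generators zero, while an integral $p$-cochain $g^{(p,-1)}$ yields the gauge transformation \eqref{gaugetransfo} with all $G$'s zero. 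By construction $\Ima i\subseteq\ker\bar d$.

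The core of the argument is the reverse inclusion $\ker\bar d\subseteq\Ima i$, together with injectivity of $i$, both obtained from a single descent. Given a flat quantum field, pick a representative $A^{[p]}$ with $dA^{(0,p)}=0$; for $p\ge 1$ each $A^{(0,p)}_\alpha$ is a closed $p$-form on a contractible open set, hence exact, $A^{(0,p)}=dB^{(0,p-1)}$, and subtracting the gauge transformation generated by $G^{(0,p-1)}=B^{(0,p-1)}$ gives an equivalent gauge field with vanishing first component. The descent equation then makes the next component $d$-closed, hence exact in each intersection, and it too is removed by a gauge transformation; climbing down the tower we reach an equivalent representative $(0,\dots,0,A^{(p,0)},a^{(p+1,-1)})$ with $dA^{(p,0)}=0$, so $A^{(p,0)}=d_{-1}\rho^{(p,-1)}$ and the last descent equation reads $\delta\rho^{(p,-1)}=a^{(p+1,-1)}$, which is integral; thus the flat field equals $i$ of the class of $\rho^{(p,-1)}$. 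Reading the same computation backwards gives injectivity: if $(0,\dots,0,d_{-1}\rho^{(p,-1)},\delta\rho^{(p,-1)})$ is a gauge transformation \eqref{gaugetransfo}, then $dG^{(0,p-1)}=0$, and peeling off the Poincaré lemma term by term shows $\rho^{(p,-1)}$ is a real \v{C}ech coboundary plus the integral cochain $g^{(p,-1)}$, so its class in $H^p(M,\mathbb R/\mathbb Z)$ is zero. (The case $p=0$ is handled directly, $H_D^0(M)$ being smooth $\mathbb R/\mathbb Z$-valued functions.)

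The step I expect to cost the most care is this last descent: one must check at every rung that "subtract a gauge transformation" is legitimate — that the partial data assembled so far is genuinely the truncation of a full gauge transformation \eqref{gaugetransfo} — that the Poincaré lemma is invoked only in degrees where it holds, treating the bottom of the tower (local functions, then constants, then \v{C}ech cochains) by hand, and, most delicately, that the residual ambiguity in $\rho^{(p,-1)}$ left by the descent is exactly "real \v{C}ech coboundary plus integral \v{C}ech $p$-cochain", no more and no less, which is what makes $i$ simultaneously well defined and injective. An alternative is a diagram chase on \eqref{short2}: the curvature map restricted to $\Omega^p(M)/\Omega^p_\mathbb Z(M)$ is $\bar\chi^{(p)}\mapsto d\chi^{(p)}$, with kernel $\Omega^p_\circ(M)/\Omega^p_\mathbb Z(M)\cong(\mathbb R/\mathbb Z)^{b_p}$, while the torsion $T^{p+1}$ of $H^{p+1}(M)\cong F_{p+1}\oplus T_p$ surviving through $cl$ supplies the rest of $\ker\bar d$; comparison with the Universal Coefficient isomorphism $H^p(M,\mathbb R/\mathbb Z)\cong(\mathbb R/\mathbb Z)^{b_p}\oplus T_p$ then matches $\ker\bar d$ with $H^p(M,\mathbb R/\mathbb Z)$ — but pinning down the relevant group extension is itself the subtle point in that route, so I would keep the explicit descent as the main proof.
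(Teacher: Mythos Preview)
Your proposal is correct and follows essentially the same approach as the paper: the paper defines $\bar d$ as the curvature map and $i$ via the gauge $p$-field $\left(0,\dots,0,d_{-1}r^{(p,-1)},\delta r^{(p,-1)}\right)$ associated with a real \v{C}ech representative, exactly as you do, and then leaves the verification of exactness to references and to Exercise~3.1. Your descent arguments for surjectivity of $\bar d$, for $\ker\bar d\subset\Ima i$, and for injectivity of $i$ are precisely the standard ones that fill in that exercise, and your treatment of the ambiguity in the \v{C}ech representative of an $H^p(M,\mathbb R/\mathbb Z)$ class (real coboundary plus integral cochain) is in fact more careful than the paper's brief check.
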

	
	\noindent In the above sequence, the morphism $\bar{d}$ associates with a quantum field its curvature, $\bar{d} \boldsymbol{A}^{[p]} = F(\boldsymbol{A}^{[p]})$. Now, let $\boldsymbol{r}$ be an element of ${H^p}(M,\mathbb{R}/\mathbb{Z})$ and let $r^{(p,-1)}$ be a real \v{C}ech cochain of $\mathcal{U}_M$ which represents $\boldsymbol{r}$. Then, $\delta r^{(p,-1)}$ is an integer-valued \v{C}ech $(p+1)$-cocycle of $\mathcal{U}_M$ and the collection $R^{[p]} = \left( 0 , \hdots , d_{-1} r^{(p,-1)} , \delta r^{(p,-1)} \right)$ is a gauge $p$-field. If $\tilde{r}^{(p,-1)}$ is another \v{C}ech representative of $\boldsymbol{r}$ then $\tilde{r}^{(p,-1)} = r^{(p,-1)} + m^{(p,-1)}$, for some \v{C}ech cochain $m^{(p,-1)}$, so that the gauge field $\tilde{R}^{[p]} = \left(0 , \hdots , d_{-1} \tilde{r}^{(p,-1)} , \delta \tilde{r}^{(p,-1)} \right)$ is gauge equivalent to $R^{[p]}$ since $\tilde{R}^{[p]} = R^{[p]} + \left(0 , \hdots , d_{-1} m^{(p,-1)} , \delta m^{(p,-1)} \right)$. Thus, by associating to $\boldsymbol{r}$ the quantum $p$-field $\boldsymbol{R}^{[p]}$, we define the map $i: {H^p}(M,\mathbb{R}/\mathbb{Z}) \to H_D^p(M)$ appearing in \eqref{short3}.
	
	Let us make two remarks concerning exact sequence \eqref{short3}. Firstly, from this sequence, we deduce that the cohomology group ${H^p}(M,\mathbb{R}/\mathbb{Z})$, or rather its image by $i$, can be identified with the set of flat quantum $p$-fields since $\Ima i = \ker \bar{d}$. Secondly, the group $\Omega^{p+1}_\mathbb{Z}(M)$ can be considered independently of $H_D^p(M)$, so that we can talk about curvatures without referring to quantum fields. To our knowledge, this is specific to the abelian case.
	
	Exact sequences \eqref{short2} and \eqref{short3} describe the $\mathbb{Z}$-module $H_D^p(M)$ into two different but equivalent ways, see Figure \ref{Space_Fields}. In the first description, $H_D^p(M)$ is mapped by $cl$ into the discrete ``base space" $H^{p+1}(M)$, while in the second description it is mapped by $\bar{d}$ into the multiply connected base space $\Omega^{p + 1}_\mathbb{Z}(M)$ which is endowed with the usual topology \cite{dR55}. The spaces $\Omega^p(M)/\Omega_\mathbb{Z}^p(M)$ and ${H^p}(M , \mathbb{R}/\mathbb{Z})$ are then groups acting on the corresponding ``fibers" of $H_D^p(M)$. Thus, if we choose an origin on a fiber of \eqref{short2}, resp. \eqref{short3}, then any other point of this fiber is reached by translating the chosen origin with the image by $\bar{\delta}$, resp. $i$, of an element of $\Omega^p(M)/\Omega_\mathbb{Z}^p(M)$, resp. ${H^p}( M,\mathbb{R}/\mathbb{Z})$. Thus, the group $\Omega^p(M)/\Omega_\mathbb{Z}^p(M)$ and ${H^p}( M,\mathbb{R}/\mathbb{Z})$, can be referred to as the group of \textbf{quantum $p$-field translations} of \eqref{short2} and \eqref{short3}, respectively, these two exact sequences being themselves referred to as ``affine fibrations" of $H_D^p(M)$.
	
	\begin{figure}
		\centering
		\begin{subfigure}[b]{0.8\textwidth}
			\centering
			\includegraphics[width=\textwidth]{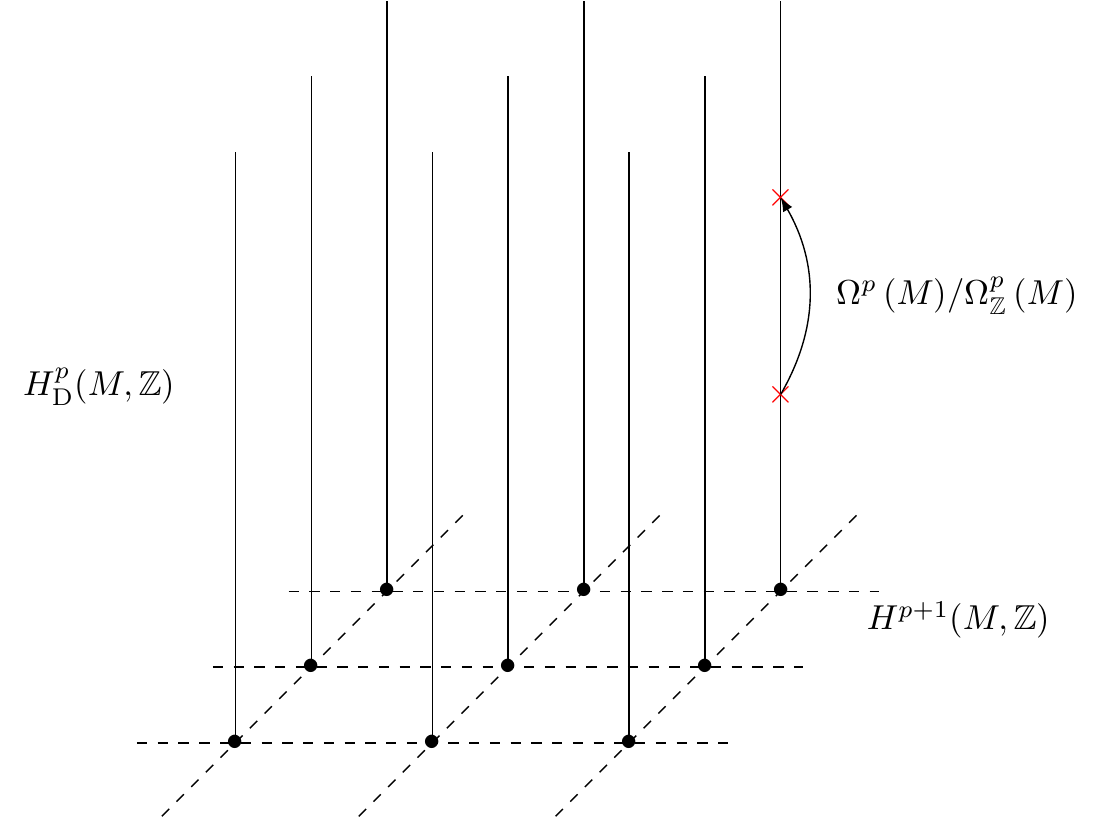}
			\caption{Exact sequence \eqref{short2}}
			%         \label{fig:y equals x}
		\end{subfigure}
		\\
		\vspace{1.cm}
		\begin{subfigure}[b]{1.0\textwidth}
			\centering
			\includegraphics[width=\textwidth]{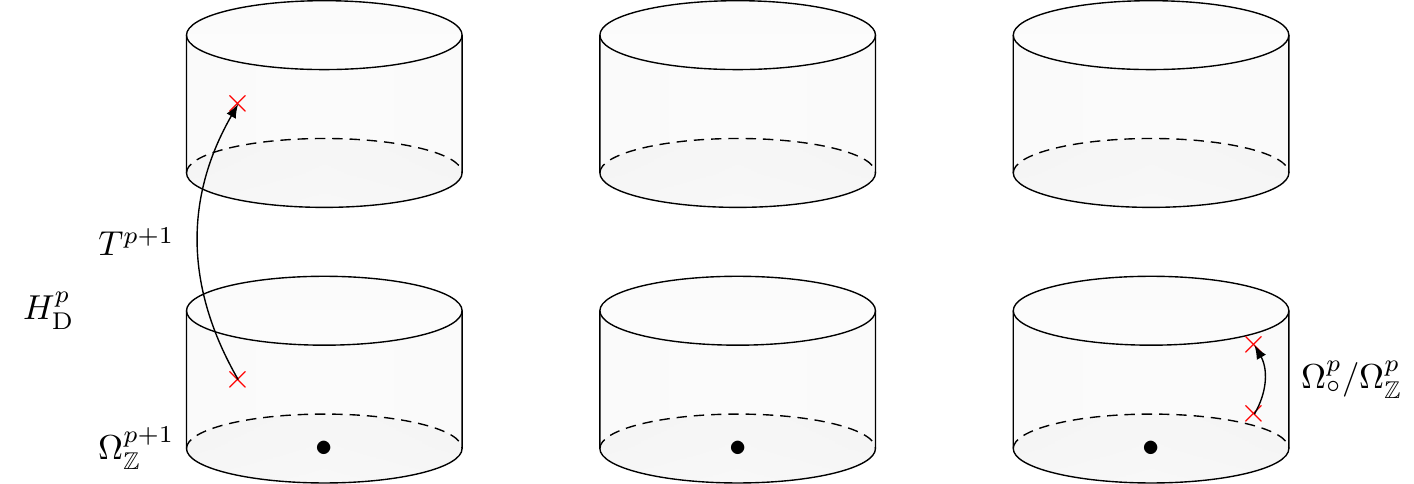}
			\caption{Exact sequence \eqref{short3}}
			%         \label{fig:three sin x}
		\end{subfigure}
		\caption{Representation of the exact sequences \eqref{short2} and \eqref{short3}}
		\label{Space_Fields}
	\end{figure}
	
	Let us compare \eqref{short2} and \eqref{short3}. On the one hand, since ${H^p}\left( M , \mathbb{R}/\mathbb{Z} \right) \simeq \left( \mathbb{R}/\mathbb{Z}\right)^{b_p} \oplus T_p$ and ${T_p} \simeq {T_{n - p - 1}} \simeq {T^{p + 1}}$ the torsion sector $T^{p + 1}$ is a part of the base space of $H_D^p(M)$ as described by \eqref{short2} and a part of the quantum field translation group of \eqref{short3}. On the other hand, the base space $\Omega^{p + 1}_\mathbb{Z}\left( M \right)$ has various connected components, each of which is parameterized by an element of $F^{p+1}$. Once an element of a connected component has been identified, any other element is obtained by adding to this first element an exact form. This implies that each connected component of $\Omega^{p + 1}_\mathbb{Z}\left( M \right)$ is isomorphic to $\Omega^p\left( M \right)/\Omega_\circ^p\left( M \right)$. So $\Omega^p\left( M \right)/\Omega_\mathbb{Z}^p\left( M \right)$ seems to be missing in \eqref{short3}, while it is $\left( \mathbb{R}/\mathbb{Z}\right)^{b_p}$ which seems to be missing in \eqref{short2}. This puzzling situation is solved by considering the following exact sequence
	\begin{equation}
		0 \to \Omega_\mathbb{Z}^p(M) \to \Omega_\circ^p(M) \to \frac{\Omega^p(M)}{\Omega_\mathbb{Z}^p(M)} \to \frac{\Omega^p(M)}{\Omega_\circ^p(M)} \to 0 \, ,
	\end{equation}
	which is a straightforward consequence of inclusions \eqref{inclusions1}. This sequence then reduces to
	\begin{equation}
		\label{short4}
		0 \to \frac{\Omega_\circ^p(M)}{\Omega_\mathbb{Z}^p(M)} \to \frac{\Omega^p(M)}{\Omega_\mathbb{Z}^p(M)} \to \frac{\Omega^p(M)}{\Omega_\circ^p(M)} \to 0 \, .
	\end{equation}
	Now, we see that in a more or less hidden way the same spaces are present in both exact sequences \eqref{short2} and \eqref{short3}. Let us point out that $\Omega_\circ^p\left( M \right)/\Omega_\mathbb{Z}^p\left( M \right) \simeq (\mathbb{R}/\mathbb{Z})^{b_p}$ which is a consequence of the isomorphisms mentioned after inclusions \eqref{inclusions1}. 
	
	Another remarkable point is that, since $H^0(M) \simeq \mathbb{Z}\simeq \Omega_\mathbb{Z}^0(M)$, exact sequences \eqref{short2} and \eqref{short3} extend to the case where $p = -1$. Thus, for consistency reason the following convention applies.
	
	\begin{definition}
		\begin{equation}
			H_D^{-1}(M) = \mathbb{Z} \, . 
		\end{equation}
	\end{definition}
	
	\noindent A gauge $0$-field is just a \v{C}ech $0$-cocycle, the cohomology class of which is the quantum field represented by this gauge field.
	
	\begin{exercise}
		Show that sequences \eqref{short2} and \eqref{short3} are exact and split.
	\end{exercise}
	
	\subsection{Integration of quantum fields}
	
	As they play a key role in defining the observables of the generalized $\mathrm{U}\!\left(1\right)$ BF theory, let us see how we can define integration of quantum fields along cycles of $M$. In the $3$-dimensional case, such an integral is deeply related to the so-called (Ehrenberg-Siday-)Aharonov-Bohm effect, from which stems charge quantization. Let us start by recalling that integration of a classical $p$-field, i.e., a $p$-form, along a $p$-cycles is well-defined \cite{BT82} and yields the pairing
	\begin{equation}
		\begin{aligned}
			\oint : \Omega ^p\left( M \right) \times {Z_p}\left( M \right) &  \to \mathbb{R} \\ 
			\left( \chi ,z \right) &  \mapsto \oint\limits_z \chi  \, . \\ 
		\end{aligned} 
	\end{equation}
	Furthermore, since a classical $p$-field $\chi^{(p)}$ gives rise to a gauge $p$-field $\chi^{[p]}$ defined by \eqref{classical}, it seems natural to set
	\begin{equation}
		\label{specialintcase}
		\oint_z \chi^{[p]} = \oint_z \chi^{(p)}\, .
	\end{equation}
	However, if $\rho^{(p)} \in \Omega_\mathbb{Z}^p(M)$ then on the one hand
	\begin{equation}
		\oint_z  \rho^{[p]} \in \mathbb{Z} \, ,
	\end{equation}
	and on the other hand $\rho^{[p]}$ is a gauge translation, i.e., $\bar{\rho}^{(p)} = 0$, so that $\boldsymbol{\rho}^{[p]} = \bar{\delta}(\bar{\rho}^{(p)}) = \boldsymbol{0} \in H_D^p(M)$. Consequently, if we want integration along $z$ to be well-defined on quantum fields, it  must be defined modulo integers, i.e., as an $\mathbb{R}/\mathbb{Z}$-valued linear functional on $H_D^p(M)$.
	
	Unfortunately, a gauge $p$-field does not necessarily derive from a classical field. This makes \eqref{specialintcase} improper for a generic gauge $p$-field. In fact, it is natural to assume that all the components of a gauge $p$-field may appear in the expression of its integral along a $p$-cycle, and because the components of a gauge $p$-field involve forms of various degrees, it can be expected that the $p$-cycle itself will have to give rise to a collection of local $p$-chains on which the components of the gauge $p$-field will be integrated. It turns out that such a decomposition of a $p$-cycle was originally introduced in order to prove the equivalence of singular and \v{C}ech homology \cite{We52,BT82}. Before explaining this, let us recall that we deal with de Rham $p$-chains, even locally. 
	
	\begin{definition} \label{suborcycledecompo}
		Let $z$ be a $p$-cycle of $M$. We set
		\begin{equation}
			U_{\alpha_0 \hdots \alpha_k}^{z} = \left\{\begin{gathered}
				U_{\alpha_0 \hdots \alpha_k} \hspace{0.25cm} \mbox{ when } \; U_{\alpha_0 \hdots \alpha_k} \cap z \neq  \varnothing\, , \hfill \\
				\varnothing  \hspace{1.2cm} \mbox{ otherwise.} \hfill \\ 
			\end{gathered}  \right.
		\end{equation}
		A \textbf{decomposition of $z$ subordinate to $\mathcal{U}_M$} is a $(p+2)$-tuple $z_{\{p\}} = \left(c_{(0,p)} , \hdots\right.$ $ \left.\hdots,c_{(p,0)} , c_{(p,-1)} \right)$ for which each $c^{\alpha_0 \hdots \alpha_k}_{(k,p-k)}$ is a $(p-k)$-chain has its support strictly contained in $U_{\alpha_0 \hdots \alpha_k}^z$ and such that
		\begin{equation}
			\label{decompcycle}
			\left\{ 
			\begin{gathered}
				\partial c_{(0,p)} = z \, , \hfill \\
				\partial c_{(k,p-k)} = b c_{(k-1,p-k+1)} \, , \hfill \\ 
				c_{(p,-1)} = b_0 c_{(p,0)} \, , \hfill \\
			\end{gathered} \right.
		\end{equation}
		for $k \in \{1, \hdots , p\}$.
	\end{definition}
	
	In descent \eqref{decompcycle}, $b$ and $\partial$, respectively denote the boundary operator on chains ($b^2=0$) and the \v{C}ech boundary operator ($\partial^2 = 0$), with $\partial$ defined on a collection $c_{(k,p-k)}$ according to
	\begin{equation}
		\label{defCechpartial}
		\left( \partial c_{(k,p-k)} \right)^{\alpha_0 \hdots \alpha_{k-1}} = \sum\limits_\alpha  c_{(k,p-k)}^{\alpha \alpha_0 \hdots \alpha_{k-1}} \, .
	\end{equation}
	Moreover, $\partial$ fulfills
	\begin{equation}
		\label{dualCechoperators}
		\delta a^{(p-1,-1)} [c_{(p,-1)}] = a^{(p-1,-1)} [\partial c_{(p,-1)}] \, ,
	\end{equation}
	for any \v{C}ech $p$-chain $c_{(p,-1)}$ and any \v{C}ech $(p-1)$-cochain $a^{(p-1,-1)}$, so that $\delta$ is the dual of $\partial$. 
	
	Intuitively, the first line in \eqref{decompcycle} describes a division of $z$ into ``pieces", the collection of which is denoted by $c_{(0,p)}$. The boundaries of the elements of this collection are themselves cut into pieces which yields a new collection $c_{(1,p-1)}$. This procedure is repeated until a collection $c_{(p,0)}$ of local $0$-chain is obtained. Then, each $0$-chain $c^{\alpha_0 \hdots \alpha_p}_{(p,0)}$ is decomposed along some chosen points in $U_{\alpha_0 \hdots \alpha_p}$. By construction, the coefficients of this decomposition are integers, and their sum is called the \textbf{degree} of $c_{(p,0)}$. The operator $b_0:C_0(M) \to \mathbb{Z}$ simply associates to a $0$-chain its degree. It is easy to check that the degree of a boundary is zero so that we have $b_0 b = 0$. Now, the last component of $z_{\{p\}}$ is the collection of integers $c_{(p,-1)} = b_0 c_{(p,0)}$, for which the sequence of identities $\partial c_{(p,-1)} = \partial b_0 c_{(p,0)} = b_0 \partial c_{(p,0)} = b_0 b c_{(p-1,1)} = 0$ implies that it is a \v{C}ech $p$-cycle of $\mathcal{U}_M$. A given $p$-cycle $z$ of $M$ admits an infinite number of different decompositions as above, any two of them differing by a $(p+2)$-tuple of the form
	\begin{equation}
		\left( \partial h_{(1,p)} ,  b h_{(1,p)} + \partial h_{(2,p-1)} , \hdots , b h_{(p,1)} + \partial h_{(p+1,0)} , b_0 \partial h_{(p+1,0)} \right) \, ,
	\end{equation}
	with $b_0 \partial h_{(p+1,0)} = \partial b_0 h_{(p+1,0)}$ a \v{C}ech boundary \cite{We52}.
	
	When seeing the perfect matching in all degrees of the components of a decomposition of a $p$-cycle and the components of a gauge $p$-field, the following definition seems natural.
	
	\begin{definition} \label{gaugefieldsintegration}
		Let $A^{[p]} = \left( A^{(0,p)} , A^{(1,p-1)} , \hdots , A^{(p,0)} , a^{(p+1,-1)} \right)$ be a gauge $p$-field and let $z_{\{p\}} = \left(c_{(0,p)} , \hdots , c_{(p,0)} , c_{(p,-1)} \right)$ be a decomposition of a $p$-cycle $z$. The \textbf{quantum integral} of the quantum field $\boldsymbol{A}^{[p]}$ along $z$ is defined as
		\begin{equation}
			\label{defintegral}
			\oint_z \boldsymbol{A}^{[p]} \stackrel{\mathbb{R}/\mathbb{Z}}{=} \sum\limits_{k = 0}^p (-1)^k \int\limits_{c_{(k,p-k)}} A^{(k,p-k)} \, ,
		\end{equation}
		with
		\begin{equation}
			\label{subintegrals}
			\int\limits_{c_{\left(k, p-k\right)}} A^{\left( k,p-k \right)} = \frac{1}{(k+1)!} \sum\limits_{\alpha_0, \hdots, \alpha_k} \, \int\limits_{c_{\left( k,p-k \right)}^{\alpha_0 \hdots \alpha _k}} A_{\alpha _0 \hdots \alpha _k}^{\left(k,p - k\right)}\, ,
		\end{equation}
		and where $\stackrel{\mathbb{R}/\mathbb{Z}}{=}$ means equality in $\mathbb{R}/\mathbb{Z}$, i.e., equality modulo integers. By construction, the integral of a quantum $p$-field along a $p$-cycle is $\mathbb{R}/\mathbb{Z}$-valued. The right-hand side of \eqref{defintegral} will be referred to as the integral of the gauge field $A^{[p]}$ along the decomposition $z_{\{p\}}$ of the $p$-cycle $z$ of $M$. It can also be referred to as the evaluation of $z_{\{p\}}$ on $A^{[p]}$. Relation \eqref{defintegral} then defining the \textbf{quantum evaluation} of $z$ on $\boldsymbol{A}^{[p]}$.
	\end{definition}

	To check the consistency of the above definition, it must be shown that the quantum integral of $\boldsymbol{A}^{[p]}$ along $z$ depends neither on the chosen decomposition of $z$ nor on the chosen gauge $p$-field which represents $\boldsymbol{A}^{[p]}$. Furthermore, when $\boldsymbol{A}^{[p]}$ is the quantum field defined by a classical $p$-field $\chi^{(p)}$, expression \eqref{defintegral} reduces to \eqref{specialintcase}.
	
	\begin{exercise}
		Show that: 
		\begin{enumerate}[1)]
			\item the evaluation of the decomposition of a boundary on any gauge field is an integer;
			\item the evaluation of any decomposition of a cycle on a gauge field transformation is an integer. 
		\end{enumerate}
	\end{exercise}
	
	As it is  $\mathbb{R}/\mathbb{Z}$-valued, the quantum integral of a quantum $p$-field along a $p$-cycle defines an element of $\U1$ according to
	\begin{equation}
		h(\boldsymbol{A}^{[p]},z) = e^{2 i \pi \oint_z \boldsymbol{A}^{[p]}}\, .
	\end{equation}
	This quantity is usually called a $\U1$ holonomy or  Wilson loop in the language of quantum field theory. In this article, we will refer to it as a \textbf{quantum observable} to stress out the fact that it is derived from an $\mathbb{R}/\mathbb{Z}$-valued integral. A nice exercise consists of recovering the definition of a gauge $p$-field from its holonomy.
	
	\subsection{DB product}
	
	Integration defined as above yields an $\mathbb{R}/\mathbb{Z}$-valued pairing between $H_D^p(M)$ and $Z_p(M)$. However, there is another pairing in which $H_D^p(M)$ is involved: The DB product. Although it can be defined for any two sets of quantum fields, we only need here the particular DB product
	\begin{equation}
		\label{DBpairingfields}
		\star : H_D^{n-p-1}(M) \times H_D^{p}(M) \to H_D^{n}(M)\, , 
	\end{equation}
	knowing that, from exact sequences \eqref{short2} and \eqref{short3}, $H_D^{n}(M) \simeq \Omega^n/\Omega^n_{\mathbb{Z}} \simeq \mathbb{R}/\mathbb{Z}$.
	
	For reasons that will become increasingly clear, let us introduce the following terminology
	
	\begin{definition}
		The \textbf{quantum complement} of $p \in \left\{0 , \hdots , n-1 \right\}$ in $M$ is the integer
		\begin{equation}
			\label{quantumcomplementary}
			q = n-p-1 \in \left\{0 , \hdots , n-1 \right\} \, ,
		\end{equation}
		so that conversely $p$ is the quantum complement of $q$ in $M$. 
	\end{definition}
	
	\noindent Let us remark that the notion of quantum complement is already natural from the point of view of relation \eqref{ddaggervsd}, and hence is not specific to the DB context. The pairing \eqref{DBpairingfields} is then defined as follows.
	
	\begin{definition} \label{gaugefieldsDBproduct}
		Let $A^{[p]}$ and $B^{[q]}$ be two gauge fields, $p$ and $q$ being quantum complement to each other. The \textbf{DB product} of $A^{[p]}$ and $B^{[q]}$, $B^{[q]} \star A^{[p]}$, is the gauge $n$-field whose components are
		\begin{align}
			\label{starproduct}
			\nonumber
			\left( B^{[q]} \star A^{[p]} \right)&_{\alpha_0 \hdots \alpha_k}^{(k,n-k)}\\
			=& \left\{ \begin{gathered}
				B_{{\alpha _0} \hdots \;{\alpha _k}}^{(k,q-k)} \wedge dA_{{\alpha _k}}^{(0,p)} \hspace{1cm}\mbox{ for } 0 \leq k \leq q \, , \hfill \\
				b_{\alpha_0 \hdots \alpha_{q+1}}^{(q+1,-1)} A_{\alpha_{q+1} \hdots \alpha _k}^{(k-n+p,n-k)} \hspace{0.35cm} \mbox{ for } n - p = q+1 \leq k \leq n\, , \hfill \\
				b_{{\alpha _0} \hdots \;{\alpha _{n - p}}}^{\left( q+1, -1 \right)} a_{\alpha_{n-p} \hdots \alpha_{n+1}}^{(p+1,-1)} \hspace{0.52cm} \mbox{ for } k = n + 1 \, , \hfill \\ 
			\end{gathered}  \right.
		\end{align}
		where repeated indices are not summed over. 
	\end{definition}
	
	This product is $\epsilon_p$-symmetric with $\epsilon_p = (-1)^{(p+1)(n-p)} = \pm 1$ and extends to quantum fields. Furthermore, this definition of the DB product also holds true even if $p$ and $q$ are not quantum complement with each other. In this case, $B^{[q]} \star A^{[p]}$ is a gauge $(p+q+1)$-field \cite{CS73,HLZ,BGST05}.
	
	The collection generated by the last line of the right-hand side the above relation is nothing but the cup product $b^{(q+1,-1)} \smallsmile a^{(p+1,-1)}$ of the \v{C}ech cocycles $b^{(q+1,-1)}$ and $a^{(p+1,-1)}$. The result of this cup product is then a \v{C}ech $(n+1)$-cocycle which is actually trivial, i.e., a $\delta$-coboundary, for dimensional reasons. The second line in the above definition is a straightforward extension of the cup product, whereas the first line is a combination of the cup and wedge products.
	
	\begin{exercise}
		Show that:
		\begin{enumerate}[1)]
			\item the components of $B^{[q]} \star A^{[p]}$ fulfill descent equations \eqref{descent};
			\item if we perform a gauge field transformation on $A^{[p]}$ as well as one on $B^{[q]}$ then the result is a gauge field transformation of $B^{[q]} \star A^{[p]}$.
		\end{enumerate}    
	\end{exercise}	
	
	From the above exercise, we deduce the following property.
	
	\begin{property}
		The DB product of gauge fields extends to quantum fields, the DB product of $\boldsymbol{B}^{[q]}$ with $\boldsymbol{A}^{[p]}$ thus being a quantum $n$-field denoted by $\boldsymbol{B}^{[q]} \star \boldsymbol{A}^{[p]}$.
	\end{property}

	We recognize in the first line of the right-hand side of \eqref{starproduct} a collection of local classical BF Lagrangians, and since $\boldsymbol{B}^{[q]} \star \boldsymbol{A}^{[p]}$ is a quantum field, from a physical point of view it is legitimate to call this product of quantum fields a \textbf{quantum Lagrangian}. As we have identified a Lagrangian, the next step is to define an action. Since, by definition, a quantum Lagrangian is a quantum $n$-field, it is natural to define the \textbf{quantum action} of the generalized $\U1$ BF theory as the quantum integral over $M$ of the quantum Lagrangian
	\begin{equation}
		S_{BF} \left( \boldsymbol{A}^{[p]},\boldsymbol{B}^{[q]} \right) = \oint_M \boldsymbol{B}^{[q]} \star \boldsymbol{A}^{[p]} \, .
	\end{equation}
	So, as a quantum integral the action $S_{BF}$ is $\mathbb{R}/\mathbb{Z}$-valued. This is another justification of our extensive and systematic use of the word ``quantum'' all along the construction, whether we are talking about the fields, the Lagrangian and now the action. If $M_{\{n\}} = \left( M_{(0,n)} , \hdots , M_{(n,0)} , m_{(n,-1)}  \right)$ is a polyhedral decomposition of $M$ subordinate to $\mathcal{U}_M$, and if $A^{[p]}$ and $B^{[q]}$ are gauge fields representing $\boldsymbol{A}^{[p]}$ and $\boldsymbol{B}^{[q]}$, respectively, then we have
	\begin{equation}
		\label{defintaction}
		\oint_M \boldsymbol{B}^{[q]} \star \boldsymbol{A}^{[p]} \stackrel{\mathbb{R}/\mathbb{Z}}{=} \sum\limits_{k = 0}^n (-1)^k \int\limits_{M_{({k,n-k})}} \left( B^{[q]} \star A^{[p]} \right)^{(k,n-k)} \, .
	\end{equation}

	Let us remark that the quantum action $S_{BF}$ defines a generalized $\U1$ holonomy according to
	\begin{equation}
		h(\boldsymbol{A}^{[p]} \star \boldsymbol{B}^{[q]},M) = e^{2 i \pi S_{BF}\left( \boldsymbol{A}^{[p]},\boldsymbol{B}^{[q]} \right)}\, .
	\end{equation}
	This generalized holonomy will play the role of a formal measure density in the functional integration framework of the generalized $\U1$ BF theory. Here again, it is possible to recover the expression of the DB product of two gauge fields from this generalized holonomy, an exercise left to the reader. We can introduce a coupling constant $k$, and define the action of the corresponding theory
	\begin{equation}
		S_{BF,k} \left( \boldsymbol{A}^{[p]},\boldsymbol{B}^{[q]} \right) = k S_{BF} \left( \boldsymbol{A}^{[p]},\boldsymbol{B}^{[q]} \right)\, .
	\end{equation}
	The action $S_{BF}$ being $\mathbb{R}/\mathbb{Z}$-valued, the coupling constant $k$ must be an integer for $S_{BF,k}$ to be well-defined. This is usually referred to as the quantization of the coupling constant. Let us point out that, if we use classical fields in the construction of the generalized BF action, then we will never obtain a quantization of the coupling constant. In this case, the action is classical and reads $\oint_M B \wedge d A$. It is also quite obvious that the ``large gauge group" of this classical action is $\Omega_\circ^p(M) \times \Omega_\circ^q(M)$ and the classical action is strictly gauge invariant, unlike the quantum action which is defined modulo integers. This is why the coupling constant is not quantized in the classical context.
	
	\section{Quantum currents}
	\label{section_quantum_currents}
	
	There are several reasons which justify the extension of the previous construction to ``singular" objects. Some of them were already mentioned in the introduction. In addition, let us recall that in the axiomatic approach of quantum field theory, as the one of Wightman and Garding for instance \cite{WG65}, fields turn out to be operator-valued distributions. Even in the Feynman path integral approach to quantum field theory, the use of singular fields is required. Indeed, it is well-known that  the Lebesgue functional measure is zero on the infinite dimensional classical configuration space. However, by adding to the classical configuration space some singular fields it is possible to construct an abstract Wiener space, i.e., an enlarged configuration space which can be endowed with a nontrivial Gaussian measure \cite{CM44,GS16,G67}. In fact, abstract Wiener spaces were originally introduced to give a rigorous mathematical background to the functional integral used in the study of the Brownian motion. The Feynman path integral can be seen as a formal extension of this construction first in quantum mechanics and then in quantum field theory.
	
	For later convenience, let us introduce the graded \v{C}ech operator $\hat{\delta}$ which acts on local de Rham $l$-currents according to
	\begin{equation}
		\hat{\delta} = (-1)^{n-l} \delta \, .
	\end{equation} 
	so that $\hat{\delta}$ is the dual of the operator
	\begin{equation}
		\hat{\partial} = (-1)^{n-l} \partial \, .
	\end{equation}
	The action of $\partial$ on a collection $C{_k}{^a}$ of local $a$-forms is derived from \eqref{defCechpartial} by considering a $a$-form as a $(n-a)$-current. Thus, we have 
	\begin{equation}
		\hat{\partial} C{_k}{^a} = (-1)^a \partial C{_k}{^a} \, .
	\end{equation}
	Let us point out that since
	\begin{equation}
		\hat{\delta} d^\dagger + d^\dagger \hat{\delta} = \delta d - d \delta = 0 \, ,
	\end{equation} 
	$d^\dagger + \hat{\delta}$ is a cohomology operator whereas $d + \delta$ is not, the latter being not nilpotent. Similarly, $d + \hat{\partial}$ is nilpotent whereas $d + \partial$ is not.
	
	The injection of real numbers into the set of de Rham $n$-currents, denoted by $d_{-1}^\dagger$, is defined by
	\begin{equation}
		\left(d_{-1}^\dagger h\right) [\omega] = h \oint_M \omega \, ,
	\end{equation}
	for any $h \in \mathbb{R}$ and any $\omega \in \Omega^n(M)$. Note that $d_{-1}^\dagger = d_{-1}$ which simply means that the constant $0$-form $d_{-1} h$ defines the constant $n$-current $d_{-1}^\dagger h$. Finally, we define the operator $i_n: \Omega^n(M) \rightarrow \mathbb{R}$ according to
	\begin{equation}
		i_n \omega = \oint_M \omega \, ,
	\end{equation}
	for any $\omega \in \Omega^n(M)$. This operator is clearly the dual of $d_{-1}^\dagger$ since
	\begin{equation}
		\left(d_{-1}^\dagger h\right) [\omega] = h [i_n \omega] \, .
	\end{equation}
	Furthermore, $d_{-1}^\dagger h \in \Omega_n^\mathbb{Z}(M)$ when $h \in \mathbb{Z}$. Eventually, if $c^{(p,-1)}$ is a real \v{C}ech $p$-cochain of $\mathcal{U}_M$ then $d_{-1}^\dagger c^{(p,-1)}$ will denote the collection of local de Rham $n$-currents whose component in $U_{\alpha_0 \dots \alpha_p}$ is the de Rham $n$-current $d_{-1}^\dagger c^{(p,-1)}_{\alpha_0 \dots \alpha_p}$, a current that must be evaluated on $n$-forms with compact support in $U_{\alpha_0 \dots \alpha_p}$. Similarly, if $\omega{_p}{^n}$ is a collection of $n$-forms such that each $\omega{_p}{^{n,\alpha_0 \dots \alpha_p}}$ has compact support in $U_{\alpha_0 \dots \alpha_p}$ then $i_n \omega^{(p,n)}$ is the \underline{real} \v{C}ech $p$-cochain of $\mathcal{U}_M$ whose component in $U_{\alpha_0 \dots \alpha_p}$ is the real number $i_n \omega{_p}{^{n, \alpha_0 \dots \alpha_p}}$. Moreover, $i_n \omega{_p}{^n}$ defines a \v{C}ech $p$-cochain of $\mathcal{U}_M$if and only if each $\omega{_p}{^{n, \alpha_0 \dots \alpha_p}}$ belongs to $\Omega_\mathbb{Z}^n(U_{\alpha_0 \dots \alpha_p})$.

	\subsection{Quantum currents and associated exact sequences}
	
	By simply replacing local forms with local de Rham currents in the previous construction of gauge fields, we obtain the following definition.
	
	\begin{definition} \label{gaugecurrent}
		Let $p$ and $q$ be quantum complement integers and let $A_{[q]} = \left(  A{^0}{_{q+1}} , A{^1}{_{q+2}} , \hdots , A{^{p}}{_n} , a^{(p+1,-1)} \right)$ be a $(p+2)$-tuple in which each $A{^k}{_{q+1+k}}$ is a collection of local de Rham $(q+k)$-currents, one in each non-empty $U_{\alpha_0 \hdots \alpha_k}$, and $a^{(p+1,-1)}$ is a \v{C}ech $(p+1)$-cocycle of $\mathcal{U}_M$. We say that $A_{[q]}$ is a \textbf{gauge $q$-current} of $\mathcal{U}_M$ if its components fulfill the following decent equations
		\begin{equation}
			\label{descentcurrent}
			\left\{ \begin{aligned}
				& \hat{\delta} A{^{k-1}}{_{q+k}} = d^\dagger A{^k}{_{q+k+1}} \quad \mbox{for $k \in \{0, \hdots , p \}$}, \\
				& \hat{\delta} A{^p}{_n} = d^\dagger_{-1} a^{(n-q,-1)} \, . 
			\end{aligned} \right.
		\end{equation}
	\end{definition}
	
	The above descent equations must be considered in the appropriate intersections of $\mathcal{U}_M$, each $A{^k}{_{q+k+1,\alpha_0 \hdots \alpha_k}}$ being an element of the topological dual of $\Omega^{q+k+1}_c(U_{\alpha_0 \hdots \alpha_k})$, the space of $(q+k+1)$-forms with compact support in $U_{\alpha_0 \hdots \alpha_k}$, and not of $\Omega^{q+k+1}(U_{\alpha_0 \hdots \alpha_k})$. The use of $\hat{\delta}$ in \eqref{descentcurrent} thus ensures that the descent equations reduce to descent equations \eqref{descent} when local de Rham currents are replaced by local forms (see below). The fact that the first component of a gauge $q$-current is a collection of local $(q+1)$-currents may seem strange. The reason for this discrepancy will be explained in a moment. To get ride of this discrepancy, it is possible to replace $\hat{\delta}$ and $d^\dagger$ by their expression in terms of $\delta$ and $d$, respectively, in descent equations \eqref{descentcurrent}, these equations then taking the form \eqref{descent}. We didn't make this choice of writing in order to keep the duality between currents and forms more obvious and also because the dimension of a de Rham current is more suitable for our purpose than its degree.
	
	The next step is to define classes of gauge currents following what we did for gauge fields. 
	
	\begin{definition} \label{gaugecurrenttransfo}
		A \textbf{gauge $q$-current transformation} is a gauge $q$-current of the form
		\begin{equation}
			\label{gaugetransfocurrent}
			\left( d^\dagger G{^0}{_{q+2}} , \hat{\delta} G{^0}{_{q+2}} + d^\dagger G{^1}{_{q+3}} , \hdots , \hat{\delta} G{^{p-1}}_{n} + d_{-1}^\dagger g^{(p,-1)} , \delta g^{(p,-1)} \right) \, ,
		\end{equation}
		with $g^{(p,-1)}$ a \v{C}ech $p$-cochain of $\mathcal{U}_M$, $p$ being the quantum complement of $q$. Two gauge $q$-currents which differ by a gauge $q$-current transformation are said to be equivalent. The equivalence class of a gauge $q$-current $A_{[q]}$ is called a \textbf{quantum $q$-current} and denoted by $\boldsymbol{A}_{[q]}$. The set of quantum $q$-currents of $M$, $H_q^D(M)$, is a $\mathbb{Z}$-module. 
	\end{definition}
	
	Like for quantum fields, there are exact sequences into which $H_q^D(M)$ sits. 
	
	\begin{theorem}
		The space $H^D_q(M)$ sits into the short exact sequence
		\begin{equation}
			\label{currentshort1}
			0 \to \frac{\Omega_{q+1}(M)}{\Omega_{q+1}^\mathbb{Z}(M)} \xrightarrow{\bar{\delta}} H^D_q(M) \xrightarrow{cl} H^{p+1}(M) \to 0 \, .
		\end{equation}
	\end{theorem}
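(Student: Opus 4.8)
The plan is to mirror, in the world of de Rham currents, the argument establishing the analogous short exact sequence \eqref{short2} for quantum fields, using the duality between forms and currents systematically. Recall that by Definition \ref{gaugecurrent} a gauge $q$-current ends with a \v{C}ech $(p+1)$-cocycle $a^{(p+1,-1)}$ and that a gauge $q$-current transformation ends with a \v{C}ech coboundary $\delta g^{(p,-1)}$; hence the assignment sending $\boldsymbol{A}_{[q]}$ to the \v{C}ech class of its last component is a well-defined map $cl: H^D_q(M) \to H^{p+1}(M)$. Surjectivity of $cl$ follows because any integral \v{C}ech $(p+1)$-cocycle $a^{(p+1,-1)}$ can be realized as the tail of a current descent: working down from $U_{\alpha_0\hdots\alpha_{p+1}}$ and using that on each contractible intersection $d^\dagger$-closed currents are exact (a fact recorded in the reminders, via the Poincaré lemma for currents on contractible open sets), one lifts $a^{(p+1,-1)}$ successively through $d^\dagger$ to obtain $A{^p}{_n}, A{^{p-1}}{_{n-1}}, \hdots, A{^0}{_{q+1}}$ satisfying \eqref{descentcurrent}. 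This is exactly the \v{C}ech--de Rham descent of the reminders, transported to currents.

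Next I would construct the map $\bar{\delta}: \Omega_{q+1}(M)/\Omega_{q+1}^\mathbb{Z}(M) \to H^D_q(M)$. Given a global de Rham $(q+1)$-current $J \in \Omega_{q+1}(M)$, restrict it to the open sets $U_\alpha$ to form the collection $(\delta^\dagger_{-1} J)$ and set $J_{[q]} = (\delta^\dagger_{-1} J, 0, \hdots, 0)$; this visibly satisfies \eqref{descentcurrent}, so it is a gauge $q$-current, whose class I denote $\boldsymbol{J}_{[q]}$. The content is that this map kills exactly $\Omega_{q+1}^\mathbb{Z}(M)$: if $J$ is $d^\dagger$-closed and $\mathbb{Z}$-valued on $\Omega^{q+1}_\mathbb{Z}(M)$, then — paralleling the field computation verbatim — one runs a \v{C}ech--de Rham descent of $J$ (using exactness on contractible pieces at each stage, then, at the bottom, using that the cohomology of $\Omega_\bullet^\mathbb{Z}$ is the free sector $F_\bullet$ so that the descent terminates on an \emph{integral} \v{C}ech cocycle), obtaining a representative of $\boldsymbol{J}_{[q]}$ that is manifestly a gauge $q$-current transformation in the sense of Definition \ref{gaugecurrenttransfo}. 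Hence $\boldsymbol{J}_{[q]} = \boldsymbol{0}$, and the map descends to $\bar{\delta}$ on the quotient; injectivity of $\bar{\delta}$ is the converse reading of the same descent.

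Finally I would check exactness at $H^D_q(M)$, i.e. $\Ima \bar{\delta} = \ker cl$. The inclusion $\subseteq$ is immediate since $J_{[q]}$ has trivial last component. For $\supseteq$, if $\boldsymbol{A}_{[q]} \in \ker cl$ then its tail $a^{(p+1,-1)}$ is a \v{C}ech coboundary, $a^{(p+1,-1)} = \delta g^{(p,-1)}$; subtracting the gauge $q$-current transformation built from $g^{(p,-1)}$ we may assume $a^{(p+1,-1)} = 0$. The last descent equation then gives $\hat{\delta} A{^p}{_n} = 0$, so $A{^p}{_n}$ is a \v{C}ech cocycle of local $n$-currents, which (being a cocycle for a good cover, hence trivial in the relevant degree) can be absorbed by a further transformation, lowering the length of the tuple; iterating upward, one reduces $\boldsymbol{A}_{[q]}$ to a representative of the form $(\delta^\dagger_{-1} J, 0, \hdots, 0)$ for some global $(q+1)$-current $J$, i.e. $\boldsymbol{A}_{[q]} \in \Ima\bar{\delta}$. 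Combined with surjectivity of $cl$ and injectivity of $\bar{\delta}$ already shown, this yields the short exact sequence \eqref{currentshort1}.

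The main obstacle I anticipate is the bookkeeping at the bottom of the descent: ensuring that when one descends $J \in \Omega_{q+1}^\mathbb{Z}(M)$ the process can be arranged to terminate on a genuinely \emph{integral} \v{C}ech cocycle rather than merely a real one. For forms this was handled by invoking the isomorphism between the cohomology of $\Omega^\bullet_\mathbb{Z}$ and the free sector $F^\bullet$ of \v{C}ech cohomology; the dual statement — that the cohomology of $(\Omega_\bullet^\mathbb{Z}(M), d^\dagger)$ is $F_\bullet \simeq F^{n-\bullet}$, which is explicitly recorded in the reminders — is precisely what makes the current version go through, but one must be careful that "$\mathbb{Z}$-valued on $\Omega^{q+1}_\mathbb{Z}$'' interacts correctly with the \v{C}ech boundary operator $\partial$ at each step (this is where relation \eqref{dualCechoperators} and the compact-support constraints on the local currents enter). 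Everything else is a routine transcription of the proof of \eqref{short2}, which the excerpt already cites as known; indeed one could alternatively deduce \eqref{currentshort1} from \eqref{short2} by a duality/density argument, but the direct descent proof seems cleaner and self-contained.
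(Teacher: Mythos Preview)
Your proposal is correct and follows precisely the approach the paper indicates: the paper does not give a detailed proof of \eqref{currentshort1} but simply states that it is obtained from the proof of \eqref{short2} by replacing local forms with local de Rham currents, then defines the morphisms $cl$ and $\bar{\delta}$ exactly as you do. Your write-up actually supplies more detail than the paper (in particular the upward reduction for $\ker cl \subset \Ima\bar{\delta}$ and the care about the integrality of the terminal \v{C}ech cocycle), and the only discrepancy is cosmetic: the paper writes $\delta_{-1}$ rather than your $\delta^\dagger_{-1}$ for restriction of a global current.
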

	
	A proof of this theorem can be straightforwardly obtained from the proof of \eqref{short2} \cite{Br93,BGST05}. The morphisms $cl$ and $\bar{\delta}$ appearing in \eqref{currentshort1} are natural extensions of the morphisms $cl$ and $\delta$ which appear in \eqref{short2}. In particular, if $\chi_{(q)}$ is a de Rham $q$-current of $M$, we set
	\begin{equation}
		\label{classicalcurrent}
		\chi_{[q]} = \left( \delta_{-1} \chi_{(q)} , 0 , \hdots ,0 \right)\, ,
	\end{equation}
	with $\delta_{-1}$ still denoting the restriction to open subsets of $M$ which is defined by setting
	\begin{equation}
		( \delta_{-1} \chi_{(q)}) [\omega] = \chi_{(q)} [\omega] \, ,
	\end{equation}
	for all $\omega \in \Omega^q_c(U_\alpha)$. The above relation is meaningful because a $q$-form with compact support in $U_\alpha$ trivially defines a $q$-form of $M$. 
	
	\vspace{0.5cm}
	
	As already noticed, if $A_{[q]}$ is a gauge $q$-current then there exists a classical $q$-current $F_{(q)}$ such that $d^\dagger A_{q+1}^0 = \hat{\delta}_{-1} F_{(q)}$. This classical $q$-current is called the \textbf{curvature} of $A_{[q]}$. A gauge $q$-current whose curvature is zero is said to be \textbf{flat}. Then, since gauge current transformations are obviously flat, any equivalent gauge current has the same curvature and we can talk about the \textbf{curvature of a quantum $q$-current}. The curvature of a quantum current is necessarily $d^\dagger$-closed since it is locally $d^\dagger$-exact. Moreover, as already mentioned, if we add to a gauge $q$-current its curvature as new first component, then we obtain a \v{C}ech-de Rham descent of this curvature, and since this descent, by construction, ends with a(n integral) \v{C}ech cocycle, then the curvature is automatically $\mathbb{Z}$-valued on $\Omega_\mathbb{Z}^q(M)$. Thus, curvatures are elements of $\Omega^\mathbb{Z}_q(M)$ and we have a map $\bar{d}: H_q^D(M) \to \Omega^\mathbb{Z}_q(M)$. This map is surjective because any element $\Omega^\mathbb{Z}_q(M)$ gives rise, through a \v{C}ech-de Rham descent, to at least one gauge $q$-current. This means we have the exact sequence
	\begin{equation}
		H_q^D(M) \xrightarrow{\bar{\delta}} \Omega^\mathbb{Z}_q(M) \to 0 \, .
	\end{equation}
	This exact sequence can be enlarged to the left, thus yielding the following theorem, the proof of which is left as an exercise.
	\begin{theorem}
		The space $H^D_q(M)$ sits into the short exact sequence
		\begin{equation}
			\label{currentshort2}
			0 \to H^p(M,\mathbb{R}/\mathbb{Z}) \xrightarrow{i} H^D_q(M) \xrightarrow{\bar{d}} \Omega^\mathbb{Z}_q(M) \to 0 \, .
		\end{equation}
	\end{theorem}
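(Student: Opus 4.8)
The plan is to mirror, almost verbatim, the argument behind exact sequence \eqref{short3}, systematically replacing local forms by local de Rham currents, the operators $d$ and $\delta$ by their duals $d^\dagger$ and $\hat{\delta}$, and Poincaré's lemma for forms on contractible open sets by Poincaré's lemma for currents. Surjectivity of $\bar{d}$ onto $\Omega^\mathbb{Z}_q(M)$ was already obtained in the discussion preceding the statement, so what is left is to build the map $i$, to check it is well defined, to prove $\Ima i = \ker\bar{d}$, and to prove $i$ is injective.

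First I would define $i$ by imitating the field case. Given $\boldsymbol{r}\in H^p(M,\mathbb{R}/\mathbb{Z})$, choose a real \v{C}ech $p$-cochain $r^{(p,-1)}$ of $\mathcal{U}_M$ whose reduction modulo $\mathbb{Z}$ is a cocycle representing $\boldsymbol{r}$, so that $\delta r^{(p,-1)}$ is an integer-valued \v{C}ech $(p+1)$-cocycle, and set $R_{[q]}=\bigl(0,\hdots,0,d_{-1}^\dagger r^{(p,-1)},\delta r^{(p,-1)}\bigr)$, the constant $n$-currents $d_{-1}^\dagger r^{(p,-1)}$ sitting in \v{C}ech degree $p$. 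A short check — using $d^\dagger d_{-1}^\dagger=0$ (Stokes on the closed manifold $M$), the fact that $\hat{\delta}=\delta$ on $n$-currents, and the commutation of $d_{-1}^\dagger$ with the \v{C}ech coboundary — shows that $R_{[q]}$ satisfies the descent equations \eqref{descentcurrent}, hence is a gauge $q$-current. Changing the real representative of $\boldsymbol{r}$ by $\delta\sigma^{(p-1,-1)}+m^{(p,-1)}$, with $\sigma^{(p-1,-1)}$ real and $m^{(p,-1)}$ integral, alters $R_{[q]}$ by the gauge $q$-current transformation \eqref{gaugetransfocurrent} with $G^{p-1}{}_{n}=d_{-1}^\dagger\sigma^{(p-1,-1)}$, $g^{(p,-1)}=m^{(p,-1)}$ and all other $G$'s zero; so $i(\boldsymbol{r}):=\boldsymbol{R}_{[q]}$ is a well-defined $\mathbb{Z}$-linear map. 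It lands in the flat quantum currents, because the vanishing of the first component of $R_{[q]}$ forces its curvature $F_{(q)}$, characterised by $\hat{\delta}_{-1}F_{(q)}=d^\dagger A^0{}_{q+1}=0$, to vanish; hence $\bar{d}\circ i=0$.

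For the inclusion $\ker\bar{d}\subseteq\Ima i$ I would use a descent/peeling argument: if $A_{[q]}$ is flat, then $d^\dagger A^0{}_{q+1}=0$ on each $U_\alpha$, so Poincaré's lemma for currents on the contractible $U_\alpha$ gives $A^0{}_{q+1}=d^\dagger G^0{}_{q+2}$; subtracting the gauge $q$-current transformation built from $G^0{}_{q+2}$ kills the first component, the descent equations make the new second component $d^\dagger$-closed, and one repeats. After $p$ steps one reaches a gauge $q$-current equivalent to $A_{[q]}$ of the form $\bigl(0,\hdots,0,A^p{}_{n},a^{(p+1,-1)}\bigr)$ with $A^p{}_n$ a $d^\dagger$-closed collection of $n$-currents, which on each contractible $U_{\alpha_0\hdots\alpha_p}$ must be a constant, i.e. $A^p{}_n=d_{-1}^\dagger r^{(p,-1)}$; the last descent equation then reads $\delta r^{(p,-1)}=a^{(p+1,-1)}\in\mathbb{Z}$, so $r^{(p,-1)}\bmod\mathbb{Z}$ is an $\mathbb{R}/\mathbb{Z}$-valued cocycle and $\boldsymbol{A}_{[q]}=i(\boldsymbol{r})$ for its class. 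For injectivity I would start from $i(\boldsymbol{r})=\boldsymbol 0$, i.e. $R_{[q]}$ of the form \eqref{gaugetransfocurrent}; running the same peeling on the $G^j$'s (which carry their own gauge freedom and can be cleared one contractible intersection at a time) one ends with $G^{p-1}{}_n=d_{-1}^\dagger s^{(p-1,-1)}$ for some real \v{C}ech cochain $s^{(p-1,-1)}$, whence comparing the last two components of \eqref{gaugetransfocurrent} with those of $R_{[q]}$ gives $r^{(p,-1)}=\delta s^{(p-1,-1)}+g^{(p,-1)}$ with $g^{(p,-1)}$ an (integral) \v{C}ech cochain; therefore $r^{(p,-1)}\bmod\mathbb{Z}=\delta(s^{(p-1,-1)}\bmod\mathbb{Z})$ and $\boldsymbol{r}=0$.

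\emph{The hard part} — the only step that is not a mechanical dualisation of the proof of \eqref{short3} — is the behaviour of Poincaré's lemma for currents at the top de Rham dimension: the real \v{C}ech cochain $r^{(p,-1)}$ appears in the argument precisely because a $d^\dagger$-closed $n$-current on a ball is a constant multiple of $d_{-1}^\dagger 1$ rather than $d^\dagger$-exact, and this top-dimensional term must be tracked carefully through every peeling step (both in the proof of $\ker\bar d\subseteq\Ima i$ and in the injectivity of $i$). One should also treat the low-$p$ boundary case ($p=0$, where the tuple has only two entries and several steps collapse) separately; there flatness of $A_{[0]}$ gives $A^0{}_n=d_{-1}^\dagger r^{(0,-1)}$ and $\delta r^{(0,-1)}=a^{(1,-1)}$ at once. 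Modulo these points, exactness of \eqref{currentshort2} follows as in the form case.
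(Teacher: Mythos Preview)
Your proposal is correct and follows precisely the approach the paper intends: the authors explicitly leave this theorem as an exercise, and your argument is the direct dualisation of their treatment of \eqref{short3}, replacing forms by currents, $(d,\delta)$ by $(d^\dagger,\hat{\delta})$, and handling the top-dimension obstruction via $d_{-1}^\dagger$ exactly as the paper does later in the proof of Property 4.5. The one small remark is that your parenthetical ``Stokes on the closed manifold $M$'' should really read ``Stokes on compactly supported forms in $U_{\alpha_0\cdots\alpha_p}$'', since the relevant $n$-currents are local; the conclusion $d^\dagger d_{-1}^\dagger=0$ is of course unchanged.
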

	
	Here again, if we consider gauge fields instead of gauge currents, then $\Omega^\mathbb{Z}_q(M)$ becomes $\Omega^{n-q}_\mathbb{Z}(M) = \Omega^{p+1}_\mathbb{Z}(M)$, $p$ and $q$ being quantum complement to each other, and we obtain exact sequence \eqref{short3} instead of \eqref{currentshort2}. Note also the consistency of our terminology concerning curvatures since $\Omega^{p+1}_\mathbb{Z}(M)$ can be seen as a subset of $\Omega^\mathbb{Z}_q(M)$. Finally, we have not discussed the topology on the space of quantum fields and quantum currents yet. The exact sequences into which these spaces sit allow to provide each of them with an inherited topology \cite{HLZ}.
	
	\begin{exercise}
		Show that sequences \eqref{currentshort1} and \eqref{currentshort2} are exact.
	\end{exercise}
	
	Now, if we compare exact sequences \eqref{short2} and \eqref{currentshort1} on the one hand, and exact sequences \eqref{short3} and \eqref{currentshort2} on the other hand, we are not surprised by the following property.
	
	\begin{property} For each pair $(p,q)$ of quantum complement integers, there is a canonical injection
		\begin{equation}
			H_D^p(M) \hookrightarrow H^D_q(M) \, ,
		\end{equation}
		that we refer to as an inclusion.
	\end{property}
	
	\begin{proof}
		Let $A^{[p]} = \left( A^{\left( 0,p \right)} , A^{\left( 1,p-1 \right)} , \hdots , A^{\left( p,0 \right)} , a^{\left( p+1,-1 \right)} \right)$ be a gauge $p$-field and let $q$ be the quantum complement of $p$. For $k \in \{0 , \hdots , p \}$, we set
		\begin{equation}
			A{^k}{_{q+k+1,\alpha_0 \hdots \alpha_k}} [\omega] = \oint_M A_{\alpha_0 \hdots \alpha_k}^{(k,p-k)} \wedge \omega ,
		\end{equation}
		for any $\omega \in \Omega_c^{q+k+1}(U_{\alpha_0 \hdots \alpha_k})$. This yields the $(n-q+1)$-tuple  
		$A_{[q]} = \left( A{^0}{_{q+1}} , \hdots , A{^p}{_n} , a^{(n-q,-1)} \right)$. Then, by evaluating each $d^\dagger A{^k}{_{n-p+k,\alpha_0 \hdots \alpha_k}}$ on a test form, i.e., a $(n-p+k)$-form with compact support in $U_{\alpha_0 \hdots \alpha_k}$, and by using integration by parts, it is not difficult to check that the components of $A_{[q]}$ fulfill the descent equations \eqref{descentcurrent}, thus showing that $A_{[q]}$ is a gauge $q$-current. 
	\end{proof}
	
	\subsection{Dual quantum fields and evaluation of quantum currents}
	
	In the previous section, we saw that integration of a quantum $p$-field is an $\mathbb{R}/\mathbb{Z}$-valued functional on $Z_p(M)$, which can be obtained from integration of gauge $p$-fields along decompositions of cycles subordinate to $\mathcal{U}_M$. We now present the equivalent for quantum currents. We must define the objects on which quantum currents will be evaluated, simply requiring that this evaluation is an $\mathbb{R}/\mathbb{Z}$-valued linear functional defined from an evaluation of the corresponding gauge currents. Thus, let us proceed in complete analogy with integration of quantum fields.
	
	\begin{definition} \label{dualgaugefield}
		Let $p$ and $q$ be quantum complement integers and let $C^{\{q\}} = \left( C{_0}{^{q+1}} , \hdots , C{_p}{^n} ,  c_{(p,-1)} \right)$ be a $(p+2)$-tuple where each $C{_k}{^{q+1+k}}$ is a collection of $(q+1+k)$-forms $C{_k}{^{q+1+k ,\alpha_0 \hdots \alpha_k}}$, one for each non-empty $U_{\alpha_0 \hdots \alpha_k}$, which have compact support in $U_{\alpha_0 \hdots \alpha_k}$, $c_{(p,-1)}$ being a \v{C}ech chain of $\mathcal{U}_M$. We say that $C^{\{q\}}$ is a \textbf{dual gauge $q$-field} of $\mathcal{U}_M$ if its components fulfill the following descent equations
		\begin{equation}
			\label{descentdualfield}
			\left\{\begin{aligned}
				& d C{_k}{^{q+1+k}} = \hat{\partial} C{_{k+1}}{^{q+2+k}} \quad \mbox{for $k \in \{0, \hdots , p - 1 \}$}, \\
				& i_n C{_{p}}{^n} = c_{(p,-1)} \, .
			\end{aligned}\right.
		\end{equation}
	\end{definition}
	
	Let us stress out the main differences between dual gauge fields and gauge fields. Firstly, the forms defining a dual gauge field all have a local compact support unlike the forms defining a gauge field. Secondly, the descent equations of a dual gauge field use the homology operator $\hat{\partial}$ and not the cohomology operator $\delta$. The compact supports will ensure that the evaluations with the local de Rham currents are all well-defined whereas the use of $\hat{\partial}$ will ensure the duality between the descent equations of gauge currents and those of dual gauge fields. 
	
	The first and last components of a dual gauge field both have a specific property.
	
	\begin{property}
		\label{prop_dual_gauge_q_field}
		Let $C^{\{q\}} = \left( C{_0}{^{q+1}} , \hdots , C{_p}{^n} , c_{(p,-1)} \right)$ be a dual gauge $q$-field. Then
		\begin{equation}
			\label{dualfieldednings}
			\left\{ \begin{gathered}
				\partial C{_0}{^{q+1}} \in \Omega_\mathbb{Z}^{q+1}(M) \, , \hfill \\
				\partial c_{(p,-1)} = 0 \, . \hfill
			\end{gathered} \right.
		\end{equation}
	\end{property}
	
	\begin{proof}
		Firstly, $\partial C{_0}{^{q+1}}$ is a closed $(q+1)$-form since $d \partial C{_0}{^{q+1}} = \partial d C{_0}{^{q+1}} = \partial^2 C{_1}{^{q+2}} = 0$. Let us consider $\omega^{(p)} \in \Omega_\mathbb{Z}^{p}(M)$ together with a family $\omega^{(k,p-k)}$ of local forms generated by a \v{C}ech-de Rham descent of $\omega^{(p)}$ with respect to $\mathcal{U}_M$, this descent ending with the \v{C}ech cocycle $n^{(p,-1)}$. Then, we have
		\begin{align}
			\oint_M \omega^{(p)} \wedge \left(\partial C{_0}{^{q+1}}\right) 
			&= \oint_M \omega^{(p)} \wedge \left(\sum_\alpha C{_0}{^{q+1,\alpha}}\right)  \nonumber \\
			&= \sum_\alpha \oint_M \left(\delta_{-1} \omega^{(p)}\right)_\alpha \wedge C{_0}{^{q+1,\alpha}}  \nonumber \\
			&= \sum_\alpha \oint_M d \omega^{(0,p-1)}_\alpha \wedge C{_0}{^{q+1,\alpha}} \nonumber \\
			&= (-1)^p \sum_\alpha \oint_M \omega^{(0,p-1)}_\alpha \wedge d C{_0}{^{q+1,\alpha}}  \nonumber \\
			&= (-1)^p \sum_\beta \oint_M \omega^{(0,p-1)}_\beta \wedge \left(\sum_\alpha C{_1}{^{q+2,\alpha \beta}}\right)  \nonumber \\
			&= (-1)^p \frac{1}{2} \sum_{\alpha \beta} \oint_M \left(\delta \omega^{(0,p-1)}\right)_{\alpha \beta} \wedge C{_1}{^{q+2,\alpha \beta}}  \nonumber \\
			\oint_M \omega^{(p)} \wedge (\partial C{_0}{^{q+1}})            
			&= (-1)^p \frac{1}{2} \sum_{\alpha \beta} \oint_M d \omega^{(1,p-2)}_{\alpha \beta} \wedge C{_1}{^{q+2,\alpha \beta}} \, .
		\end{align}
		By continuing this procedure, we eventually obtain
		\begin{equation}
			\oint_M \omega^{(p)} \wedge (\partial C{_0}{^{q+1}})
			= \left(-1\right)^{\frac{p\left(p+1\right)}{2}} n^{(p,-1)} [c_{(p,-1)}] \in \mathbb{Z} \, ,
		\end{equation}
		for any $\omega^{(p)} \in \Omega_\mathbb{Z}^{p}(M)$, which implies that $\partial C{_0}{^{q+1}} \in \Omega_\mathbb{Z}^{q+1}(M)$.
		
		Secondly, $\partial c_{(p,-1)} = \partial i_n C{_p}{^n} = i_n d C{_{p-1}}{^{n-1}} = \oint_M d C{_{p-1}}{^{n-1}} = 0$ since, by construction, all the local forms defining $C{_{p-1}}{^{n-1}}$ have local compact support.
	\end{proof}
	
	\noindent By adding $\partial C{_0}{^{q+1}}$ to the collection of local forms defining the dual gauge field $C^{\{q\}}$, we obtain a \v{C}ech-Weil descent of $\partial C{_0}{^{q+1}}$ \cite{W51}. This kind of descents can be used to prove that the de Rham cohomology of closed $k$-forms with integral periods is isomorphic to $F_{n-k}$.
	
	As we did for gauge fields and gauge currents, we now define special dual gauge fields that will generate an equivalence relation for dual gauge fields.
	
	\begin{definition} \label{dualgaugetransfo}
		A \textbf{dual gauge $q$-field transformation} is a dual gauge $q$-field of the form
		\begin{equation}
			\left( \hat{\partial} G{_1}{^{q+1}} ,  d G{_1}{^{q+1}} + \hat{\partial} G{_2}{^{q+2}} , \hdots , d G{_p}{^{n-1}} + \hat{\partial} G{_{p+1}}{^n} , i_n \hat{\partial} G{_{p+1}}{^n} \right) \, ,
		\end{equation}
		where $G{_{p+1}}{^n}$ is such that $i_n G{_{p+1}}{^n}$ is a \v{C}ech $(p+1)$-chain of $\mathcal{U}_M$. Two dual gauge $q$-fields which differ by a dual gauge $q$-field transformation are said to be equivalent. The equivalence class of a dual gauge $p$-field $C^{\{q\}}$ is called a \textbf{dual quantum $q$-field} of $M$ and denoted by $\boldsymbol{C}^{\{q\}}$. The set of dual quantum $q$-fields of $M$ is a $\mathbb{Z}$-module denoted by $D_H^q(M)$.
	\end{definition}
	
	Let us note that the constraint that $i_n G{_{p+1}}{^n}$ is a \v{C}ech $(p+1)$-chain of $\mathcal{U}_M$ can be compared with the fact that the penultimate component of a gauge field transformation contains a \v{C}ech $(p+1)$-chain of $\mathcal{U}_M$.
	
	The $\mathbb{Z}$-module $D_H^q(M)$ sits into two exact sequences which look like very much the two exact sequences into which $H_D^p(M)$ is sitting.
	
	\begin{theorem}
		The $\mathbb{Z}$-module $D_H^q(M)$ sits in the following exact sequences
		\begin{equation}
			\label{dualfieldsequences}
			\begin{gathered}
				0 \to \frac{\Omega^q(M)}{\Omega_\mathbb{Z}^q(M)} \xrightarrow{\bar{\mu}} D_H^q(M) \xrightarrow{cl} H_{p}(M) \to 0 \, , \hfill \\
				0 \to H_{p+1}(M,\mathbb{R}/\mathbb{Z}) \xrightarrow{i} D_H^q(M) \xrightarrow{\bar{\partial}} \Omega^{q+1}_\mathbb{Z}(M) \to 0 \, . \hfill
			\end{gathered}
		\end{equation}
	\end{theorem}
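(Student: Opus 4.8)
The plan is to treat these two sequences as the homological mirror images of \eqref{short2}--\eqref{short3} (equivalently of the current versions \eqref{currentshort1}--\eqref{currentshort2}) and to rerun those proofs through the dictionary: smooth forms $\leftrightarrow$ compactly supported forms, the \v{C}ech coboundary $\delta$ $\leftrightarrow$ the \v{C}ech boundary $\hat\partial$, restriction to the $U_\alpha$ $\leftrightarrow$ multiplication by a partition of unity $\mu=\{\mu_\alpha\}$ subordinate to $\mathcal U_M$, and the Poincaré lemma $\leftrightarrow$ the Poincaré lemma with compact supports ($H_c^m(\mathbb{R}^n)=0$ for $m<n$, $H_c^n(\mathbb{R}^n)=\mathbb{R}$ via $i_n$). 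The targets produced by this transcription are then matched to the stated ones through Poincaré duality: $H^{q+1}(M)\simeq H_{n-q-1}(M)=H_p(M)$, and, using $b_{p+1}=b_q$ together with $T_p\simeq T_q$, $H^q(M,\mathbb{R}/\mathbb{Z})\simeq H_{p+1}(M,\mathbb{R}/\mathbb{Z})$. The genuinely new inputs are Property \ref{prop_dual_gauge_q_field} and the use of $\mu$; the rest is a routine translation, including the verification that both descriptions split.

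First I would fix the four structure maps. Set $cl(\boldsymbol C^{\{q\}})$ to be the \v{C}ech homology class of the last component $c_{(p,-1)}$, which is a \v{C}ech $p$-cycle by \eqref{dualfieldednings}; this is well defined because the last component of a dual gauge transformation is $i_n\hat\partial G_{p+1}^n=\pm\,\partial(i_n G_{p+1}^n)$, a \v{C}ech $p$-boundary, and because the \v{C}ech homology of a good cover computes $H_p(M)$. Set $\bar\partial(\boldsymbol C^{\{q\}})=\partial C_0^{q+1}$, the curvature of a dual gauge field; it lies in $\Omega^{q+1}_\mathbb{Z}(M)$ by \eqref{dualfieldednings} and vanishes on a transformation since $\partial\hat\partial G_1^{q+1}=0$. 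Define $\bar\mu$ on a representative $\chi^{(q)}\in\Omega^q(M)$ by $\chi^{\{q\}}_\mu=\bigl(\{d(\mu_\alpha\chi^{(q)})\}_\alpha,0,\dots,0\bigr)$, a dual gauge $q$-field whose higher descent equations are trivial and whose curvature is $d\chi^{(q)}$. Finally take $i$ to be the inclusion $\ker\bar\partial\hookrightarrow D_H^q(M)$ of the flat dual gauge $q$-fields, and identify $\ker\bar\partial$ with $H_{p+1}(M,\mathbb{R}/\mathbb{Z})$ by noting that a flat dual gauge field, via the \v{C}ech--Weil descent obtained by appending its curvature, is classified by a real \v{C}ech $(p+1)$-chain modulo integral ones; this dualizes the gauge field $R^{[p]}=(0,\dots,d_{-1}r^{(p,-1)},\delta r^{(p,-1)})$ of \eqref{short3}.

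Then I would establish surjectivity and exactness in the middle. For $cl$: realize a \v{C}ech $p$-cycle $c_{(p,-1)}$ as $i_nC_p^n$ by picking in each $U_{\alpha_0\dots\alpha_p}$ a compactly supported $n$-form of the prescribed integral; since $\hat\partial\,i_nC_p^n=\pm\,\partial c_{(p,-1)}=0$, the compactly supported Poincaré lemma yields $C_{p-1}^{n-1}$ with $\hat\partial C_p^n=dC_{p-1}^{n-1}$, and iterating downwards (at each stage $\hat\partial C_{k+1}^{q+2+k}$ is $d$-closed and, except at the top where the needed vanishing of the integral is guaranteed by $\partial c_{(p,-1)}=0$, of degree $<n$, hence compactly exact) produces a dual gauge $q$-field mapping to $[c_{(p,-1)}]$; applying the same descent to any $\rho\in\Omega^{q+1}_\mathbb{Z}(M)$ gives surjectivity of $\bar\partial$. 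For the first sequence, $cl\circ\bar\mu=0$ is immediate; conversely, if $cl(\boldsymbol C^{\{q\}})=0$ one subtracts a dual gauge transformation to make $c_{(p,-1)}=0$, hence $i_nC_p^n=0$, so each $(C_p^n)_{\alpha_0\dots\alpha_p}$ is a zero-integral compactly supported $n$-form, i.e., $d$ of a compactly supported $(n-1)$-form which feeds a further transformation killing $C_p^n$; climbing down with the compactly supported Poincaré lemma reduces $\boldsymbol C^{\{q\}}$ to $\bigl(\{d\nu_\alpha\}_\alpha,0,\dots,0\bigr)$ with each $\nu_\alpha$ compactly supported in $U_\alpha$, so that, setting $\chi^{(q)}=\sum_\alpha\nu_\alpha\in\Omega^q(M)$, one gets $\boldsymbol C^{\{q\}}=\bar\mu(\bar\chi^{(q)})$ after one last adjustment. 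Injectivity of $\bar\mu$ and exactness of the second sequence in the middle ($\ker\bar\partial=\Ima i$) follow the same template.

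The main obstacle, exactly as in the proof of \eqref{short2}, is showing that $\Omega^q_\mathbb{Z}(M)$ lies in the kernel of the naive map $\Omega^q(M)\to D_H^q(M)$, $\chi^{(q)}\mapsto[\chi^{\{q\}}_\mu]$, so that $\bar\mu$ descends to $\Omega^q(M)/\Omega^q_\mathbb{Z}(M)$. For $\rho\in\Omega^q_\mathbb{Z}(M)$ one has $d(\mu_\alpha\rho)=d\mu_\alpha\wedge\rho$, and $\bigl(\{d\mu_\alpha\wedge\rho\}_\alpha,0,\dots,0\bigr)$ must be exhibited as a genuine dual gauge transformation; this forces one to run a \v{C}ech--Weil descent of $\rho$ and to invoke the fact (recalled after Property \ref{prop_dual_gauge_q_field}) that the de Rham cohomology of $\Omega^q_\mathbb{Z}(M)$ is isomorphic to the free sector $F_{n-q}=F_{p+1}$ of \v{C}ech homology, so that the descent can be chosen with $i_nG_{p+1}^n$ an \emph{integral} \v{C}ech $(p+1)$-chain, which is precisely what makes the transformation admissible. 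Once this is secured, the two exact sequences and their splittings drop out of the preceding constructions.
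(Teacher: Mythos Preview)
Your proposal is correct and follows essentially the same route as the paper. The paper likewise defines $cl$ via the last \v{C}ech component, $\bar\partial$ via Property~\ref{prop_dual_gauge_q_field}, $\bar\mu$ via a partition of unity, and runs the compactly supported Poincar\'e lemma up and down the descent; it even postpones the first sequence to the later proof of \eqref{Bob}, which is exactly the argument you sketch. The only presentational difference is that the paper defines $i$ constructively---from a real \v{C}ech $(p{+}1)$-chain $\zeta$ it builds the dual gauge field $(0,\dots,0,\partial G_{p+1}^n,\partial\zeta)$ with $i_nG_{p+1}^n=\zeta$---rather than first declaring $i$ to be the inclusion of $\ker\bar\partial$ and then identifying that kernel; but your reduction of a flat $\boldsymbol C^{\{q\}}$ to such a form is precisely the paper's proof that $\ker\bar\partial\subset\Ima i$, so the two come to the same thing.
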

	
	\begin{proof}
		We postpone the proof concerning the first sequence of this theorem to the next section where we will meet the same kind of construction of exact sequences for dual quantum currents. Nevertheless, for later convenience let us define here the morphism $cl$. By construction, the last component of a dual gauge $q$-field is a \v{C}ech $p$-cycle, the last component of a dual gauge field transformation being more specifically a \v{C}ech boundary. This means that we can associate to any dual quantum $q$-field an element of $H_{p}(M)$. This defines the morphism $cl$.
		
		Now, let us concentrate on the second sequence of \eqref{dualfieldsequences}. We have already noticed that the first component of a dual gauge $q$-field $C^{\{q\}}$ defines a closed $(q+1)$-form $\partial G{_{0}}{^{q+1}}$ with integral period. Moreover, the closed form with integral period defined by a dual gauge $q$-field transformation is obviously the zero form. Hence, we can associate to any dual quantum $q$-field a closed $(q+1)$-form with integral period which yields the morphism $\bar{\partial}$. The morphism $i$ is defined as follows. For any $\boldsymbol{\zeta} \in H_{p+1}(M,\mathbb{R}/\mathbb{Z})$, let $\zeta$ be a real \v{C}ech $(p+1)$-chain of $\mathcal{U}_M$ which represents $\boldsymbol{\zeta}$ and let $G{_{p+1}}{^n}$ be a collection of $n$-forms with local compact support such that $i_n G{_{p+1}}{^n} = \zeta$. The elements of $G{_{p+1}}{^n}$ are referred to as bump forms in \cite{BT82}. Then, $\partial \zeta$ is a \v{C}ech $p$-cycle of $\mathcal{U}_M$ and $\left( 0 , \hdots , 0 , \partial G{_{p+1}}{^n} , \partial \zeta \right)$ is a dual gauge $q$-field of $\mathcal{U}_M$. Obviously, other representatives of $\boldsymbol{\zeta}$ differ from $\zeta$ by a \v{C}ech $(p+1)$-chain $m$, the difference between the two dual gauge fields defined by these representatives thus being of the form $\left( 0 , \hdots , 0 , \partial H{_{p+1}}{^n} , \partial m \right)$ with $i_n H{_{p+1}}{^n} = m$, so that this last dual gauge field is actually a dual gauge transformation. Consequently, we can associate to $\boldsymbol{\zeta}$ the dual quantum field of $\left( 0 , \hdots , 0 , \partial G{_{p+1}}{^n} , \partial \zeta \right)$. This association defines the morphism $i$. Now, the image by $\bar{\partial}$ of $i(\boldsymbol{\zeta})$ is obviously zero, so that $\Ima i \subset \ker \hat{\partial}$. Conversely, if $\boldsymbol{C}^{\{q\}} \in \ker \bar{\partial}$, then it is flat, i.e., $\bar{\partial} \boldsymbol{C}^{\{q\}} = \partial C{_0}{^{q+1}} = 0$, which implies that $C{_0}{^{q+1}} = \hat{\partial} G{_1}{^{q+1}}$. Now, by subtracting to $C^{\{q\}}$ the dual gauge field $\left( \hat{\partial} G{_1}{^{q+1}} , d G{_1}{^{q+1}} , 0 , \dots , 0\right)$, we obtain a new representative of $\boldsymbol{C}^{\{q\}}$, whose first component is zero and whose second component satisfies $\partial C{_1}{^{q+2}} = 0$, which implies that $C{_1}{^{q+2}} = \hat{\partial} G{_2}{^{q+2}}$. By repeating this procedure, we end with a representative $C^{\{q\}} = \left( 0 , \dots , 0 , C{_p}{^n} ,  c_{(p,-1)} \right)$ of $\boldsymbol{C}^{\{q\}}$ which satisfies $\partial C{_p}{^n} = 0$, and hence $C{_p}{^n} = \hat{\partial} C{_{p+1}}{^n}$. If $i_n C{_{p+1}}{^n}$ is a \v{C}ech chain then $C^{\{q\}}$ is a dual gauge field transformation. Therefore, as $\partial (i_n C{_{p+1}}{^n})$ is by construction a \v{C}ech cycle, $\boldsymbol{C}^{\{q\}}$ is not trivial if and only if $i_n C{_{p+1}}{^n}$ is a representative of an element of $H_{p+1}(M,\mathbb{R}/\mathbb{Z})$, which shows that $\ker \bar{\partial} \subset \Ima i$, and thus that the second exact sequence of \eqref{dualfieldsequences} is exact.
	\end{proof}
	
	We can now define the evaluation of quantum currents on dual quantum fields, starting with the evaluation of gauge currents on dual gauge fields.
	
	\begin{definition}
		Let $A_{[q]} = \left(  A{^0}{_{q+1}} , A{^1}{_{q+2}} , \hdots , A{^{p}}{_n} , a^{(p+1,-1)} \right)$ be a gauge $q$-current and let $C^{\{q\}} = \left( C{_0}{^{q+1}} , \hdots , C{_p}{^n} ,  c_{(p,-1)} \right)$ be a dual gauge $q$-field. The evaluation of $A_{[q]}$ on $C^{\{q\}}$ is defined by
		\begin{equation}
			A_{[q]} [ C^{\{q\}} ] = \sum_{k = 0}^p (-1)^k A{^k}{_{q+k+1}} [C{_k}{^{q+k+1}}] \, ,
		\end{equation}
		with
		\begin{equation}
			\label{basicevaluation}
			A{^k}{_{q+k+1}} [C{_k}{^{q+k+1}}] = \frac{1}{(k+1)!} \sum\limits_{\alpha_0, \hdots, \alpha_k} \, A{^k}{_{q+k+1,\alpha_0 \hdots \alpha_k}} [C{_k}{^{q+1+k ,\alpha_0 \hdots \alpha_k}}]\, .
		\end{equation}
	\end{definition}
	
	Let us recall that the local currents and forms appearing in \eqref{basicevaluation} are all antisymmetric in their intersection indices $\alpha_0, \hdots, \alpha_k$, whence the factor $1/(k+1)!$, as in \eqref{subintegrals}.
	
	\begin{exercise}
		Show that the evaluation of gauge currents on dual gauge fields passes to classes, thus yielding the following property. In particular, show why the requirement that $i_n G{_{p+1}}{^n}$ is a \v{C}ech $(p+1)$-chain of $\mathcal{U}_M$ in the definition of a dual gauge field transformation is necessary.
	\end{exercise}
	
	\begin{property} 
		If $\boldsymbol{A}_{[q]}$ is the class of the gauge $q$-current $A_{[q]}$ and $\boldsymbol{C}^{\{q\}}$ the class of the dual gauge $q$-field $C^{\{q\}}$, then
		\begin{equation}
			\boldsymbol{A}_{[q]} \eval{ \boldsymbol{C}^{\{q\}} }  \stackrel{\mathbb{R}/\mathbb{Z}}{=} A{^k}{_{q+1+k}} [C{_k}{^{q+1+k}}] \, ,
		\end{equation}
		is a well-defined $\mathbb{R}/\mathbb{Z}$-valued quantity that we refer to as the \textbf{quantum evaluation} of $\boldsymbol{A}_{[q]}$ on $\boldsymbol{C}^{\{q\}}$.
	\end{property}
	
	\noindent 
	
	\subsection{Dual quantum $(-1)$-fields}
	
	The notion of quantum complement was originally introduced for integers which take values in $\{0, \hdots , n-1\}$. Furthermore, the quantum complement of $n$ is $-1$ and we have already mentioned that it is possible to set $H_D^{-1}(M) = \mathbb{Z}$, this $\mathbb{Z}$-module thus still sitting in some exact sequences of type \eqref{short2} and \eqref{short3}. The same kind of extension is possible for both quantum currents and dual gauge fields according to
	\begin{equation}
		H_{-1}^D(M) = \mathbb{R}/\mathbb{Z} \quad \text{and} \quad D_H^{-1} = \mathbb{Z} \, .
	\end{equation}
	Accordingly, we have the following important property.
	\begin{property}
		There exists a dual quantum field $\boldsymbol{1}^{\{-1\}}$ such that
		\begin{equation}
			\bar{\partial} \boldsymbol{1}^{\{-1\}} = d_{-1} 1\,\mbox{ and }\, cl(\boldsymbol{1}^{\{-1\}}) = 1 \, ,
		\end{equation}
		any dual quantum $(-1)$-field thus being of the form 
		\begin{equation}
			\boldsymbol{C}^{\{-1\}} = N \boldsymbol{1}^{\{-1\}} \, ,
		\end{equation}
		with $N$ the integer so that $\bar{\partial} \boldsymbol{C}^{\{-1\}} = d_{-1} N$ and $cl (\boldsymbol{C}^{\{-1\}}) = N$.
	\end{property}
	
	\begin{proof}
		Let $\mu{_0}^{0}$ be a partition of unity subordinate to $\mathcal{U}_M$. This means that, for each $U_\alpha$ of $\mathcal{U}_M$, we have a smooth function $\mu^\alpha$ with compact support in $U_\alpha$ such that 
		\begin{equation}
			\label{summualpha}
			\partial \mu{_0}^{0} = \sum_\alpha \mu^\alpha = 1 \, .
		\end{equation} 
		Then, for $k \in \{ 1 , \hdots , n-1 \}$ we consider the $k$-forms 
		\begin{equation}
			\label{defunitdualfield}
			\mu{_k}{^{k,\alpha_0 \hdots \alpha_k}} = (-1)^k \sum_{\sigma \in S_{k+1}} (-1)^{|\sigma|} \, \mu^{\alpha_{\sigma(0)}} d \mu^{\alpha_{\sigma(1)}} \wedge \hdots \wedge d \mu^{\alpha_{\sigma(k)}} \, ,
		\end{equation}
		which has compact support in each non-empty $U_{\alpha_0 \hdots \alpha_k}$, $S_{k+1}$ denoting the group of permutations of $\{0, \dots , k \}$. This gives rise to the collections $\mu{_k}{^{k}}$. These collections quite obviously fulfill the following descent equations
		\begin{equation}
			\left\{ \begin{gathered}
				d \mu{_{k-1}}{^{k-1}} = \hat{\partial} \mu{_k}{^k}\, , \hfill \\
				d_{-1} 1 = \partial \mu{_0}{^0} =  \hat{\partial}\mu{_0}{^0} \, . \hfill    
			\end{gathered} \right.
		\end{equation}
		the second equation being nothing but \eqref{summualpha}. Now, we set
		\begin{equation}
			\nu{_n}{^{n,\alpha_0 \hdots \alpha_n}} = (-1)^n \sum_{\sigma \in S_{n+1}} (-1)^{|\sigma|} \, \mu^{\alpha_{\sigma(0)}} d \mu^{\alpha_{\sigma(1)}} \wedge \hdots \wedge d \mu^{\alpha_{\sigma(n)}} \, ,
		\end{equation}
		the collection of $n$-forms thus defined fulfilling
		\begin{equation}
			d \mu{_{n-1}}{^{n-1}} = \hat{\partial} \nu{_n}{^n} \, .
		\end{equation}
		The collection $\nu{_n}{^n}$ then fulfills
		\begin{equation}
			\hat{\partial} (i_n \nu{_n}{^n}) = \oint_M d \mu{_{n-1}}{^{n-1}} = 0 \, ,
		\end{equation}
		which means that $i_n \nu{_n}{^n}$ is a real \v{C}ech $n$-cycle of $\mathcal{U}_M$. Now, for any $\omega \in \Omega_\mathbb{Z}^n(M)$, and any \v{C}ech-de Rham descent of $\omega$ whose descent collections are $\omega^{(k,p-k)}$, we have
		\begin{align}
			\nonumber
			\oint_M \omega 
			&= \sum_\alpha \oint_M \mu{_0}{^{0,\alpha}} \wedge (\delta_{-1} \omega)_\alpha \\
			&= \sum_\alpha \oint_M \mu^\alpha d \omega^{(0,n-1)}_\alpha \nonumber \\
			&= - \sum_\alpha \oint_M (d \mu^\alpha) \wedge \omega^{(0,n-1)}_\alpha \nonumber \\
			&= \sum_\alpha \oint_M (\hat{\partial} \mu{_1}{^1})^\alpha \wedge \omega^{(0,n-1)}_\alpha \nonumber \\
			\oint_M \omega
			&= \frac{1}{2} \sum_{\alpha,\beta} \oint_M \mu{_1}{^{1,\alpha \beta}} \wedge (\delta \omega^{(0,n-1)})_{\alpha \beta} \nonumber
		\end{align}
		By repeating this procedure we obtain
		\begin{equation}
			\oint_M \omega = \frac{1}{n!} \sum_{\alpha_0, \dots , \alpha_n} \oint_M \mu{_n}{^{n,\alpha_0, \dots , \alpha_n}} \wedge (\delta \omega^{(n,0)})_{\alpha_0, \dots , \alpha_n} \, ,
		\end{equation}
		which yields
		\begin{align}
			\oint_M \omega &= \frac{1}{n!} \sum_{\alpha_0, \dots , \alpha_n} \oint_M \mu{_n}{^{n,\alpha_0, \dots , \alpha_n}} \wedge (d_{-1} u^{(n,-1)})_{\alpha_0, \dots , \alpha_n} \nonumber \\
			&= u^{(n,-1)} [ i_n\nu{_n}{^n} ] \in  \mathbb{Z} \, ,
		\end{align}
		where $u^{(n,-1)}$ is the \v{C}ech $n$-cocycle ending the \v{C}ech-de Rham descent of $\omega$ \cite{We52}. So, up to a real boundary, $i_n \nu{_n}{^n}$ is a \v{C}ech $n$-cycle of $\mathcal{U}_M$. In other words, there exists a real \v{C}ech $(n+1)$-chain $t_{(n+1,-1)}$ and a \v{C}ech $n$-cycle $m_{(n,-1)}$ such that
		\begin{equation}
			i_n \nu{_n}{^n} = m_{(n,-1)} + \partial  t_{(n+1,-1)} \, .
		\end{equation}
		For each non-empty $U_{\alpha_0 \hdots \alpha_{n+1}}$ we can find an $n$-form $\rho{_{n+1}}{^{n,\alpha_0 \hdots \alpha_{n+1}}}$ has compact support in $U_{\alpha_0 \hdots \alpha_{n+1}}$ and such that
		\begin{equation}
			t_{(n+1,-1)}^{\alpha_0 \hdots \alpha_{n+1}} = \oint_M \rho{_{n+1}}{^{n,\alpha_0 \hdots \alpha_{n+1}}} \, .
		\end{equation}
		Now, if we set
		\begin{equation}
			\mu{_n}{^n} = \nu{_n}{^n} - \hat{\partial} \rho{_{n+1}}{^n} \,
		\end{equation}
		then the collection thus defined satisfies
		\begin{equation}
			\left\{\begin{gathered}
				\hat{\partial} \mu{_n}{^n} = d \mu{_{n-1}}{^{n-1}} \, , \hfill \\
				i_n\mu{_n}{^n} = m_{(n,-1)} \, . \hfill
			\end{gathered} \right.
		\end{equation}
		Hence, the $(n+2)$-tuple
		\begin{equation}
			\label{descentmuproperty}
			\mu^{\{-1\}} = \left( \mu{_0}{^0} , \hdots , \mu{_n}{^n} , m_{(n,-1)}  \right) \, , 
		\end{equation}
		has the property of a dual gauge field. With our conventions, it is a dual gauge $(-1)$-field, the dual quantum field of which is denoted by $\boldsymbol{1}^{\{-1\}}$. By definition of a partition of unity, we have $\bar{\partial}\boldsymbol{1}^{\{-1\}} = d_{-1} 1$.
		
		Let us consider $\omega \in \Omega_\mathbb{Z}^n(M)$ together with one of its \v{C}ech-de Rham descents whose components are denoted by $\omega^{(k,n-k-1)}$ and the corresponding \v{C}ech $n$-cocycle $u^{(n,-1)}$. From relation \eqref{descentmuproperty} we deduce that if $\omega$ is a normalised volume form of $M$ then $u^{(n,-1)} [m_{(n,-1)}] = 1$ which implies that the homology class of $m_{(n,-1)}$ is $1 \in H_n(M)$. In other words, $cl(\boldsymbol{1}^{\{-1\}})$ is the homology class of $M$ itself, and hence $cl(\boldsymbol{1}^{\{-1\}}) = 1$. 
		
		The fact that $\boldsymbol{C}^{\{-1\}} = N \boldsymbol{1}^{\{-1\}}$ for some integer $N$ is then a straightforward consequence of the previous result.
		
		Let $\tilde{\mu}^{\{-1\}}$ be the dual gauge $(-1)$-field defined by another partition of unitity subordinate to $\mathcal{U}_M$. We set $\Delta \mu = \tilde{\mu}^{\{-1\}} - \mu^{\{-1\}}$. The last component of $\Delta \mu$ is the difference of two \v{C}ech cocycle whose class if $1$. Hence, this last component if a \v{C}ech boundary $\partial g_{(n+1,-1)}$. There always exists a collection $G{_{n+1}}{^n}$ whose elements have local compact support such that $g_{(n+1,-1)} = i_n G{_{n+1}}{^n}$. Then, $\Delta \mu$ is equivalent to
		\begin{equation}
			\left( \Delta \mu{_0}{^0} , \dots , \Delta \mu{_{n-1}}{^{n-1}} - \partial G{_{n+1}}{^n} , 0 \right) \, .
		\end{equation}
		The penultimate component of this dual gauge field is then a collection whose elements are exact forms ( closed forms in contractible open sets) so that this dual gauge field can be written as
		\begin{equation}
			\left( \Delta \mu{_0}{^0} , \dots , d G{_{n}}{^{n-1}} , 0 \right) \, .
		\end{equation}
		The penultimate component of this dual gauge field can then be eliminated with the help of a dual gauge transformation. By repeating this procedure, we end up with a dual gauge field which is equivalent to $\Delta \mu$ and which is of the form
		\begin{equation}
			\left( \Delta \mu{_0}{^0} \pm \hat{\partial} G{_1}{^0} , 0 ,  \dots , 0 \right) \, .
		\end{equation}
		Due to the descent equations dual gauge fields must satisfy, we deduce that $\Delta \mu{_0}{^0} \pm \hat{\partial} G{_1}{^0}$ is a collection of local closed functions which must have local compact support and, as such, are all zero. Hence, $\Delta \mu{_0}{^0} \pm \hat{\partial} G{_1}{^0} = 0$, and we conclude that $\Delta \mu$ is equivalent to zero and hence that $\boldsymbol{\tilde{\mu}}^{\{-1\}} = \boldsymbol{1}^{\{-1\}}$, as claimed.
	\end{proof}
	
	Similarly, the previous exact sequences can be extended to the case where $p = n$ with
	\begin{equation}
		H_D^n(M) \simeq \mathbb{R}/\mathbb{Z} \quad , \quad H_n^D(M) \simeq \mathbb{Z} \quad , \quad D_H^n(M) \simeq \mathbb{R}/\mathbb{Z} \, .
	\end{equation}
	We have previously excluded $-1$ and $n$ in the range of quantum complement integers because these particular degrees do not correspond to quantum fields (or currents) of the BF theory. For instance, the smooth Lagrangian $\boldsymbol{B}^{[q]} \star \boldsymbol{A}^{[p]}$ is not a field of the $\U1$ BF theory although it belongs to $H_D^n(M)$. Nevertheless, extending the notion of quantum complement to the range $\{-1, \hdots , n\}$ will turn out to be relevant when trying to extend the DB product between quantum fields to a DB product between quantum currents and quantum fields as we will see it now.
	
	\subsection{Extended DB product}
	
	The most natural way to extend the DB product between quantum fields to a product between quantum currents and quantum fields is to mimic relation \eqref{productcurrentform} which defines the exterior product of a de Rham current with a form. Thus, if $\boldsymbol{B}_{[b]}$ is a quantum $b$-current and $\boldsymbol{A}^{[a]}$ a quantum $a$-field, $\boldsymbol{B}_{[b]} \star \boldsymbol{A}^{[a]}$ denotes the quantum $c$-current fulfilling
	\begin{equation}
		\label{quantumcurrentDBfields}
		\left( \boldsymbol{B}_{[b]} \star \boldsymbol{A}^{[a]} \right) \eval{\boldsymbol{C}^{\{c\}}} = \boldsymbol{B}_{[b]} \eval{\boldsymbol{A}^{[a]} \DBc \boldsymbol{C}^{\{c\}}} \, ,
	\end{equation}
	for any dual quantum $c$-field $\boldsymbol{C}^{\{c\}}$. Of course, when $\boldsymbol{B}_{[b]}$ derives from a quantum $(n-b-1)$-field $\boldsymbol{B}^{[n-b-1]}$ we want the extension $\boldsymbol{B}_{[b]} \star \boldsymbol{A}^{[a]}$ to coincide with $\boldsymbol{B}^{[n-b-1]} \star \boldsymbol{A}^{[a]}$ which is a quantum $(n-b+a)$-field and hence a quantum $(b-a-1)$-current. Thus, we have $c=b-a-1$ in relation \ref{quantumcurrentDBfields}.
	
	Since we want to extend the DB product in order to obtain an extension of the $\U1$ BF Lagrangian $\boldsymbol{B}^{[q]} \star \boldsymbol{A}^{[p]}$, $p$ and $q$ are quantum complement integers and relation \eqref{quantumcurrentDBfields} takes the form
	\begin{equation}
		\label{currentDBfieldstop}
		\left( \boldsymbol{B}_{[p]} \star \boldsymbol{A}^{[p]} \right) \eval{\boldsymbol{C}^{\{-1\}}} = \boldsymbol{B}_{[p]} \eval{\boldsymbol{A}^{[p]} \DBc \boldsymbol{C}^{\{-1\}}} \, .
	\end{equation}
	In the previous subsection, we saw that there exists an integer $n$ such that $\boldsymbol{C}^{\{-1\}} = n \boldsymbol{1}^{\{-1\}}$, with $\boldsymbol{1}^{\{-1\}}$ the generator of the free $\mathbb{Z}$-module $D_H^{-1}(M)$. Eventually, we have the defining relation
	\begin{equation}
		\left( \boldsymbol{B}_{[p]} \star \boldsymbol{A}^{[p]} \right) \eval{\boldsymbol{1}^{\{-1\}}} = \boldsymbol{B}_{[p]} \eval{\boldsymbol{A}^{[p]} \DBc \boldsymbol{1}^{\{-1\}}} \, ,
	\end{equation}
	which requires to define $\boldsymbol{A}^{[p]} \DBc \boldsymbol{1}^{\{-1\}}$. The product $\DBc$ is not the usual DB product, and is therefore not defined by \eqref{starproduct}. In fact, relation \eqref{quantumcurrentDBfields} looks very much like the one which defines the cap product. Indeed, if $a^{(k)}$ is a \v{C}ech $k$-cochain and $m_{(k+l)}$ is a \v{C}ech $(k+l)$-chain, then the cap product of $a^{(k)}$ with $m_{(k+l)}$ is the \v{C}ech $l$-chain $a^{(k)} \smallfrown m_{(k+l)}$ such that
	\begin{equation}
		(b^{(l)} \smallsmile a^{(k)}) [ m_{(k+l)} ] = b^{(l)} [ a^{(k)} \smallfrown m_{(k+l)} ] \, ,
	\end{equation}
	for any $l$-cochain $b^{(l)}$, this relation extending to classes. In fact, we know that we can associate with $\boldsymbol{A}^{[p]}$, resp. $\boldsymbol{1}^{\{-1\}}$, an element of $H_{p+1}(M)$, resp. $H_n(M)$. Hence, it is natural so imagine that the operation $\DBc$ is related to the cap product. This is why we refer to $\DBc$ as the \textbf{DB-cap product}. 
	
	The elements of the collection defining the $l$-chain $a^{(k)} \smallfrown m_{(k+l)}$ are
	\begin{equation}
		(a^{(k)} \smallfrown m_{(k+l)})^{\alpha_0 \hdots \alpha_{l}} = \frac{(l+1)!}{(k+l+1)!} \sum_{\beta_1 \hdots \beta_{k}} a^{(k)}_{\alpha_l \beta_1 \hdots \beta_k} \, m_{(k+l)}^{\alpha_0 \hdots \alpha_l \beta_1 \hdots \beta_k} \, ,
	\end{equation}
	and for $\omega^{(k,r)}$ a collection (subordinate to $\mathcal{U}_M$) of $r$-forms and $\rho{_{k+l}}{^s}$ a collection (subordinate to $\mathcal{U}_M$) of $s$-forms the \v{C}ech cap product can be extended according to
	\begin{equation}
		(\omega^{(k,r)} \Wc \rho{_{k+l}}{^s})^{\alpha_0 \hdots \alpha_{l}} =  \frac{(l+1)!}{(k+l+1)!} \sum_{\beta_1 \hdots \beta_k} \omega^{(k,r)}_{\alpha_l \beta_1 \hdots \beta_k} \wedge \rho{_{k+l}}{^{s,\alpha_0 \hdots \alpha_l \beta_1 \hdots \beta_k}} \, .
	\end{equation}
	
	The DB-cap product can now be defined as follows.
	
	\begin{definition}
		Let $A^{[p]} = \left( A^{(0,p)} , A^{(1,p-1)} , \hdots , A^{(p,0)} , a^{(p+1,-1)} \right)$ be a gauge $p$-field. The DB-cap product of $A^{[p]}$ with a dual gauge $(-1)$-field $\mu^{\{-1\}}$ deriving from a partition of unity of $M$ subordinate to $\mathcal{U}_M$ is the dual gauge $p$-field $A^{[p]} \DBc \mu^{\{-1\}}$ the components of which are
		\begin{align}
			\left\{ \begin{aligned}
				&(A^{[p]} \DBc \mu^{\{-1\}}){_k}{^{p+1+k}} 
				= d A^{(0,p)} \Wc \mu{_k}{^k} \quad \mbox{for $k \in \{0, \hdots , q-1 \}$} \, , \\
				&(A^{[p]} \DBc \mu^{\{-1\}}){_q}{^n} 
				= d A^{(0,p)} \Wc \mu{_q}{^q} 
				+ \partial Q{_{q+1}}{^n} \, ,\\
				&(A^{[p]} \DBc \mu^{\{-1\}})_{(q,-1)} = a^{(p+1,-1)} \smallfrown m_{(n,-1)} \, ,
			\end{aligned} \right.
		\end{align}
		with $q$ the quantum complement of $p$ and
		\begin{equation}
			Q{_{q+1}}{^n} = \sum_{l=0}^{p} (-1)^{l+1} A^{(l,p-l)} \Wc \mu{_{l+q+1}}{^{l+q+1}} \, .
		\end{equation}
	\end{definition}
	
	\begin{exercise}
		Show that $A^{[p]} \DBc \mu^{\{-1\}}$ is a dual gauge $p$-field of $\mathcal{U}_M$.
	\end{exercise}
	
	As already mentioned, if $\boldsymbol{B}_{[b]}$ is regular, i.e., if it derives from a quantum $(n-b-1)$-field $\boldsymbol{B}^{[n-b-1]}$, then $\boldsymbol{B}_{[b]} \star \boldsymbol{A}^{[a]}$ must coincide with $\boldsymbol{B}^{[n-b-1]} \star \boldsymbol{A}^{[a]}$. Thus, the extended DB product can also be defined by simply extending \eqref{starproduct}.
	
	\begin{definition}\label{DBpairing}
		There is a natural $\mathbb{R}/\mathbb{Z}$-valued pairing
		\begin{equation}
			\star : H^D_p(M) \times H_D^p(M) \to H^D_{-1}(M) \, ,
		\end{equation}
		which can be realised at the level of gauge representatives in the following way. Let $B_{[p]} = \left(  B{^0}{_{p+1}} , \hdots , B{^{q}}{_n} , b^{(n-p,-1)} \right)$ be a gauge $p$-current and $A^{[p]} = \left( A^{(0,p)} , A^{(1,p-1)} , \hdots , A^{(p,0)} , a^{(p+1,-1)} \right)$ a gauge $p$-field. Then, the components of the gauge $(-1)$-current $B_{[p]} \star A^{[p]}$ are
		\begin{align}
			\label{extendedstarproduct}
			\nonumber
			\left( B_{[p]} \star A^{[p]} \right)&^{k}{_{k,\alpha_0 \hdots \alpha_k}} \hfill \\
			&= \left\{ \begin{gathered}
				B{^k}{_{p+1+k,\alpha_0 \hdots \alpha_k}} \wedge dA_{\alpha_k}^{(0,p)} \hspace{0.2cm} \mbox{ for } \, 0 \leq k \leq q = n-p-1 \, , \hfill \\
				b_{\alpha_0 \hdots \;\alpha_{n - p}}^{(n-p,-1)} A_{\alpha_{n-p} \hdots \;\alpha_k}^{(k-n+p,n-k)} \hspace{0.35cm} \mbox{ for } \, n - p = q+1 \leq k \leq n\, , \hfill \\
				b_{\alpha_0 \hdots \alpha_{n-p}}^{(n-p,-1)} a_{\alpha_{n-p} \hdots \alpha_{n+1}}^{\left( p+1, -1 \right)} \hspace{0.72cm}\mbox{ for } \, k = n + 1 \, , \hfill \\ 
			\end{gathered}  \right.
		\end{align}
		where $q$ is the quantum complement of $p$.
	\end{definition}
	
	The components of $B_{[p]} \star A^{[p]}$ as defined by \eqref{extendedstarproduct} are quite obviously cup products, which is consistent with the fact that the components of $A^{[p]} \DBc \mu^{\{-1\}}$ are cap products.
	
	\begin{exercise}
		Show that	
		\begin{equation}
			\left( B_{[p]} \star A^{[p]} \right) [ \mu^{\{-1\}} ] = B_{[p]} [ A^{[p]}  \DBc \mu^{\{-1\}} ] \, .
		\end{equation}
	\end{exercise}
	
	\noindent
	
	The following property provides another way to define $B_{[p]} \star A^{[p]}$ when $p$ and $q$ are quantum complement integers.
	
	\begin{property}
		Let $B_{[p]}$ and $A^{[p]}$ be a gauge $p$-current and a gauge $p$-field, respectively. Then, the gauge $(-1)$-current $B_{[p]} \star A^{[p]}$ is gauge equivalent to a gauge $(-1)$-current of the form
		\begin{equation}
			\label{reducedDBprocut}
			\left( 0 , \hdots , d_{-1}^\dagger r^{(n,-1)}, 0 \right) \, ,
		\end{equation}
		with $r^{(n,-1)}$ a real $n$-cocycle of $\mathcal{U}_M$. The quantum $(-1)$-current $\boldsymbol{B}_{[p]} \star \boldsymbol{A}^{[p]}$ is canonically associated with an element of $H^n(M,\mathbb{R})/H^n(M) \simeq \mathbb{R}/\mathbb{Z}$.
	\end{property}
	
	\begin{proof}
		First of all, let us recall that $H_{-1}^D(M) \simeq H^n(M,\mathbb{R}/\mathbb{Z}) \simeq \mathbb{R}/\mathbb{Z}$. Hence, there always exists an element of $H^n(M,\mathbb{R}/\mathbb{Z})$ which is associated with the quantum $(-1)$-current $\boldsymbol{B}_{[p]} \star \boldsymbol{A}^{[p]}$, the gauge $(-1)$-current \eqref{reducedDBprocut} thus being a representative of this element of $H^n(M,\mathbb{R}/\mathbb{Z})$. Now, let us consider the $(-1)$-current $B_{[p]} \star A^{[p]}$ as defined by relations \eqref{extendedstarproduct}. The first component of this gauge current is the collection $B^0_{p+1} \wedge d A^{(0,p)}$, whose elements are de Rham $0$-currents, one in each non-empty $U_\alpha$. For dimensional reason, these local de Rham $0$-currents are $d^\dagger$-closed, and since each $U_\alpha$ is contractible, by applying Poincaré lemma, we conclude that these local currents are all $d^\dagger$-exact. Thus, there exists $H^0_1$ such that $B^0_{p+1} \wedge dA^{(0,p)} = d^\dagger H^0_1$. By adding to $B_{[p]} \star A^{[p]}$ the gauge current transformation $\left(-d^\dagger H^0_1, - \hat{\delta} H^0_1 ,0 , \hdots , 0 \right)$, we obtain an equivalent gauge $(-1)$-current whose first component is zero. Now, due to the descent equations, the elements forming the second component of this new gauge $(-1)$-current are necessarily $d^\dagger$-closed de Rham $1$-currents in the non-empty intersections $U_{\alpha \beta}$. As these intersections are also contractible, we deduce that all these closed currents are actually $d^\dagger$-exact. By adding to the new gauge $(-1)$-current an obvious gauge current transformation, we obtain a second new gauge $(-1)$-current, whose first two components are zero. If we go on this way, we eventually obtain a gauge $(-1)$-current of the form $\left(0 , \hdots , R^n_n , b^{(n-p,-1)} \smallsmile a^{(p+1,-1)} \right)$. Once more, the descent equations imply that $d^\dagger R^n_n = 0$. However, since we are now with local de Rham $n$-currents, this constraint only means that there exists a collection $s^{(n,-1)}$ of real numbers such that $R^n_n = d_{-1}^\dagger s^{(n,-1)}$. Thus, our previous gauge $(-1)$-current reads $\left(0 , \hdots , d_{-1}^\dagger s^{(n,-1)} , b^{(n-p,-1)} \smallsmile a^{(p+1,-1)} \right)$. In a final effort, we can notice that, for dimensional reasons, the $(n+1)$-cocycle $b^{(n-p,-1)} \smallsmile a^{(p+1,-1)}$ is necessarily a coboundary, i.e., $b^{(n-p,-1)} \smallsmile a^{(p+1,-1)} = \delta u^{(n,-1)}$ for some \v{C}ech $n$-cochain $u^{(n,-1)}$. A last obvious gauge transformation finally yields the gauge $(-1)$-current $\left(0 , \hdots , d_{-1}^\dagger s^{(n,-1)} - \delta u^{(n,-1)} , 0 \right)$, which is of the form \eqref{reducedDBprocut} if we set $r^{(n,-1)} = d_{-1}^\dagger s^{(n,-1)} - \delta u^{(n,-1)}$. Moreover, $\delta r^{(n,-1)} = 0$ by the descent equations. The last statement of the property is quite obvious.
	\end{proof}
	
	Of course, the above property applies to the DB product of quantum fields. In fact, it even extends beyond the abelian framework. For instance, the $\SU2$ Chern-Simons Lagrangian is also a quantum $(-1)$-current (or a quantum $n$-field if smooth fields are used) \cite{CS73}, even if it is not quadratic in the fields as in the abelian case. In any case, the dependence of the real $n$-cocycle $r^{(n,-1)}$ on fields, whether abelian or not, is unfortunately only implicit. The interest of the above property is that evaluation of $B_{[p]} \star A^{[p]}$ on a dual gauge $(-1)$-field $C^{\{-1\}}$ will reduce to the evaluation of the real cocycle $r^{(n,-1)}$ on the (integer) cycle associated with $C^{\{-1\}}$. The drawback of this approach is that $r^{(n,-1)}$ depends on $A^{[p]}$ and $B_{[p]}$ in a very cryptic way (which is even more cryptic in the non-abelian CS case).
	
	Eventually, the evaluation of the DB product of a quantum $p$-current with a quantum $p$-field can be obtained as follows.
	
	\begin{property}
		Let $\boldsymbol{B}_{[p]}$ be a quantum $p$-current and $\boldsymbol{A}^{[p]}$ a quantum $p$-field. Let $r^{(n,-1)}$ be a real \v{C}ech $n$-cocycle of $\mathcal{U}_M$ such that the gauge $(-1)$-current $R_{[-1]} = \left( 0 , \hdots , d_{-1}^\dagger r^{(n,-1)}, 0 \right)$ is a representative of $\boldsymbol{B}_{[p]} \star \boldsymbol{A}^{[p]}$, and let $M_{\{n\}} = \left( M_{(0,n)} , \hdots , M_{(n,0)} , m_{(n,-1)} \right)$ be a decomposition of $M$ subordinate to $\mathcal{U}_M$. Then, we have
		\begin{equation}
			\label{evalDBon1}
			(\boldsymbol{B}_{[p]} \star \boldsymbol{A}^{[p]}) \eval{\boldsymbol{1}^{\{-1\}}} \stackrel{\mathbb{R}/\mathbb{Z}}{=} (-1)^n r^{(n,-1)} \left[ m_{(n,-1)} \right] \, ,
		\end{equation} 
	\end{property}
	
	\begin{exercise}
		Show relation \eqref{evalDBon1}.
	\end{exercise}
	
	Let us point out that the extension of the DB product to a product between quantum $p$-current and quantum $p$-fields is well-defined thanks to the use of a partition of unity subordinate to $\mathcal{U}_M$, such a partition of unity inducing the dual quantum $(-1)$-field $\boldsymbol{1}^{\{-1\}}$ which generates $D_H^{-1}(M) = \mathbb{Z}$. In fact, the representative of $\boldsymbol{1}^{\{-1\}}$ induced by such a partition of unity can be seen as a ``smooth decomposition" of $M$ subordinate to $\mathcal{U}_M$. If we try to use a polyhedral decomposition when defining the evaluation $(\boldsymbol{B}_{[p]} \star \boldsymbol{A}^{[p]}) \eval{\boldsymbol{1}^{\{-1\}}}$, we would immediately face the problem of defining the exterior product of local de Rham currents, those coming from the local representative of $\boldsymbol{B}_{[p]}$ and those coming from the polyhedral decomposition. The use of a smooth decomposition inferred by a partition of unity subordinate to $\mathcal{U}_M$ prevents these possible issues. Nevertheless, there are many cases where the product of de Rham currents, i.e., the product of distributions, is well-defined \cite{dR55}. In analogy with these cases, we can say that the use of a smooth decomposition provides a regularization which gives a meaning to the evaluation $(\boldsymbol{B}_{[p]} \star \boldsymbol{A}^{[p]}) \eval{\boldsymbol{1}^{\{-1\}}}$, even if we represent $\boldsymbol{1}^{\{-1\}}$ with a polyhedral decomposition of $M$. In fact, this use of a smooth decomposition of $M$ provides an idea of how to extend the DB product to a product of quantum currents of quantum complement dimensions, of course in some specific cases and not in general. This will be discussed in the next section. When there is no risk of confusion, we denote by $\boldsymbol{1}$ instead of $\boldsymbol{1}^{\{-1\}}$ the generator of $D_H^{-1}(M)$. 
	
	\vspace{0.5cm}
	
	Before discussing Pontryagin duality, let us give here a reminder about the terminology and the notations we met so far.
	\begin{itemize}
		
		\item[-] If $p \in \left\{ 0 , \hdots , n-1 \right\}$, then $q=n-p-1$ is the quantum complement of $p$ in $M$;
		
		\item[-] A classical $p$-field of $M$, i.e., a $p$-form, is generically denoted by $\chi^{(p)}$, its class in $\Omega^p(M)/\Omega_\mathbb{Z}^p(M)$ by $\bar{\chi}^{(p)}$;
		
		\item[-] A classical $p$-current of $M$, i.e., a de Rham $q$-current, is generically denoted by $\chi_{(q)}$, and its class in $\Omega_q(M)/\Omega^\mathbb{Z}_q(M)$ by $\bar{\chi}_{(q)}$;
		
		\item[-] A gauge $p$-field of $M$ is generically denoted by $A^{[p]}$, the gauge $p$-field defined by a classical $p$-field $\chi^{(p)}$ being then denoted by $\chi^{[p]}$;
		
		\item[-] A gauge $q$-current of $M$ is generically denoted by $A_{[q]}$, the gauge $q$-current defined by a classical $q$-current $\chi_{(q)}$ being then denoted by $\chi_{[q]}$;
		
		\item[-] A gauge $p$-field transformation is generically denoted by $G^{[p]}$;
		
		\item[-] A gauge $q$-current transformation is generically denoted by $G_{[p]}$;
		
		\item[-] A quantum $p$-field of $M$ is generically denoted by $\boldsymbol{A}^{[p]}$ and the $\mathbb{Z}$-module of quantum $p$-fields by $H_D^p(M)$; the quantum $p$-field defined by a classical $p$-field $\chi^{(p)}$ and which is the image of $\bar{\chi}^{(p)}$ in $H_D^p(M)$ is then denoted by $\boldsymbol{\chi}^{[p]}$;		
		
		\item[-] A quantum $q$-current of $M$ is generically denoted by $\boldsymbol{A}_{[q]}$ and the $\mathbb{Z}$-module of quantum $q$-currents by $H^D_q(M)$; the quantum $q$-current defined by a classical $q$-current $\chi_{(q)}$ and which is the image of $\bar{\chi}_{(q)}$ in $H^D_q(M)$ is then denoted by $\boldsymbol{\chi}_{[q]}$;
		
		\item[-] The set of quantum $(-1)$-fields, resp. $(-1)$-currents, is identified with $\mathbb{Z}$, resp. $\mathbb{R}/\mathbb{Z}$; 
		
		\item[-] A dual gauge $p$-field is generically denoted by $C^{\{p\}}$;
		
		\item[-] A dual gauge $p$-field transformation is generically denoted by $G^{\{p\}}$; 
		
		\item[-] A dual quantum $p$-field is generically denoted by $\boldsymbol{C}^{\{p\}}$ and the $\mathbb{Z}$-module of dual quantum $p$-fields by $D_H^p(M)$;
		
		\item[-] The set of dual quantum $(-1)$-fields is identified with $\mathbb{Z}$; it is the natural $\mathbb{Z}$-module on which quantum $(-1)$-currents are evaluated.

	\end{itemize}
	
	\section{Pontryagin duality}
	\label{section_pontryagin_duality}
	
	The Pontryagin dual of a locally compact abelian group $G$ is the group $\Hom_{\mathbb{Z}} (G,\mathbb{R}/\mathbb{Z})$ of continuous group homomorphisms from $G$ to $\mathbb{R}/\mathbb{Z}$. To lighten notations, we denote $G^\star$ the Pontryagin dual of $G$. 
	
	\subsection{Dual exact sequences}
	
	As explained in \cite{HLZ}, $H_D^p(M)$ can be endowed with a topology inherited from the $\mathcal{C}^\infty$-topology of $\Omega_\mathbb{Z}^{p+1}(M)$ and the standard topology of the torus part of $H^p(M,\mathbb{R}/\mathbb{Z})$. This topology allows us to consider $H_D^p(M)^\star$, the Pontryagin dual of $H_D^p(M)$. We have already met two subsets of $H_D^p(M)^\star$.
	
	\begin{property} Firstly, let $z_p$ be a $p$-cycle of $M$. This cycle defines an element of $H_D^p(M)^\star$, also denoted by $z_p$, according to
		\begin{equation}
			\label{injectionZintoH}
			\forall\, \boldsymbol{A}^{[p]} \in H_D^p(M) , \quad z_p \eval{\boldsymbol{A}^{[p]}} = \oint_{z_p} \boldsymbol{A}^{[p]} \, . \hspace{2,5cm}
		\end{equation}
		Thus, we have the trivial injection
		\begin{equation}
			\label{inclusion1InH}
			Z_p(M) \hookrightarrow H_D^p(M)^\star \, .
		\end{equation}
		
		\noindent Secondly, let $\boldsymbol{B}^{[q]}$ be a quantum $q$-field. This quantum $q$-field defines an element of $H_D^p(M)^\star$, denoted by $\boldsymbol{B}_{\{p\}}$, according to
		\begin{equation}
			\label{injectionHintoH'}
			\forall\,\boldsymbol{A}^{[p]} \in H_D^p(M) , \quad \boldsymbol{B}_{\{p\}} \eval{\boldsymbol{A}^{[p]}} = \oint_M \boldsymbol{B}^{[q]} \star \boldsymbol{A}^{[p]} \, . \hspace{2.5cm}
		\end{equation}
		Thus, we have the injection
		\begin{equation}
			\label{inclusion2InH}
			H_D^q(M) \hookrightarrow H_D^p(M)^\star \, .
		\end{equation}
		
		Each of the above injections is referred to as an inclusion, the latter one being the analog of the injection of the space of $p$-forms into the space of de Rham $(n-p)$-currents which itself is a generalization of the inclusion of the space of smooth functions into the space of distributions.
		
	\end{property} 
	
	Since quantum fields have gauge fields as local representatives, inclusion \eqref{inclusion2InH} highly suggests that the elements of $H_D^p(M)^\star$ might admit local representatives too. However, inclusion \eqref{inclusion1InH} also suggests that these representatives cannot be gauge currents. Indeed, the integral appearing in \eqref{injectionZintoH} is defined with the help of a decomposition $z_{\{p\}} = \left(c_{\left( 0,p \right)} , \hdots\right.$ $ \left.\hdots,c_{\left( p,0 \right)} , c_{\left( p,-1 \right)} \right)$ of $z_p$ which seems to play the role of a local representative of $z_p \in H_D^p(M)^\star$. Now, even if we see the components of $z_{\{p\}}$ as local de Rham currents, these components fulfill descent equations \eqref{decompcycle} which are obviously not those of a gauge current. Moreover, all the local de Rham currents appearing in $z_{\{p\}}$ have support constraint unlike the local currents forming the components of a gauge current. Nevertheless, the extension of the DB product introduced in the last part of the previous subsection could be interpreted as a naive way to define a $\mathbb{R}/\mathbb{Z}$-valued evaluation of quantum $p$-currents along quantum $p$-fields, thus suggesting that there exists some relation between $H^D_p(M)$ and $H_D^p(M)^\star$.
	
	By applying the contravariant functor $\Hom_{\mathbb{Z}}(\, \cdot \,,\mathbb{R}/\mathbb{Z})$ to exact sequence \eqref{short3}, we obtain the following dual exact sequence 
	\begin{equation}
		\label{sequencePont1}
		0 \rightarrow \Omega_\mathbb{Z}^{q+1}(M)^\star \xrightarrow{\bar{d}^\star} H_D^q(M)^\star \xrightarrow{i^\star} H^{p+1}(M) \rightarrow 0 \, .
	\end{equation}
	Firstly, $H^q(M,\mathbb{R}/\mathbb{Z})^\star \cong H_q(M)$ since, by the Universal Coefficient Theorem, we find $H^q(M,\mathbb{R}/\mathbb{Z}) \cong \Hom_{\mathbb{Z}} (H_q(M),\mathbb{R}/\mathbb{Z})$, and from Poincaré duality we eventually obtain $H^q(M,\mathbb{R}/\mathbb{Z})^\star \cong H^{n-q}(M) = H^{p+1}(M)$,  $p$ being the quantum complement of $q$. The homomorphisms $\bar{d}^\star$ and $i^\star$ are the Pontryagin dual\textcolor{red}{s?} of $\bar{d}$ and $i$, respectively. The map $\bar{d}^\star$ is defined by setting that 
	\begin{equation}
		\label{dualcurv}
		\bar{d}^\star(\bar{\chi}) \eval{\boldsymbol{B}^{[q]}} = \bar{\chi} [\bar{d} \boldsymbol{B}^{[q]}] = \bar{\chi} [F(\boldsymbol{B}^{[q]})] \, .
	\end{equation}
	for any $\bar{\chi} \in \Omega_\mathbb{Z}^{q+1}(M)^\star$ and any $\boldsymbol{B}^{[q]} \in H_D^q(M)$. The notation $\bar{\chi}$ stems from isomorphism \eqref{iso1} below. The morphism $\bar{d}^\star$ is injective since if $\bar{d}^\star (\bar{\chi}) = 0$ then $\bar{\chi} (F) = 0$ for any $F \in \Omega^{q+1}_\mathbb{Z}(M)$, which implies that $\bar{\chi} = 0$. Similarly, $i^\star$ is defined by setting
	\begin{equation}
		i^\star (\boldsymbol{\Phi}) [\boldsymbol{\nu}] = \boldsymbol{\Phi} \eval{i(\boldsymbol{\nu})} \, .
	\end{equation}
	for any $\boldsymbol{\Phi} \in H_D^q(M)^\star$ and any $\boldsymbol{\nu} \in H^q(M,\mathbb{R}/\mathbb{Z})$. From exact sequence \eqref{short2} we know that for any $\boldsymbol{a} \in H^{p+1}(M)$ there exists a quantum $p$-fields $\boldsymbol{A}^{[p]}$ such that $cl(\boldsymbol{A}^{[p]}) = \boldsymbol{a}$. Now, since we have inclusion \eqref{inclusion2InH}, we know that $\boldsymbol{A}^{[p]}$ defines an element of $H_D^q(M)^\star$. Hence, $i^\star$ is surjective. Finally, we have $i^\star \circ \bar{d}^\star = (\bar{d} \circ i)^\star = 0$ since $\bar{d} \circ i = 0$. This ends to explain why the above sequence is exact. The right-hand part of exact sequence \eqref{sequencePont1} looks a lot like the right-hand side of exact sequence \eqref{currentshort1}, thus consolidating the idea of a relation between $H_D^q(M)^\star$ and $H_p^D(M)$. 
	
	The Pontryagin dual of exact sequence \eqref{short2} yields the dual exact sequence
	\begin{equation}
		\label{sequencePont2}
		0 \rightarrow H^{p}(M,\mathbb{R}/\mathbb{Z}) \xrightarrow{cl^\star} H_D^q(M)^\star \xrightarrow{\bar{\delta}^\star} \left( \frac{\Omega^q(M)}{\Omega_\mathbb{Z}^q(M)} \right)^\star \rightarrow 0 \, ,
	\end{equation}
	with $H^{q+1}(M)^\star \cong H^{p}(M,\mathbb{R}/\mathbb{Z})$ since $H^q(M,\mathbb{R}/\mathbb{Z})^\star \cong H^{p+1}(M)$, $p$ and $q$ being quantum complement to each other and the morphisms $cl^\star$ and $\bar{\delta}^\star$ being the Pontryagin duals of $cl$ and $\bar{\delta}$, respectively.
	
	\begin{exercise}
		Show that sequence \eqref{sequencePont2} is exact.
	\end{exercise}
	
	In \eqref{sequencePont2} we recover some elements of exact sequence  \eqref{currentshort2} in which $H_p^D(M)$ is sitting. Before definitively clarifying the relation between $H_p^D(M)$ and $H_D^q(M)^\star$, let us exhibit a property which will serve this purpose.
	
	\begin{property}
		\begin{equation}
			\label{iso1}
			\frac{\Omega_{q+1}(M)}{\Omega_{q+1}^\mathbb{Z}(M)} \cong \Omega_\mathbb{Z}^{q+1}(M)^\star \, .
		\end{equation}
	\end{property}
	
	\begin{proof}
		The above isomorphism can be deduced from the comparison of the following exact sequence
		\begin{equation}
			\label{sequencePont3}
			0 \rightarrow (F^{q+1})^\star \xrightarrow{cl^\star} \Omega_\mathbb{Z}^{q+1}(M)^\star \xrightarrow{d^\star} \left( \frac{\Omega^q(M)}{\Omega_\circ^q(M)} \right)^\star \rightarrow 0\, ,
		\end{equation}
		itself obtained as the Pontryagin dual of
		\begin{equation}
			0 \rightarrow \left( \frac{\Omega^q(M)}{\Omega_\circ^q(M)} \right) \xrightarrow{d} \Omega_\mathbb{Z}^{q+1}(M) \xrightarrow{cl}  F^{q+1} \rightarrow 0\, ,
		\end{equation}
		with the exact sequence
		\begin{equation}
			\label{sequtopont}
			0 \to \frac{\Omega^\circ_{q+1}(M)}{\Omega^\mathbb{Z}_{q+1}(M)} \to \frac{\Omega_{q+1}(M)}{\Omega^\mathbb{Z}_{q+1}(M)} \to \frac{\Omega_{q+1}(M)}{\Omega^\circ_{q+1}(M)} \to 0 \, ,
		\end{equation}
		which is the natural extension of \eqref{short4} to de Rham currents. The comparison relies on the following points. Firstly, $\Omega^\circ_{q+1}(M)/\Omega^\mathbb{Z}_{q+1}(M) \cong (\mathbb{R}/\mathbb{Z})^{b_{q+1}} \cong (F^{q+1})^\star$. Secondly, $\Omega_{q+1}(M)/\Omega^\mathbb{Z}_{q+1}(M) \hookrightarrow \Omega_\mathbb{Z}^{q+1}(M)^\star$ in a quite obvious way. Thirdly, from $(d \Omega^q(M))^\star \cong d \Omega_{q+1}(M)$ \cite{HLZ} we deduce $(\Omega^q(M)/\Omega_\circ^q(M))^\star \cong \Omega_{q+1}(M)/\Omega^\circ_{q+1}(M)$. Finally, by applying  the short 5-lemma to \eqref{sequencePont3} and \eqref{sequtopont}, we obtain relation \eqref{iso1}.
	\end{proof} 	 
	
	\noindent The injection $cl^\star$ which appears in exact sequence \eqref{sequencePont3} can be realized in the following way. Let us select a collection $(z_{q+1}^\mu)_\mu$ of free $(q+1)$-cycles of $M$ which generate $F_{q+1} \cong F^{q+1} \cong (\mathbb{R}/\mathbb{Z})^{b_{q+1}}$, and for any $\vec{\theta} \in (\mathbb{R}/\mathbb{Z})^{b_{q+1}}$, let us consider $Z_\theta = \sum_{\mu} \theta_\mu Z_{q+1}^\mu$ where $Z_{q+1}^\mu$ denotes the de Rham current of $z_{q+1}^\mu$, and set for any $\alpha \in \Omega_\mathbb{Z}^{q+1}(M)$, $Z_\theta [\alpha] = \sum_{\mu} \theta_\mu Z_{q+1}^\mu [\alpha] \in \mathbb{R}/\mathbb{Z}$. This defines $Z_\theta$ as an element of $\Omega_\mathbb{Z}^{q+1}(M)^\star$. Now, if we replace each $Z_{q+1}^\mu$ by a cohomologous current $Z_{q+1}^\mu + d^\dagger T_{q+1}^\mu$, then $Z_\theta$ is also changed by a coboundary, so that it defines the same element of $\Omega_\mathbb{Z}^{q+1}(M)^\star$ as $Z_\theta$. In other words, $Z_\theta$ only depends on $\vec{\theta}$, the construction thus defining the injection $cl^\star$. Furthermore, $\vec{\theta}$ obviously defines an element $\vec{\theta}_{(q)}$ of $\Omega^\circ_{q+1}(M)/\Omega^\mathbb{Z}_{q+1}(M) \cong (\mathbb{R}/\mathbb{Z})^{b_{q+1}}$. The gauge $q$-current defined by $\vec{\theta}_{(q)}$ is then denoted by $\vec{\theta}_{[q]}$, and the associated quantum $q$-current by $\vec{\boldsymbol{\theta}}_{[q]}$.
	
	We can now state the theorem that confirms our previous impressions concerning the relation between quantum currents and Pontryagin duality.
	
	\begin{theorem}
		\begin{equation}
			\label{isoHstarH'}
			H_q^D(M) \cong H_D^q(M)^\star \, .
		\end{equation}
		
	\end{theorem}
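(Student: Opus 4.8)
The plan is to exhibit the natural homomorphism furnished by the extended DB product and to show it is an isomorphism by a short-five-lemma comparison of the exact sequences carrying $H_q^D(M)$ and $H_D^q(M)^\star$. Define
\begin{equation}
	\Phi : H_q^D(M) \longrightarrow H_D^q(M)^\star \, , \qquad \Phi(\boldsymbol{B}_{[q]}) \eval{\boldsymbol{A}^{[q]}} = \left( \boldsymbol{B}_{[q]} \star \boldsymbol{A}^{[q]} \right) \eval{\boldsymbol{1}} \, ,
\end{equation}
where $\star$ denotes the extended DB product of a quantum $q$-current with a quantum $q$-field, which lives in $H^D_{-1}(M) \cong \mathbb{R}/\mathbb{Z}$, cf.\ \eqref{currentDBfieldstop}. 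That $\Phi$ is a well-defined continuous group homomorphism is immediate: the right-hand side is $\mathbb{Z}$-bilinear, $\mathbb{R}/\mathbb{Z}$-valued, independent of the chosen gauge representatives by the results of the previous section, and continuous for the topologies that $H_q^D(M)$ and $H_D^q(M)$ inherit from their exact sequences. It then suffices to show that $\Phi$ fits into a commutative ladder between \eqref{currentshort1} and the dual sequence \eqref{sequencePont1}, whose outer terms are already canonically identified — on the left by the isomorphism \eqref{iso1}, $\Omega_{q+1}(M)/\Omega_{q+1}^\mathbb{Z}(M) \cong \Omega_\mathbb{Z}^{q+1}(M)^\star$, and on the right by the equality of both terms with $H^{p+1}(M)$ — and to invoke the short five lemma.

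For the left square, feed in gauge representatives. If $\bar{\chi} \in \Omega_{q+1}(M)/\Omega_{q+1}^\mathbb{Z}(M)$ is represented by a de Rham $(q+1)$-current $\chi$, then $\bar{\delta}(\bar{\chi})$ is represented by $\chi_{[q]} = \left( \delta_{-1}\chi , 0 , \hdots , 0 \right)$ as in \eqref{classicalcurrent}, so in \eqref{extendedstarproduct} (with $p$ replaced by $q$) only the $k=0$ term survives and every \v{C}ech-type term vanishes since the last component of $\chi_{[q]}$ is zero. Evaluating the resulting gauge $(-1)$-current on the representative $\mu^{\{-1\}} = \left( \mu{_0}{^0} , \hdots \right)$ of $\boldsymbol{1}$, the partition-of-unity identity $\sum_\alpha \mu^\alpha = 1$ collapses the sum and produces $\chi[F(\boldsymbol{A}^{[q]})]$, which is exactly $\bar{d}^\star(\chi)\eval{\boldsymbol{A}^{[q]}}$ by the definition \eqref{dualcurv} of $\bar{d}^\star$ read through \eqref{iso1}. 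Hence the left square commutes.

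The right square is the delicate point. One must check that $i^\star \circ \Phi = cl$ as maps $H_q^D(M) \to H^{p+1}(M)$, where $i^\star$ is read through the identification $H^q(M,\mathbb{R}/\mathbb{Z})^\star \cong H^{p+1}(M)$ coming from the Universal Coefficient Theorem together with Poincaré duality \eqref{PoincarDual}, and $cl(\boldsymbol{B}_{[q]})$ is the \v{C}ech class of the last component $b^{(p+1,-1)}$ of a representative. Evaluating $\Phi(\boldsymbol{B}_{[q]})$ on a flat quantum $q$-field $i(\boldsymbol{\nu})$, with $\boldsymbol{\nu} \in H^q(M,\mathbb{R}/\mathbb{Z})$ represented by $\left( 0 , \hdots , d_{-1}r^{(q,-1)} , \delta r^{(q,-1)} \right)$, annihilates every term of \eqref{extendedstarproduct} except those built from $b^{(p+1,-1)}$, and by \eqref{reducedDBprocut}--\eqref{evalDBon1} the outcome reduces to $(-1)^n\,(b^{(p+1,-1)} \smallsmile r^{(q,-1)})[m_{(n,-1)}]$, with $m_{(n,-1)}$ the fundamental \v{C}ech cycle carried by $\mu^{\{-1\}}$. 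By the cup--cap adjunction this equals $(-1)^n\, r^{(q,-1)}\bigl[\, b^{(p+1,-1)} \smallfrown m_{(n,-1)} \,\bigr]$, i.e.\ the pairing of $\boldsymbol{\nu}$ with the Poincaré dual $[b^{(p+1,-1)}] \smallfrown [M] \in H_q(M)$; up to the harmless sign $(-1)^n$, a matter of convention in Poincaré duality, this is precisely $cl(\boldsymbol{B}_{[q]})$ viewed in $H^{p+1}(M) \cong H^q(M,\mathbb{R}/\mathbb{Z})^\star$. The obstacle here is not a computation but bookkeeping: one must make sure that the identification of $H^q(M,\mathbb{R}/\mathbb{Z})^\star$ with $H^{p+1}(M)$ already used in \eqref{sequencePont1} is the cap-with-$[M]$ one, so that this square commutes with the identity (and not merely with some automorphism) on the right; it is here that $T_p \simeq T_{n-p-1} \simeq T^{p+1}$ is quietly used.

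With both squares commuting, the outer vertical maps isomorphisms, and the rows exact, the short five lemma forces $\Phi$ to be an isomorphism; as $\Phi$ and its inverse are continuous and $H^{p+1}(M)$ is discrete, it is an isomorphism of topological groups, which is \eqref{isoHstarH'}. The one step I expect to resist is the right square: making the concrete \v{C}ech recipe \eqref{extendedstarproduct}--\eqref{evalDBon1} match the classical Poincaré pairing on the nose, i.e.\ recognising the iterated \v{C}ech contractions as the cap product of the terminal \v{C}ech cocycle of $\boldsymbol{B}_{[q]}$ with the fundamental class of $M$; the left square and the formal properties of $\Phi$ are comparatively routine.
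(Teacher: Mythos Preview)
Your proposal is correct and follows essentially the same route as the paper: define the middle map via the extended DB product \eqref{currentDBfieldstop}, compare the exact sequences \eqref{currentshort1} and \eqref{sequencePont1} using \eqref{iso1} on the left and the equality with $H^{p+1}(M)$ on the right, and invoke the five lemma. The paper's own proof is in fact terser than yours --- it asserts the applicability of the five lemma without writing out the commutativity of the two squares --- so your explicit verification of the left and right squares (and your honest flagging of the bookkeeping in the Poincar\'e-duality identification on the right) supplies detail the paper leaves implicit.
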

	\begin{proof}
		Let us compare exact sequences \eqref{currentshort1} and \eqref{sequencePont1}. Firstly, both sequences have the cohomology group $H^{p+1}(M)$ on their last nontrivial entry. Secondly, their first nontrivial entry are isomorphic spaces, cf. \eqref{iso1}. Finally, we can associate with a quantum $q$-field $\boldsymbol{A}_{[q]}$ an element $\boldsymbol{A}_{\{q\}}$ of $H_D^q(M)^\star$ defined by
		\begin{equation}
			\label{dualcurrentsfromcurrents}
			\forall\,\boldsymbol{B}^{[q]} \in  H_D^q(M), \quad \boldsymbol{A}_{\{q\}} \eval{\boldsymbol{B}^{[q]}} = (\boldsymbol{A}_{[q]} \star \boldsymbol{B}^{[q]}) \eval{\boldsymbol{1}} \, , \hspace{1.5cm}
		\end{equation}
		this association thus defining a map $H_q^D(M) \to H_D^q(M)^\star$. Applying the 5-lemma to these sequences achieves the proof.
	\end{proof}	
	
	The isomorphism of the above theorem stands at the level of quantum currents. To get a version which deals with gauge currents, it is necessary to have local representatives for the elements of $H_D^q(M)^\star$. Before exhibiting such representatives, let us note that if we compare exact sequences \eqref{currentshort2} and \eqref{sequencePont2}, knowing that $H_q^D(M) \cong H_D^q(M)^\star$, we deduce that $(\Omega^q(M)/\Omega_\mathbb{Z}^q(M))^\star$ plays the same role as the group of curvatures, $\Omega^\mathbb{Z}_q$, these two groups being indeed isomorphic. Thus, we can consider $(\Omega^q(M)/\Omega_\mathbb{Z}^q(M))^\star$ as the group of curvatures from the point of view of Pontryagin duality.
	
	\subsection{Local representatives of dual quantum currents}
	
	As elements of $H_D^p(M)^\star$, $p$-cycles have local representatives provided by their subordinate decompositions. This can be used as a model to obtain representatives for the elements of $H_D^q(M)^\star$. Let $z_{\{p\}} = \left( c_{(0,p)} , \hdots , c_{(p,0)} , c_{(p,-1)} \right)$ be a decomposition subordinate to $\mathcal{U}_M$ of a $p$-cycle $z_p$. First, we replace each of the chains forming $c_{(k,p-k)}$ by its de Rham current in order to obtain the $(p+2)$-tuple $\left( C_{(0,p)} , \hdots , C_{(p,0)} , c_{(p,-1)} \right)$. Then, the de Rham currents forming $C_{(k,p-k)}$ in this $(p+2)$-tuple have compact support in the appropriate intersection $U_{\alpha_0 \hdots \alpha_k}$ and they fulfill the descent equation
	\begin{equation}
		\left\{ 
		\begin{gathered}
			\partial {\kern 1pt} C_{(0,p)} = j_z \, , \hfill \\
			\partial C_{(k,p-k)} = d^\dagger {\kern 1pt} C_{(k-1,p-k+1)} \, , \hfill \\ 
		\end{gathered}  \right.
	\end{equation}
	for $k \in \{1, \hdots , p\}$, where $j_z$ is the de Rham current of $z$. The \v{C}ech $p$-chain $c_{(p,-1)}$ is related to $C_{(p,0)}$ in the following way. Since $C_{(p,0)}$ is a collection of local compactly supported de Rham $0$-currents, one in each $U^z_{\alpha_0 \hdots \alpha_p}$, these currents are automatically $d^\dagger$-closed. Thus, if $X^{\alpha_0 \hdots \alpha_p}$ is a de Rham $0$-current with compact support in $U^z_{\alpha_0 \hdots \alpha_p}$ which generates the $0$-th cohomology group of compactly supported de Rham $0$-currents in this open set, then
	\begin{equation}
		C_{(p,0)}^{\alpha_0 \hdots \alpha_p} = c^{\alpha_0 \hdots \alpha_p} X^{\alpha_0 \hdots \alpha_p} + d^\dagger T^{\alpha_0 \hdots \alpha_p}
	\end{equation}
	for some $1$-current $T^{\alpha_0 \hdots \alpha_p}$. Then, we define the equivalent of $b_0$ on (compactly supported) de Rham $p$-currents
	\begin{equation}
		d^\dagger_0 C_{(p,0)}^{\alpha_0 \hdots \alpha_p} = c^{\alpha_0 \hdots \alpha_p} \, ,
	\end{equation}
	and we have 
	\begin{equation}
		c_{(p,-1)} = d^\dagger_0 C_{(p,0)} \, .
	\end{equation}
	By construction, $c_{(p,-1)}$ is a \v{C}ech $p$-cycle of $\mathcal{U}_M$. We are ready to define local representatives of elements of $H_D^q(M)^\star$.
	
	\begin{definition}
		Let $c_{(p,-1)}$ be a \v{C}ech $p$-cycle of $\mathcal{U}_M$ and let $C_{(k,p-k)}$ be a collection of local de Rham $(p-k)$-currents $C_{(k,p-k)}^{\alpha_0 \hdots \alpha_k}$ which have compact support in $U^z_{\alpha_0 \hdots \alpha_p}$. The $(p+2)$-tuple $C_{\{p\}} = \left( C_{(0,p)} , \hdots , C_{(p,0)} , c_{(p,-1)} \right)$ is called a \textbf{dual gauge $p$-current} if
		\begin{equation}
			\label{localdualcurrents}
			\left\{ 
			\begin{gathered}
				d^\dagger {\kern 1pt} C_{(k,p-k)} = \partial C_{(k+1,p-k-1)} \, , \hfill \\ 
				d^\dagger_0 C_{(p,0)} = c_{(p,-1)} \, , \hfill
			\end{gathered}  \right.
		\end{equation}
		for $k \in \{ 0 , \dots , p-1 \}$. A \textbf{dual gauge p-current transformation} is a dual gauge $p$-current of the form
		\begin{equation}
			\hspace{-0.25cm} \left( {\partial G_{(1,p)} , \partial G_{(2,p-1)} + d^\dagger G_{(1,p)}} , \hdots  , \partial G_{(p+1,0)} + d^\dagger G_{(p,1)} , \partial d^\dagger_0 G_{(p+1,0)} \right) \, ,
		\end{equation}
		where $G_{(p+1,0)}$ is such that $d^\dagger_0 G_{(p+1,0)}$ is a \v{C}ech chain of $\mathcal{U}_M$. It is generically denoted by $G_{\{p\}}$. By identifying dual gauge $p$-currents which differ by a dual gauge current transformation, we define an equivalence relation, the classes of which are called \textbf{dual quantum currents}. A dual quantum $p$-current is generically denoted by $\boldsymbol{C}_{\{p\}}$ and the $\mathbb{Z}$-module of dual quantum $p$-currents by $D^H_p(M)$.
	\end{definition}
	
	It seems logical to look for an evaluation formula for dual quantum currents to show that $D^H_p(M)$ can be canonically identified with $H_D^p(M)^\star$.
	
	\begin{definition}
		Let $C_{\{p\}}$ be a dual gauge $p$-current. Then, for any gauge $p$-field $A^{[p]}$ we set
		\begin{equation}
			C_{\{p\}} [A^{[p]}] = \sum\limits_{k = 0}^p (-1)^k C_{(k,p-k)} [ A^{(k,p-k)}] \, ,
		\end{equation}
		with
		\begin{equation}
			\label{compoevaluation}
			C_{(k,p-k)} [A^{(k,p-k)}] =  \frac{1}{(k+1)!} \sum\limits_{\alpha_0, \hdots, \alpha_k} \, C_{(k,p-k)}^{\alpha_0 \hdots \alpha _k} [ A_{\alpha_0 \hdots \alpha_k}^{(k,p-k)} ] \, .
		\end{equation}
		This is referred to as the \textbf{evaluation} of $C_{\{p\}}$ on $A^{[p]}$.
	\end{definition}
	
	Of course, when $C_{\{p\}}$ is derived from a decomposition $z_{\{p\}}$ of a $p$-cycle $z_p$ of $M$, then the evaluation of $C_{\{p\}}$ on a gauge $p$-field coincides with the integral of this gauge field along $z_{\{p\}}$. Since integration along cycles extends to quantum fields, the following property is not surprising.
	
	\begin{property}
		The evaluation of dual gauge $p$-currents on gauge $p$-fields naturally extends to a \textbf{quantum evaluation} of dual quantum $p$-currents on quantum $p$-fields by setting
		\begin{equation}
			\boldsymbol{C}_{\{p\}} \eval{\boldsymbol{A}^{[p]}}  \stackrel{\mathbb{R}/\mathbb{Z}}{=} C_{\{p\}} [ A^{[p]} ] \, ,
		\end{equation}
		$C_{\{p\}}$ and $A_{[p]}$ being representatives of the dual quantum $p$-current $\boldsymbol{C}_{\{p\}}$ and of the quantum $p$-field $\boldsymbol{A}^{[p]}$, respectively.
	\end{property}
	
	\begin{proof}
		We just need to prove that the evaluation of a dual gauge transformation on any gauge $p$-field is an integer and that the same applies to the evaluation of any dual gauge $p$-current on a gauge field transformation.
		Let $G_{\{p\}}$ be a dual gauge transformation and $A^{[p]}$ a gauge $p$-field. Then
		\begin{equation}
			G_{\{p\}} [A^{[p]}] = \sum\limits_{k = 0}^p (-1)^k \left(\partial G_{(k+1,p-k)} + d^\dagger G_{(k,p-k+1)}\right) [A^{(k,p-k)}] \, ,
		\end{equation}
		with $G_{(0,p+1)} = 0$ by definition. Since $\partial$ and $\delta$ on the one hand, and $d^\dagger$ and $d$ on the other hand, are each other's adjoint, we have 
		\begin{equation}
			\left\lbrace
			\begin{gathered}
				(\partial G_{(k+1,p-k)}) [A^{(k,p-k)}] = G_{(k+1,p-k)} [\delta A^{(k,p-k)}] \, , \hfill \\
				(d^\dagger G_{(k,p-k+1)}) [A^{(k,p-k)}] = G_{(k,p-k+1)} [d A^{(k,p-k)}] \hfill \, ,
			\end{gathered}
			\right.
		\end{equation}
		Now, taking into account the descent equations of $A^{[p]}$ and the coefficient $(-1)^k$ in the sum defining $G_{\{p\}} [A^{[p]}]$, we eventually obtain
		\begin{align}
			G_{\{p\}} [A^{[p]}] 
			\nonumber
			&= (-1)^p G_{(p+1,0)} [d_{-1} a^{(p+1,-1)}] \\ 
			&= (-1)^p (d^\dagger_0 G_{(p+1,0)}) [a^{(p+1,-1)}]\, ,
		\end{align}
		and since, by construction, $d^\dagger_0 G_{(p+1,0)}$ is a \v{C}ech $(p+1)$-chain of $\mathcal{U}_M$ and $a^{(p+1,-1)}$ a \v{C}ech $(p+1)$-cocycle of $\mathcal{U}_M$, the right-hand side of the second line in the above expression is an integer, i.e., zero in $\mathbb{R}/\mathbb{Z}$.
		The other verification is going exactly the same way and we get in that case
		\begin{align}
			C_{\{p\}} [G^{[p]}]
			\nonumber
			&= (-1)^p C_{(p,0)} [d_{-1} q^{(p,-1)}] \\ 
			&= (-1)^p (d^\dagger_0 C_{(p,0)}) [q^{(p,-1)}]\, .
		\end{align}
		This also yields an integer since by construction $d^\dagger_0 C_{(p,0)}$ is a \v{C}ech $p$-cycle  of $\mathcal{U}_M$ and $q^{(p,-1)}$ is a \v{C}ech $p$-cochain of $\mathcal{U}_M$.
	\end{proof}
	
	The relation between $p$-cycles and dual quantum $p$-currents can be extended to a quantum $q$-field. 
	
	\begin{property}
		Any gauge $q$-field of $\mathcal{U}_M$ gives rise to a dual gauge $p$-current of $\mathcal{U}_M$. This relation passes to the classes so that any quantum $q$-field defines a dual quantum $p$-currents.
	\end{property}
	
	\begin{proof}
		Let $\boldsymbol{B}^{[q]}$ be a quantum $q$-field and let $M_{\{n\}}$ be the dual gauge $n$-current defined by a decomposition of $M$ subordinate to $\mathcal{U}_M$. We are looking for a dual gauge $p$-current $B_{\{p\}}$ such that $\boldsymbol{B}_{\{p\}}  \eval{\boldsymbol{A}^{[p]}}  = \oint_M  \boldsymbol{B}^{[q]} \star \boldsymbol{A}^{[p]}$ for any quantum $p$-field $\boldsymbol{A}^{[p]}$. From what was done in the previous section, it is quite obvious that, by setting
		\begin{equation}
			B_{(k,p-k)} = (-1)^{k(q+1)} \, b^{(q+1,-1)} \smallfrown M_{(q+k+1,p-k)} \, ,
		\end{equation}
		for $k \in \{1 , \hdots , p-1 \}$, as well as
		\begin{equation}
			B_{(0,p)} = d^\dagger \Big(\sum_{k=0}^{q} (-1)^k  M_{(k,n-k)} \Wc B^{(k,q-k)}\Big) + (b^{(q+1,-1)} \smallfrown M_{(q+1,p)}) \, ,
		\end{equation}	
		we obtain a dual gauge $p$-current such that $B_{\{p\}}  [A^{[p]}]  = \oint_M  B^{[q]} \star A^{[p]}$. Indeed, for $k \in \{2, \hdots , p\}$ we have
		\begin{align}
			\partial B_{(k,p-k)} 
			&= (-1)^{k(q+1)} \partial (b^{(q+1,-1)} \smallfrown M_{(q+k+1,p-k)}) \nonumber \\
			&= (-1)^{k(q+1)} (-1)^{q+1} \big( b^{(q+1,-1)} \smallfrown \partial M_{(q+k+1,p-k)} \big) \nonumber \\
			&= (-1)^{(k+1)(q+1)} \big( b^{(q+1,-1)} \smallfrown d^\dagger M_{(q+k,p-k+1)} \big) \nonumber \\
			&= d^\dagger \big( (-1)^{(k-1)(q+1)} \, b^{(q+1,-1)} \smallfrown M_{(q+k,p-k+1)} \big) \nonumber \\
			\partial B_{(k,p-k)}
			&= d^\dagger B_{(k-1,p-k+1)},
		\end{align}
		as 
		\begin{equation}
			\partial (b^{(k)} \smallfrown m_{(k+l)}) 
			= (\delta b^{(k)}) \smallfrown m_{(k+l)} + (-1)^k b^{(k)} \smallfrown \partial m_{(k+l)}
		\end{equation}
		and 
		\begin{equation}
			\delta b^{(q+1,-1)} = 0.
		\end{equation}
		Then, we have
		\begin{align}
			d^\dagger_0 B_{(p,0)} 
			&= (-1)^{p(q+1)}  \, b^{(q+1,-1)} \smallfrown d^\dagger_0 M_{(n,0)} \nonumber \\
			&= (-1)^{p(q+1)}  \, b^{(q+1,-1)} \smallfrown m_{(n,-1)} \, ,
		\end{align}
		which is a $p$-cycle since $\delta b^{(q+1,-1)} = 0$ and $\partial  m_{(n,-1)} = 0$. Now, to make the connection with the component $B_{(0,p)}$, let us first remark that
		\begin{align}
			\label{helpfulrelation}
			\partial B_{(1,p-1)} 
			&= b^{(q+1,-1)} \smallfrown \partial M_{(q+2,p-1)} \nonumber \\
			&= b^{(q+1,-1)} \smallfrown d^\dagger M_{(q+1,p)} \nonumber \\
			\partial B_{(1,p-1)}
			&= d^\dagger (b^{(q+1,-1)} \smallfrown M_{(q+1,p)}) \, .
		\end{align}
		This straightforwardly yields $\partial B_{(1,p-1)} = d^\dagger B_{(0,p)}$. Finally, a simple algebraic juggle yields
		\begin{equation}
			\partial B_{(0,p)} = \oint_M F(B^{[q]}) \wedge \bullet \, ,
		\end{equation}
		with an obvious meaning for the right-hand side of this equality. Hence, $\partial B_{(0,p)} \in \Omega^\mathbb{Z}_{p}(M)$ as required. It is then not difficult to see that a gauge field transformation yields a dual gauge current transformation and hence that the above construction passes to the classes as expected.      
	\end{proof}
	
	\subsection{Dual quantum fields versus quantum fields}
	
	Let us now state the first important theorem of this section.
	
	\begin{theorem}
		Through quantum evaluation, $H_D^p(M)^\star$ can be canonically identified with $D_p^H(M)$, dual gauge $p$-currents thus being representatives of the elements of $H_D^p(M)^\star$.
	\end{theorem}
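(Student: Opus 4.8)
The plan is to turn the quantum evaluation into an explicit homomorphism $D_p^H(M)\to H_D^p(M)^\star$ and then recognise it as an isomorphism by the same 5-lemma scheme that produced \eqref{isoHstarH'}. By the Property stated just above, for a dual quantum $p$-current $\boldsymbol{C}_{\{p\}}$ and a quantum $p$-field $\boldsymbol{A}^{[p]}$ the pairing $\boldsymbol{C}_{\{p\}}\eval{\boldsymbol{A}^{[p]}}\in\mathbb{R}/\mathbb{Z}$ is representative-independent and $\mathbb{Z}$-linear in each argument, so $\boldsymbol{C}_{\{p\}}\mapsto\bigl(\boldsymbol{A}^{[p]}\mapsto\boldsymbol{C}_{\{p\}}\eval{\boldsymbol{A}^{[p]}}\bigr)$ defines a $\mathbb{Z}$-linear map $eval\colon D_p^H(M)\to\Hom_{\mathbb{Z}}(H_D^p(M),\mathbb{R}/\mathbb{Z})$. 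First I would observe that its image consists of continuous homomorphisms for the topology that $H_D^p(M)$ inherits from \eqref{short2} and \eqref{short3}: each $\boldsymbol{C}_{\{p\}}\eval{\cdot}$ is a finite sum of pairings of compactly supported local de Rham currents against local forms, hence continuous for the $\mathcal{C}^\infty$-topology, so $eval$ actually lands in $H_D^p(M)^\star$. Everything then reduces to showing that $eval$ is bijective.

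For that I would first record the short exact sequence into which $D_p^H(M)$ sits --- the construction announced (``postponed'') in the proof of \eqref{dualfieldsequences}. Mimicking the \v{C}ech--de Rham / \v{C}ech--Weil descents behind \eqref{short2} and \eqref{currentshort1} one obtains
\[
0\to\frac{\Omega_{p+1}(M)}{\Omega_{p+1}^{\mathbb{Z}}(M)}\xrightarrow{\ \bar{\delta}\ }D_p^H(M)\xrightarrow{\ cl\ }H^{q+1}(M)\to 0\,,
\]
where $cl$ sends a dual gauge $p$-current to the \v{C}ech-homology class of its last component (a \v{C}ech $p$-cycle by \eqref{localdualcurrents}), viewed in $H_p(M)\simeq H^{q+1}(M)$ via Poincar\'e duality \eqref{PoincarDual}, and $\bar{\delta}$ records the residual de Rham $(p+1)$-current once that class vanishes. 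Surjectivity of $cl$ is the liftability of \v{C}ech $p$-cycle classes to dual gauge $p$-currents, and $\ker cl=\Ima\bar{\delta}$ is proved exactly as the analogous step for \eqref{short2}, using the current version of Property~\ref{prop_dual_gauge_q_field} (the boundary $\partial C_{(0,p)}$ is $d^\dagger$-closed and integral on $\Omega_{\mathbb{Z}}^p(M)$, hence lies in $\Omega_p^{\mathbb{Z}}(M)$) together with the exactness of compactly supported closed de Rham currents on the contractible intersections of $\mathcal{U}_M$.

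Then I would compare this with \eqref{sequencePont1} read with $p$ and $q$ interchanged, namely $0\to\Omega_{\mathbb{Z}}^{p+1}(M)^\star\xrightarrow{\bar{d}^\star}H_D^p(M)^\star\xrightarrow{i^\star}H^{q+1}(M)\to 0$, whose left-hand term is isomorphic to $\Omega_{p+1}(M)/\Omega_{p+1}^{\mathbb{Z}}(M)$ by \eqref{iso1}; call that isomorphism $\phi$. It remains to check that $eval$ fits into a commuting ladder between the two sequences. On the right, $i^\star\circ eval=cl$ up to a fixed sign: testing $eval(\boldsymbol{C}_{\{p\}})$ on a flat quantum field $i(\boldsymbol{\nu})$, represented by $(0,\dots,0,d_{-1}r^{(p,-1)},\delta r^{(p,-1)})$, only the last component of $\boldsymbol{C}_{\{p\}}$ survives and yields the \v{C}ech pairing of $\boldsymbol{\nu}$ with $cl(\boldsymbol{C}_{\{p\}})$, i.e. the image of $cl(\boldsymbol{C}_{\{p\}})$ under $H^{q+1}(M)\cong H^p(M,\mathbb{R}/\mathbb{Z})^\star$. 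On the left, $eval\circ\bar{\delta}=\bar{d}^\star\circ\phi$, because by \eqref{dualcurv} the right-hand side sends $\boldsymbol{A}^{[p]}$ to $\bar{\chi}[F(\boldsymbol{A}^{[p]})]=\bar{\chi}[dA^{(0,p)}]$, while evaluating the dual gauge $p$-current $\bar{\delta}(\chi)$ on $A^{[p]}$ gives the same number after integrating by parts through the descent equations \eqref{localdualcurrents} and \eqref{descent}. With both outer vertical maps isomorphisms and the ladder commuting (up to harmless signs, absorbable in the identifications), the short 5-lemma forces $eval$ to be an isomorphism; since dual gauge $p$-currents represent the classes of $D_p^H(M)$, they represent the elements of $H_D^p(M)^\star$, which is the canonical identification asserted.

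The hard part will be the exact sequence for $D_p^H(M)$, specifically the inclusion $\ker cl\subset\Ima\bar{\delta}$, which requires the \v{C}ech--Weil descent adapted to \emph{compactly supported} de Rham currents together with the current analog of Property~\ref{prop_dual_gauge_q_field}; once it is available the verification of the two squares is routine sign-and-factorial bookkeeping and the 5-lemma does the rest. An alternative, more hands-on route bypasses the ladder: prove injectivity of $eval$ by testing on the classical gauge fields $\chi^{[p]}$ and on the flat fields $R^{[p]}$ (forcing a representative to be a dual gauge current transformation), and surjectivity by lifting the $H^{q+1}(M)$-component of a given $\boldsymbol{\Phi}\in H_D^p(M)^\star$ and correcting the remainder with $\bar{\delta}$ of a suitable de Rham $(p+1)$-current.
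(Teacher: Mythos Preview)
Your approach is essentially the paper's: both apply the 5-lemma to the short exact sequence for $D_p^H(M)$ (your displayed sequence, which the paper records as the Property \eqref{Bob} proved immediately afterward) against the Pontryagin-dual sequence \eqref{sequencePont1}, using \eqref{iso1} on the left and Poincar\'e duality $H_p(M)\simeq H^{q+1}(M)$ on the right. The only cosmetic difference is that the paper constructs the left injection as $\bar{\mu}$ via a partition of unity, $\chi\mapsto(d^\dagger(\mu{_0}{^0}\chi),0,\dots,0)$, rather than describing it as you do (your description of $\bar{\delta}$ as ``recording the residual current'' is really the inverse on $\ker cl$, not the injection itself); this does not affect the argument.
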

	
	\begin{proof}
		This can be derived by applying the 5-lemma on exact sequence \eqref{sequencePont1} and exact sequence \eqref{Bob} below. The use of \eqref{iso1} is required when applying the 5-lemma.
	\end{proof}
	
	\begin{property}
		The $\mathbb{Z}$-module $D_p^H(M)$ sits in the following exact sequence
		\begin{equation}
			\label{Bob}
			0 \rightarrow \frac{\Omega_{p+1}(M)}{\Omega_{p+1}^\mathbb{Z}(M)} \xrightarrow{\bar{\mu}} D^H_p(M) \xrightarrow{cl} H_p(M) \rightarrow 0 \, .
		\end{equation}
	\end{property}
	
	\begin{proof}
		Just like in the case of dual quantum fields, we can associate to each dual gauge $p$-current a \v{C}ech $p$-cycle. Moreover, as by construction $d^\dagger_0 G_{(p+1,0)}$ is a \v{C}ech $(p+1)$-chain, the $p$-cycle associated with a dual gauge current transformation is a \v{C}ech boundary. Thus, we can associate to each dual quantum $p$-current an element of $H_p(M)$. This association defines the morphism
		\begin{equation}
			cl: D^H_p(M) \rightarrow H_p(M) \, ,
		\end{equation}
		which is the natural extension of the morphism $cl$ appearing in \eqref{dualfieldsequences}. Now, let $c_{(p,-1)}$ be a \v{C}ech $p$-cycle of $\mathcal{U}_M$ and let $C_{(p,0)}$ be a collection of $0$-currents with local compact support such that $c_{(p,-1)} = d_0^\dagger C_{(p,0)}$. Then, $d_0^\dagger  \partial C_{(p,0)} = \partial c_{(p,-1)} = 0$ and hence, there exists a collection $C_{(p-1,1)}$ of $1$-currents with local compact supports such that $\partial C_{(p,0)} = d^\dagger C_{(p-1,1)}$. On its turn, the collection $C_{(p-1,1)}$ fulfills $d^\dagger \partial C_{(p-1,1)} = 0$, so that $\partial C_{(p-1,1)} = d^\dagger C_{(p-2,2)}$. By repeating this procedure, we end up with a collection $C_{(0,p)}$ of $p$-currents with local compact supports. By construction, the $(p+2)$-tuple $\left( C_{(0,p)} , \dots , C_{(p,0)} ,  c_{(p,-1)} \right)$ is a dual gauge $p$-current. This shows that $cl$ is surjective.
		
		The morphism $\bar{\mu}$ is defined as follows. First, let $\mu{_0}{^0}$ be a partition of unity subordinate to $\mathcal{U}_M$. To any $(p+1)$-current $\chi$, we associate the $(p+2)$-tuple $\mu \chi = \left( d^\dagger (\mu{_0}{^0} \chi) , 0 , \dots , 0 \right)$ which is obviously a dual gauge $p$-current. This defines a morphism $\mu$ between $\Omega_{p+1}(M)$ and the group of dual gauge $(p+1)$-currents of $\mathcal{U}_M$. If $\chi \in \Omega_{p+1}^\mathbb{Z}(M)$, then
		\begin{align}
			d^\dagger (\mu{_0}{^0}\wedge\chi)
			=& (-1)^{q+1} d (\mu{_0}{^0} \wedge \chi) \nonumber \\
			=& (-1)^{q+1} (d \mu{_0}{^0}) \wedge \chi \nonumber \\
			=& (-1)^{q+1} (\hat{\partial} \mu{_1}{^1}) \wedge \chi \nonumber \\
			=& (-1)^{q+1} (\partial \mu{_1}{^1}) \wedge \chi \nonumber \\
			=& (-1)^q \partial (\mu{_1}{^1} \wedge \chi) \nonumber \\
			d^\dagger (\mu{_0}{^0}\wedge\chi)
			=& - \hat{\partial} (\mu{_1}{^1} \wedge \chi).    
		\end{align}
		Thus, $\mu \chi$ is equivalent to the dual gauge $p$-current $\left( 0 , d^\dagger (\mu{_1}{^1} \wedge \chi) , 0 , \dots , 0 \right)$. By repeating this procedure, we eventually obtain the dual gauge $p$-current $\left( 0 , \dots , \hat{\partial} (\mu{_{p+1}}{^{p+1}} \chi) , 0  \right)$, which is equivalent to $\mu \chi$. Let us show that the collection $d_0^\dagger (\mu{_{p+1}}{^{p+1}} \chi)$ is a \v{C}ech chain. First, this is a collection of $0$-currents with local compact support. Hence, if we evaluate this collection on the collection $1^{(p+1,0)}$ whose elements are the constant functions $1$ in each $U_{\alpha_0 \dots \alpha_{p+1}}$, we obtain
		\begin{equation}
			d_0^\dagger (\mu{_{p+1}}{^{p+1}} \chi) [1^{(p+1,0)}] = \chi [\mu{_{p+1}}{^{p+1}}] \, .
		\end{equation}
		Now, if the collections $\chi{^k}{_{p+k}}$ are the components of a \v{C}ech-de Rham descent of $\chi$, with $\chi^{(p+1,-1)}$ a \v{C}ech $(p+1)$-cocycle ending this descent, then
		\begin{align}
			\nonumber
			\chi [\mu{_{p+1}}{^{p+1}}] 
			=& d^\dagger \chi{^0}{_{p}} [\mu{_{p+1}}{^{p+1}}]\\
			\nonumber
			=& \chi{^0}{_{p}} [d \mu{_{p+1}}{^{p+1}}]\\
			\nonumber
			=& \chi{^0}{_{p}} [\hat{\partial} \mu{_{p+2}}{^{p+2}}]\\
			\nonumber
			=& (\hat{\delta} \chi{^0}{_{p}}) [\mu{_{p+2}}{^{p+2}}]\\
			\nonumber
			=& \dots\\
			\chi [\mu{_{p+1}}{^{p+1}}] 
			=& d^\dagger_{-1} \chi^{(p+1,-1)} [\mu{_{n}}{^{n}}]. 
		\end{align}
		As $d^\dagger_{-1} \chi^{(p+1,-1)} [\mu{_{n}}{^{n}}]$ is a collection of integers, $d_0^\dagger (\mu{_{p+1}}{^{p+1}} \chi)$ is a \v{C}ech chain. With this result, we conclude that the morphism $\mu$ extends to a morphism 
		\begin{equation}
			\bar{\mu}: \Omega_{p+1}(M)/\Omega_{p+1}^\mathbb{Z}(M) \rightarrow D^H_p(M) \, .
		\end{equation}
		This morphism reduces to the morphism $\bar{\mu}$ appearing in \eqref{dualfieldsequences} when dealing with dual gauge fields instead of dual gauge currents.
		
		It is quite obvious that $cl \circ \bar{\mu} = 0$ so that $\Ima \bar{\mu} \subset \ker cl$. To show the reverse inclusion, let us consider a dual quantum current $\boldsymbol{C}_{\{p\}} \in \ker cl$. Then, the last component of a dual gauge current $C_{\{p\}}$ representing $\boldsymbol{C}_{\{p\}}$ is a \v{C}ech boundary $\partial g_{(p+1,-1)}$. It is possible to find a collection $G_{(p+1,0)}$ such that $d_0^\dagger G_{(p+1,0)} = g_{(p+1,-1)}$ so that $C_{\{p\}}$ has the form $\left(C_{(0,p)} , \dots , C_{(p,0)} , \partial d_0^\dagger G_{(p+1,0)} \right)$. This dual gauge current is then equivalent to $\left(C_{(0,p)} , \dots , C_{(p,0)} - \partial G_{(p+1,0)} , 0 \right)$. Then, $d^\dagger (C_{(p,0)} - \partial G_{(p+1,0)}) = 0$ by the descent equations satisfied by a dual gauge current, which infers that $C_{(p,0)} - \partial G_{(p+1,0)} = d^\dagger G_{(p,1)}$. Thus, $C_{\{p\}}$ is also equivalent to $\left(C_{(0,p)} , \dots , C_{(p-1,1)} - \partial G_{(p,1)} , 0 , 0 \right)$. By repeating this procedure, we generate a dual gauge current of the form $\left(\tilde{C}_{(0,p)} , 0 , \dots , 0 \right)$ which represents $\boldsymbol{C}_{\{p\}}$. The descent equations imply that $d^\dagger \tilde{C}_{(0,p)} = 0$ and thus that $\tilde{C}_{(0,p)} = d^\dagger G_{(0,p+1)}$. Furthermore, we have
		\begin{equation}
			\partial \left( \mu{_0}{^0} \partial G_{(0,p+1)} - G_{(0,p+1)} \right) =  0 \, ,
		\end{equation}
		which implies that $\left( \mu{_0}{^0} \partial G_{(0,p+1)} -  G_{(0,p+1)} \right) = \partial H_{(1,p+1)}$ and thus that
		\begin{equation}
			\tilde{C}_{(0,p)} = d^\dagger (\mu{_0}{^0} \partial G_{(0,p+1)}) + \partial d^\dagger H_{(1,p+1)} \, .
		\end{equation}
		This implies that $\boldsymbol{C}_{\{p\}}$ admits as a representative the dual gauge $q$-current $\mu \partial G_{(0,p+1)}$ and thus that $\boldsymbol{C}_{\{p\}} = \bar{\mu} (\partial G_{(0,p+1)})$. This completes the proof that $\ker cl \subset \Ima \bar{\mu}$. So, we have the exact sequence
		\begin{equation}
			\frac{\Omega_{p+1}(M)}{\Omega_{p+1}^\mathbb{Z}(M)} \xrightarrow{\bar{\mu}} D^H_p(M) \xrightarrow{cl} H_p(M) \rightarrow 0 \, .
		\end{equation}
		Finally, let us show that $\bar{\mu}$ is injective. Let $\chi$ be a de Rham $(p+1)$-current such that $\bar{\mu}(\chi) = 0$. Then, the dual gauge current $\mu \chi$ is a dual gauge current transformation and there exists a collection $G_{(1,p)}$ of currents with local compact support\textcolor{red}{s?} such that $d^\dagger (\mu{_0}{^0} \chi) = \partial G_{(1,p)}$. Then, $\partial d^\dagger (\mu{_0}{^0} \chi) =  d^\dagger (\partial\mu{_0}{^0} \chi) = d^\dagger\chi = 0$, so that $\chi$ is $d^\dagger$-closed. According to the computation made above, we have $d^\dagger (\mu{_0}{^0}\chi) = - \partial (\mu{_1}{^1}\wedge\chi)$, so that $\mu \chi$ is equivalent to $\left(0, d^\dagger (\mu{_1}{^1} \wedge \chi) , 0 , \dots , 0 \right)$. This starts a descent which ends with the dual gauge current $\left( 0 , \dots , 0 , \pm \partial (\mu{_{p+1}}{^{p+1}} \wedge \chi) , 0 \right)$. This dual gauge current is a gauge transformation if and only if $d_0^\dagger (\mu{_{p+1}}{^{p+1}} \wedge \chi)$ is a \v{C}ech $(p+1)$-chain. Before showing this, let us point out that  $d_0^\dagger (\mu{_{p+1}}{^{p+1}} \wedge \chi)$ is a cycle since the last component of the above dual gauge current is zero. Let $u^{(p+1,-1)}$ be a \v{C}ech $(p+1)$-cocycle of $\mathcal{U}_M$, then
		\begin{equation}
			u^{(p+1,-1)} [ d_0^\dagger (\mu{_{p+1}}{^{p+1}} \wedge \chi) ] \in \mathbb{Z} \, .
		\end{equation}
		Furthermore, we have
		\begin{align}
			\nonumber
			u^{(p+1,-1)} [ d_0^\dagger (\mu{_{p+1}}{^{p+1}} \wedge \chi) ] 
			&= (d_{-1} u^{(p+1,-1)}) [ \mu{_{p+1}}{^{p+1}} \wedge \chi ] \\
			&= \chi [\mu{_{p+1}}{^{p+1}} \Wc d_{-1} u^{(p+1,-1)}] \, . 
		\end{align}
		As $\chi$ is closed, we can apply to it a \v{C}ech-de Rham descent, the components of which are denoted by $R{^{k-1}}{_{p+k+1}}$, the final component being a(n a priori real) \v{C}ech cocycle $r^{(q,-1)}$. Then, we have
		\begin{align}
			\nonumber
			\chi [\mu{_{p+1}}{^{p+1}} \Wc d_{-1} u^{(p+1,-1)}] 
			&= (d^\dagger R{^{0}}{_{p+2}}) [\mu{_{p+1}}{^{p+1}} \Wc d_{-1} u^{(p+1,-1)}]\\ 
			&= R{^{0}}{_{p+2}} [(d \mu{_{p+1}}{^{p+1}}) \Wc d_{-1} u^{(p+1,-1)}] \nonumber\\ 
			&= R{^{0}}{_{p+2}} [(\partial \mu{_{p+2}}{^{p+2}}) \Wc d_{-1} u^{(p+1,-1)}] \nonumber\\ 
			&= R{^{0}}{_{p+2}} [\partial (\mu{_{p+2}}{^{p+2}} \Wc d_{-1} u^{(p+1,-1)})] \nonumber\\ 
			&= (\delta R{^{0}}{_{p+2}}) [\mu{_{p+2}}{^{p+2}} \Wc d_{-1} u^{(p+1,-1)}] \nonumber\\
			\chi [\mu{_{p+1}}{^{p+1}} \Wc d_{-1} u^{(p+1,-1)}]
			&= (d^\dagger R{^{1}}{_{p+3}}) [\mu{_{p+2}}{^{p+2}} \Wc d_{-1} u^{(p+1,-1)}] \, ,
		\end{align}
		where we used the fact that $\delta u^{(p+1,-1)} = 0$. Repeating the procedure, we end with
		\begin{align}
			\nonumber
			\chi [\mu{_{p+1}}{^{p+1}} \Wc d_{-1} u^{(p+1,-1)}] 
			&= ( d_{-1}^\dagger r^{(q,-1)}) [\mu{_{n}}{^{n}} \Wc d_{-1} u^{(p+1,-1)}]\\ 
			&=  r^{(q,-1)} [(i_n \mu{_{n}}{^{n}}) \smallfrown u^{(p+1,-1)}] \, .
		\end{align}
		Now, as by construction $i_n \mu{_{n}}{^{n}} = m_{(n,-1)}$ is a \v{C}ech $n$-cycle of $\mathcal{U}_M$, we conclude that
		\begin{equation}
			r^{(q,-1)} [(i_n \mu{_{n}}{^{n}}) \smallfrown u^{(p+1,-1)}] \in \mathbb{Z} \, ,
		\end{equation}
		for any \v{C}ech $(p+1)$-cocycle $u^{(p+1,-1)}$ which implies that $r^{(q,-1)}$ is a \v{C}ech $q$-cocycle of $\mathcal{U}_M$, and thus that $\chi \in \Omega^\mathbb{Z}_{p+1}(M)$. Accordingly, $\bar{\mu}$ is injective.
	\end{proof}
	
	Let us note that for any gauge $p$-field $A^{[p]}$ we have
	\begin{align}
		(\mu \chi) [A^{[p]}] 
		&= d^\dagger (\mu{_0}{^0} \chi) [A^{[(0,p)}] \nonumber \\
		&= (\mu{_0}{^0} \chi) [d A^{[(0,p)}]\nonumber\\
		&= (\mu{_0}{^0} \chi) [\delta_{-1} F(A)]\nonumber\\
		(\mu \chi) [A^{[p]}]	    
		&= \chi [F(A)] \in \mathbb{Z} \, .
	\end{align}
	
	\begin{exercise}
		Show that $D^H_p(M)$ sits in the following exact sequence
		\begin{equation}
			0 \rightarrow H^{q}(M,\mathbb{R}/\mathbb{Z}) \xrightarrow{i} D^H_p(M) \xrightarrow{\bar{\partial}} \left( \frac{\Omega^p(M)}{\Omega_\mathbb{Z}^p(M)} \right)^\star \rightarrow 0 \, ,
		\end{equation}
		where the morphisms $i$ and $\bar{\partial}$ are the natural extensions of the morphisms $i$  and $\bar{\partial}$ appearing in \eqref{dualfieldsequences}.
	\end{exercise}
	
	Now that we have shown how to identify the elements of $H_D^p(M)^\star$ as dual quantum $p$-currents of $M$, let us relate this $\mathbb{Z}$-module with $H^D_p(M)$.
	
	\begin{theorem}
		A partition of unity subordinate to $\mathcal{U}_M$ defines a homomorphism between the group of gauge $p$-currents and the group of dual gauge $p$-currents. This homomorphism becomes an isomorphism between the $\mathbb{Z}$-modules $H^D_p(M)$ and $H_D^p(M)^\star$ which turns out to be independent of the partition of unity chosen.
	\end{theorem}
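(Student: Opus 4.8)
The plan is to realise concretely, at the level of gauge representatives, the isomorphism \eqref{isoHstarH'} between $H^D_p(M)$ and $H_D^p(M)^\star$, the device being the dual gauge $(-1)$-field $\mu^{\{-1\}} = (\mu{_0}{^0} , \hdots , \mu{_n}{^n} , m_{(n,-1)})$ that a partition of unity subordinate to $\mathcal{U}_M$ attaches to $\boldsymbol{1}^{\{-1\}}$. First I would define, for a gauge $p$-current $A_{[p]} = (A{^0}{_{p+1}} , \hdots , A{^q}{_n} , a^{(q+1,-1)})$ with $q = n-p-1$, a $(p+2)$-tuple $\Phi_\mu(A_{[p]})$ whose components are obtained from those of $A_{[p]}$ and of $\mu^{\{-1\}}$ through the extended \v{C}ech products $\Wc$ and $\smallfrown$, with a $Q$-type correction term as in the definition of the DB-cap product $\DBc$ and in the construction ``gauge $q$-field $\mapsto$ dual gauge $p$-current''. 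The characterising property I want is that, for every gauge $p$-field $B^{[p]}$,
\begin{equation}
\Phi_\mu(A_{[p]})[B^{[p]}] = A_{[p]}[B^{[p]} \DBc \mu^{\{-1\}}] = (A_{[p]} \star B^{[p]})[\mu^{\{-1\}}] \, ,
\nonumber
\end{equation}
the second equality being the one already recorded for $\star$ and $\DBc$; in other words $\Phi_\mu$ is the transpose of the operation $B^{[p]} \mapsto B^{[p]} \DBc \mu^{\{-1\}}$ which sends gauge $p$-fields to dual gauge $p$-fields, whence it is manifestly $\mathbb{Z}$-linear.

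Then I would verify that $\Phi_\mu(A_{[p]})$ is indeed a dual gauge $p$-current: the $\mu{_k}{^k}$ have compact support in the appropriate intersections, so every component of $\Phi_\mu(A_{[p]})$ does, and the descent equations \eqref{localdualcurrents} follow from the descent equations \eqref{descentcurrent} of $A_{[p]}$, from those of $\mu^{\{-1\}}$, from the Leibniz rules for $\partial(\,\cdot\smallfrown\cdot\,)$ and $d(\,\cdot\Wc\cdot\,)$, and from $\delta a^{(q+1,-1)} = 0$ --- precisely the sign-and-combinatorics bookkeeping carried out for the analogous statements in the paper. Next I would check that $\Phi_\mu$ descends to classes: if $G_{[p]}$ is a gauge $p$-current transformation then, by the characterising property together with the standard behaviour of $\star$ under gauge transformations (a gauge transformation $\star$ anything is again a gauge transformation, and the evaluation of a gauge current transformation on a dual gauge field is an integer), $\Phi_\mu(G_{[p]})[B^{[p]}] \in \mathbb{Z}$ for every gauge $p$-field $B^{[p]}$; since quantum evaluation identifies $D^H_p(M)$ with $H_D^p(M)^\star$ (the preceding theorem), the class of $\Phi_\mu(G_{[p]})$ is zero, i.e. $\Phi_\mu(G_{[p]})$ is a dual gauge $p$-current transformation. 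Hence $\Phi_\mu$ induces a homomorphism $\bar\Phi_\mu : H^D_p(M) \to D^H_p(M)$.

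It remains to identify $\bar\Phi_\mu$. Composing it with the canonical isomorphism $D^H_p(M) \xrightarrow{\sim} H_D^p(M)^\star$ furnished by quantum evaluation yields a homomorphism $H^D_p(M) \to H_D^p(M)^\star$ carrying $\boldsymbol{A}_{[p]}$ to the functional $\boldsymbol{B}^{[p]} \mapsto \Phi_\mu(A_{[p]})[B^{[p]}] \stackrel{\mathbb{R}/\mathbb{Z}}{=} (A_{[p]} \star B^{[p]})[\mu^{\{-1\}}] \stackrel{\mathbb{R}/\mathbb{Z}}{=} (\boldsymbol{A}_{[p]} \star \boldsymbol{B}^{[p]})\eval{\boldsymbol{1}}$, the last equality using that $\mu^{\{-1\}}$ represents $\boldsymbol{1}^{\{-1\}}$ and that the evaluation passes to classes. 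By \eqref{dualcurrentsfromcurrents} this functional is exactly the image of $\boldsymbol{A}_{[p]}$ under the isomorphism \eqref{isoHstarH'}; therefore $\bar\Phi_\mu$ is the composite of that isomorphism with the inverse of the quantum-evaluation isomorphism, hence an isomorphism $H^D_p(M) \cong H_D^p(M)^\star$, and --- being a composite of two maps neither of which refers to a partition of unity --- independent of the partition chosen. (Alternatively one checks directly that two partitions of unity yield representatives $\mu^{\{-1\}}$, $\tilde\mu^{\{-1\}}$ of the same $\boldsymbol{1}^{\{-1\}}$, differing by a dual gauge transformation, so that $\Phi_\mu$ and $\Phi_{\tilde\mu}$ differ by a map landing in dual gauge $p$-current transformations.)

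The main obstacle is the explicit construction and the two verifications in the first two paragraphs: writing $\Phi_\mu(A_{[p]})$ in closed form and proving both that its components satisfy \eqref{localdualcurrents} and that $\Phi_\mu(A_{[p]})[B^{[p]}] = (A_{[p]} \star B^{[p]})[\mu^{\{-1\}}]$ demands the same careful handling of the permutation-sum normalisations $\frac{(l+1)!}{(k+l+1)!}$, of the signs $(-1)^{n-l}$ attached to $\hat\delta$ and $\hat\partial$, and of the Leibniz rules for $\smallfrown$ and $\Wc$ as in the earlier ``algebraic juggle'' proofs; once these are settled, the passage to classes, the identification of the induced map and the partition-independence are purely formal.
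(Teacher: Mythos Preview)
Your proposal is correct and follows essentially the same route as the paper: the paper writes down the explicit components $C^{B,\mu}_{(k,p-k)} = (-1)^{k(q+1)}\, b^{(q+1,-1)} \smallfrown \mu{_{q+k+1}}{^{q+k+1}}$ (for $k\geq 1$) together with the $d^\dagger$-corrected top component, verifies directly that $C^{B,\mu}_{\{p\}}[A^{[p]}] = (B_{[p]}\star A^{[p]})[\mu^{\{-1\}}]$, and then passes to classes exactly as you describe. Your framing of the map as the transpose of $B^{[p]}\mapsto B^{[p]}\DBc\mu^{\{-1\}}$ and your deduction of partition-independence from the factorisation through the canonical isomorphisms \eqref{isoHstarH'} and $D^H_p(M)\cong H_D^p(M)^\star$ are a slightly cleaner conceptual packaging of the same argument.
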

	
	\begin{proof}
		If $B_{[p]}$ is a gauge $p$-current, we just have to set
		\begin{equation}
			\left\{ 
			\begin{gathered}
				C^{B,\mu}_{(0,p)} = d^\dagger \Big(\sum_{k=0}^{q} (-1)^k  \mu_k^k \Wc B^k_{n-q+k}\Big) + (b^{(q+1,-1)} \smallfrown \mu_{q+1}^{q+1}) \hfill \\ 
				C^{B,\mu}_{(k,p-k)} = (-1)^{k(q+1)} \, b^{(q+1,-1)} \smallfrown \mu_{q+k+1}^{q+k+1} \hfill \\
				c^{B,\mu}_{(p,-1)} =  b^{(q+1,-1)} \smallfrown m_{(n,-1)} \hfill 
			\end{gathered} \right. \, ,
		\end{equation}
		with $p$ and $q$ quantum dual integers. Then, $C^{B,\mu}_{\{p\}} = \left( C^{B,\mu}_{(0,p)} , \hdots , C^{B,\mu}_{(p,0)} , c^{B,\mu}_{(p,-1)} \right)$ is a dual gauge $p$-current. It is not hard to check that another partition of unity will produce an equivalent dual gauge $p$-current. Finally, by construction, for any gauge $p$-field $A^{[p]}$, we have
		\begin{equation}
			C^{B,\mu}_{\{p\}} [A^{[p]}] = \sum\limits_{k = 0}^p (-1)^k C^{B,\mu}_{(k,p-k)} \left[ A^{(k,p-k)} \right] \stackrel{\mathbb{R}/\mathbb{Z}}{=} (B_{[p]} \star A^{[p]}) \left[ \mu^{\{-1\}} \right] \, ,
		\end{equation}
		and thus at the level of classes
		\begin{equation}
			\label{trueevalAstarB}
			\boldsymbol{C}^B_{\{p\}} \eval{\boldsymbol{A}^{[p]}} \stackrel{\mathbb{R}/\mathbb{Z}}{=} (\boldsymbol{B}_{[p]} \star \boldsymbol{A}^{[p]}) \eval{\boldsymbol{1}} = \boldsymbol{B}_{[p]} \eval{\boldsymbol{A}^{[p]} \DBc \boldsymbol{1}} \, .
		\end{equation}
		
	\end{proof}
	
	Now that we have representatives for the dual quantum currents (or equivalently for the elements of $H_D^p(M)^\star$) and a quantum integration, i.e., an evaluation formula defined with the help of these representatives, it is possible to give a meaning to the evaluation $(B_{[p]} \star A_{[q]}) [\mu^{\{-1\}}]$ just like it is possible to give a meaning to $(j_1 \wedge j_2)[1]$ for some particular de Rham currents $j_1$ and $j_2$ \cite{dR55}. Thanks to the use of a partition of unity, the only possibly ill-defined quantities which might appear in defining $(B_{[p]} \star A_{[q]}) [\mu^{\{-1\}}]$ are due to the product of de Rham currents of $B_{[p]}$ and $A_{[q]}$. The situation would be quite hopeless if we were trying to use the evaluation formula based on a decomposition of $M$ since, then, more de Rham currents products would be involved. Nevertheless, even if a partition of unity is used, it remains true that $(B_{[p]} \star A_{[q]}) [\mu^{\{-1\}}]$ is, in general, ill-defined. However, in the case where an evaluation like $(B_{[p]} \star A_{[q]}) [\mu^{\{-1\}}]$ has a meaning, it is natural to consider it as defining $(\boldsymbol{B}_{[p]} \star \boldsymbol{A}_{[q]}) \eval{\boldsymbol{1}}$.
	
	\begin{definition}
		We say that $(\boldsymbol{B}_{[p]} \star \boldsymbol{A}_{[q]}) \eval{\boldsymbol{1}}$ is well-defined if there exist some gauge currents $B_{[p]}$ and $A_{[q]}$ representing $\boldsymbol{B}_{[p]}$ and $\boldsymbol{A}_{[q]}$, respectively, for which
		$(B_{[p]} \star A_{[q]}) \left[\mu^{\{-1\}} \right]$ has a meaning.
	\end{definition}
	
	It is not hard to check that if $\boldsymbol{A}_{[q]}$ and $\boldsymbol{B}_{[p]}$ both derive from two quantum fields $\boldsymbol{A}^{[p]}$ and $\boldsymbol{B}^{[q]}$, then
	\begin{equation}
		(\boldsymbol{B}_{[p]} \star \boldsymbol{A}_{[q]}) \eval{\boldsymbol{1}} = \oint_M \boldsymbol{B}^{[q]} \star \boldsymbol{A}^{[p]} = \boldsymbol{B}_{\{p\}} \eval{\boldsymbol{A}^{[p]}}\, .
	\end{equation}
	Furthermore, if we set
	\begin{equation}
		\oint_M \boldsymbol{B}_{[p]} \star \boldsymbol{A}_{[q]} = (\boldsymbol{B}_{[p]} \star \boldsymbol{A}_{[q]}) \eval{\boldsymbol{1}} \, ,
	\end{equation}
	then the above definition allows to consider the formal generalized $\U1$ BF quantum action
	\begin{equation}
		S_{BF,k} (\boldsymbol{A}_{[q]},\boldsymbol{B}_{[p]}) = k \oint_M \boldsymbol{B}_{[p]} \star \boldsymbol{A}_{[q]} \, .
	\end{equation}
	
	\vspace{0.5cm}
	
	We can now state the following important consequence of the construction.
	
	\begin{property}
		We can associate with any $p$-cycle $z$ of $M$ a quantum $p$-current $\boldsymbol{Z}_{[p]}$ such that
		\begin{equation}
			\oint_z \boldsymbol{A}^{[p]} \stackrel{\mathbb{R}/\mathbb{Z}}{=} (\boldsymbol{Z}_{[p]} \star \boldsymbol{A}^{[p]}) \eval{\boldsymbol{1}} \, .
		\end{equation}
	\end{property}
	
	\noindent How to explicitly construct a gauge $p$-current representing $\boldsymbol{Z}_{[p]}$ is explained in detail in \cite{BGST05}.
	
	Let us end this subsection with the following remark. The Pontryagin dual of $H^D_p(M)$ can be canonically identified with $D_H^p(M)$, the $\mathbb{Z}$-module of dual quantum $q$-fields of $M$. Moreover, $H^D_p(M)^\star \simeq H_D^p(M)$.
	
	\section{Conclusion}
	
	In this first article we focused on studying the various mathematical objects which will be used in the construction of the generalized $\U1$ BF theory. The main points were to extend the notion of quantum fields, which are the elements of our original quantum configuration space, in order to obtain singular fields. The reason for this extension is twofold. Firstly, we know that singular fields are necessary in any quantum field theory, whether it is considered in the constructive approach or in the more formal approach of Feynman. Secondly, thanks to this, it is possible to see Wilson loops as fields. More precisely, to any $p$-cycle which generates a Wilson loop observable, we can associate a quantum current. This will be very convenient when dealing with the expectation values of such observables in the generalized $\U1$ BF theory. Let us point out that what we have done here, having in mind the generalized $\U1$ BF theory, can be done in the context of the generalized $\U1$ Chern-Simons theory. However, while we can consider the BF case in any dimension and with fields of any degrees, as long as these degrees are quantum complement with each other, in the CS case we must consider $(4l+3)$-dimensional closed manifolds as well as quantum $(2l+1)$-currents only. This was discussed in detail in \cite{GPT13} from the Deligne-Beilinson point of view but also from the point of view of standard quantum field theory in $\mathbb{R}^{4l+3}$.
	
	Latter in this series of articles, we will specifically deal with the partition function and expectation values of observables. A generalized TV construction will also be presented and the induced manifold invariant will be compared with the partition function. Finally, the Drinfeld construction will be exhibited and, as in the $3$-dimensional case, a discrete BF theory will naturally appear.
	
	\paragraph{Acknowledgements}  P.M. thanks the University of Notre Dame, Indiana, USA, at which he started this work, and his support from the NSF grant 1947155 and the JTF grant 61521. He also acknowledges partial support of NSF grant 200020-192080 of the Simons Collaboration on Global Categorical Symmetries, and of the COST Action 21109 - Cartan geometry, Lie, Integrable Systems, quantum group Theories for Applications (CaLISTA). This research was partly supported by the NCCR SwissMAP, funded by the Swiss National Science Foundation.
	
	%% The Appendices part is started with the command \appendix;
	%% appendix sections are then done as normal sections
	
	%% \appendix
	
	%% \section{}
	%% \label{}
	
	%% If you have bibdatabase file and want bibtex to generate the
	%% bibitems, please use
	%%
	%%  \bibliographystyle{elsarticle-num} 
	%%  \bibliography{<your bibdatabase>}

\begin{thebibliography}{T19}
		
		\bibitem{GT2014}
		E. Guadagnini and F. Thuillier, {\it Path-integral invariants in abelian Chern-Simons theory}, Nuclear Physics B {\bf 882}, pp. 450--484 (2014),
		\href{https://doi.org/10.1016/j.nuclphysb.2014.03.009}{DOI:10.1016/j.nuclphysb.2014.03.009},
		\href{https://arxiv.org/abs/1402.3140}{arXiv:1402.3140}.
		
		\bibitem{MT1}
		Ph. Mathieu and F. Thuillier, {\it Abelian BF theory and Turaev-Viro invariant}, Journal of Mathematical Physics {\bf 57}, 022306 (2016),
		\href{https://doi.org/10.1063/1.4942046}{DOI:10.1063/1.4942046},
		\href{https://arxiv.org/abs/1509.04236}{arXiv:1509.04236}.
		
		\bibitem{MT2}
		Ph. Mathieu and F. Thuillier, {\it A reciprocity formula from abelian BF and Turaev-Viro theories}, published in ``Eulogy for Raymond'', Nuclear Physics B {\bf 912}, pp. 327--353 (2016),
		\href{https://doi.org/10.1016/j.nuclphysb.2016.05.007}{DOI:10.1016/j.nuclphysb.2016.05.007},
		\href{https://arxiv.org/abs/1604.05761}{arXiv:1604.05761}.		
		
		\bibitem{TV13}
		V. Turaev and A. Virelizier {\it On two approaches to $3$-dimensional TQFTs} (2013),
		\href{https://arxiv.org/abs/1006.3501}{arXiv:1006.3501}.
		
		\bibitem{MT3}
		Ph. Mathieu and F. Thuillier, {\it Abelian Turaev-Virelizier theorem and $\mathrm{U}\!\left(1\right)$ BF surgery formulas}, Journal of Mathematical Physics {\bf 58}, 102301 (2017),
		\href{https://doi.org/10.1063/1.4986850}{DOI:10.1063/1.4986850},
		\href{https://arxiv.org/abs/1706.01845}{arXiv:1706.01845}.		
		
		\bibitem{BGST05}
		M. Bauer, G. Girardi, R. Stora and F. Thuillier, {\it A class of topological actions}, Journal of High Energy Physics {\bf 2005}, 8 (2005),
		\href{https://doi.org/10.1088/1126-6708/2005/08/027}{DOI:10.1088/1126-6708/2005/08/027},
		\href{https://arxiv.org/abs/hep-th/0406221}{arXiv:hep-th/0406221}.		
		
		\bibitem{HLZ}
		R. Harvey, B. Lawson and J. Zweck, {\it The de Rham--Federer theory of differential characters and character duality}, American Journal of Mathematics {\bf 125} (2003), pp. 791--847,
		\href{https://doi.org/10.1353/ajm.2003.0025}{DOI:10.1353/ajm.2003.0025},
		\href{https://arxiv.org/abs/math/0512251}{arXiv:math/0512251}.		
		
		\bibitem{BT82}
		R. Bott and L.W. Tu, {\it Differential Forms in Algebraic Topology}, Springer (1982),
		\href{https://doi.org/10.1007/978-1-4757-3951-0}{DOI:10.1007/978-1-4757-3951-0}.
		
		\bibitem{dR55}
		G. de Rham, {\it Vari\'et\'es Diff\'erentiables, Formes, Courants, Formes Harmoniques}, Hermann (1955).
		
		\bibitem{We52}
		A. Weil, {\it Sur les théorèmes de de Rham}, Commentarii Mathematici Helvetici {\bf 26}, pp. 119--145 (1952),
		\href{https://doi.org/10.1007/BF02564296}{DOI:10.1007/BF02564296}.
		
		\bibitem{Br93}
		J.L. Brylinski, {\it Loop spaces, characteristic classes and geometric quantization}, Birkh\"auser (1993),
		\href{https://doi.org/10.1007/978-0-8176-4731-5}{DOI:10.1007/978-0-8176-4731-5}.
		
		\bibitem{CS73}
		J. Cheeger and J. Simons, {\it Differential characters and geometric invariants}, in Geometry and Topology, Lecture Notes in Mathematics {\bf 1167}, Springer (1985),
		\href{https://doi.org/10.1007/BFb0075216}{DOI:10.1007/BFb0075216}.
		
		\bibitem{WG65}
		A. S. Wightman and L. Garding, {\it Fields as operator-valued distributions in relativistic quantum theory}, Arkiv f\"or Fysik {\bf 28} (1965).
		
		\bibitem{CM44}
		R. H. Cameron and W. T. Martin, {\it Transformations of Wiener Integrals under Translations}, Annals of Mathematics {\bf 45}, 2 (1944), pp. 386–396,
		\href{https://doi.org/10.2307/1969276}{DOI:10.2307/1969276}.
		
		\bibitem{GS16}
		I. M. Gelfand and G. E. Shilov, {\it Generalized Functions, Volume 4: Applications of Harmonic Analysis}, Academic Press (1964).
		
		\bibitem{G67}
		L. Gross, {\it Abstract Wiener spaces}, Proceedings of the Fifth Berkeley Symposium on Mathematical Statistics and Probability (Berkeley, California, 1965/66), Vol. II: Contributions to Probability Theory, Part 1. Berkeley, California: University of California Press, pp. 31–42.
		
		\bibitem{W51}
		A. Weil, {\it Sur les théorèmes de de Rham},  Commentarii mathematici Helvetici, 1952.
		
		\bibitem{GPT13}
		L. Gallot, E. Pilon and F. Thuillier, {\it Higher dimensional abelian Chern-Simons theories and their link invariants}, Journal of Mathematical Physic {\bf 54}, 022305 (2013),
		\href{https://doi.org/10.1063/1.4791677}{DOI:10.1063/1.4791677},
		\href{https://arxiv.org/abs/1207.1270}{arXiv:1207.1270}.		
		
		
		
		%%%%%%%%%%%%%%%%%%%%%
		
	\end{thebibliography}
	
	%% else use the following coding to input the bibitems directly in the
	%% TeX file.
	
	\vfill\eject

\end{document}